\newcommand\reallywidehat[1]{%
	\savestack{\tmpbox}{\stretchto{%
			\scaleto{%
				\scalerel*[\widthof{\ensuremath{#1}}]{\kern-.6pt\bigwedge\kern-.6pt}%
				{\rule[-\textheight/2]{1ex}{\textheight}}
			}{\textheight}%
		}{0.5ex}}%
	\stackon[1pt]{#1}{\tmpbox}%
}
\DeclareMathAlphabet{\mathbfsl}{OT1}{ppl}{b}{it} 
\newcommand{\edge}[3]{#1\overset{#3}{\longrightarrow}#2}
 \DeclareRobustCommand{\nsbinom}{\genfrac[]\z@{}}
\newcommand{\cB}{{\cal B}}
\newcommand{\cC}{{\cal C}}
\newcommand{\cP}{{\cal P}}
\newcommand{\cR}{{\cal R}}
\newcommand{\cS}{{\mathcal{S}}}
\newcommand{\cX}{{\cal X}}
\newcommand{\cY}{{\cal Y}}
\newcommand*{\NN}{\mathbb{N}}
\newcommand{\linadd}{\kern1pt\mbox{\small$\boxplus$}\kern1pt}
\newcommand{\Cref}[1]{Co\-ro\-lla\-ry\,\ref{#1}}
\newtheorem{theorem}{Theorem}[section]
\newtheorem{lemma}[theorem]{Lemma}
\newtheorem{corollary}[theorem]{Corollary}
\newtheorem{example}{Example}
\title{Maximum Length RLL Sequences in de Bruijn Graph}
\date{\today}
\author{\textbf{Yeow Meng Chee$^\text{x}$}, \textbf{Tuvi Etzion$^{\text{x}*}$}, \textbf{Tien Long Nguyen$^\text{x}$},\\
\textbf{Duy Hoang Ta$^\text{x}$}, \textbf{Vinh Duc Tran$^+$}, \textbf{Van Khu Vu$^\text{x}$}\\
{\small $^\text{x}$Dept. of Industrial Systems Engineering and Management, National University of Singapore}\\
{\small $^*$Computer Science Faculty, Technion, Israel Institute of Technology, Haifa 3200003, Israel}\\
{\small $^+$Hanoi University of Science and Technology, Vietnam}\\
{\small {\it ymchee@nus.edu.sg}, {\it etzion@cs.technion.ac.il}, {\it longnt23@nus.edu.sg}},\\
{\small {\it hoang27@nus.edu.sg}, {\it ductv@soict.hust.edu.vn}, {\it isevvk@nus.edu.sg}} 
\thanks{Parts of this work have been presented at the \emph{IEEE International Symposium on Information Theory}, Espoo, Finland, June-July 2022 and the \emph{IEEE International Symposium on Information Theory}, Taipei, Taiwan, June 2023.
The research of T. Etzion was supported in part by the Israeli Science Foundation grant no. 222/19.}
}
\begin{document}

\maketitle

\begin{abstract}
Free-space quantum key distribution requires to synchronize the transmitted and received signals.
A timing and synchronization system for this purpose based on a de Bruijn sequence has been proposed and studied recently for a channel
associated with quantum communication that requires reliable synchronization. To avoid a long period of no-pulse in such
a system on-off pulses are used to simulate a \emph{zero} and on-on pulses are used to simulate a \emph{one}.
However, these sequences have high redundancy and low rate. To reduce the redundancy and increase the rate,
run-length limited sequences in the de Bruijn graph are proposed for the same purpose.
The maximum length of such sequences in the de Bruijn graph is studied and an efficient algorithm to construct a large set of these sequences
is presented. Based on known algorithms and enumeration methods,
maximum length sequence for which the position of each window can be computed efficiently is presented
and an enumeration on the number of such sequences is given.
\end{abstract}

\vspace{0.5cm}

\noindent {\bf Index Terms:} de Bruijn graph, necklaces, quantum key distribution, rate, RLL sequences.

\vspace{0.5cm}


\newpage
\section{Introduction}
\label{sec:intro}

Quantum key distribution are important to prevent quantum computer-based attacks on public key cryptosystems~\cite{ZQLR21}.
In a free-space quantum key distribution, one of the important challenges is
to synchronize the transmitted and received signals accurately.
There are many efforts in designing an efficient and reliable timing and synchronization systems, e.g.~\cite{DCS21,KSSNDBB}.
Unfortunately, the suggested systems suffer from a few disadvantages such as slow transmission if for example a clock is used at either
end of the transmitter and the receiver. To overcome such problems, in~\cite{ZQLR21}
a de Bruijn sequence-based timing and synchronization system is introduced using a beacon with an on-off model. In this model, a sequence
of beacon pulses is used to represent a binary de Bruijn sequence. Hence, once a sub-string of beacon pulses is received, its position is also
determined uniquely. Furthermore, to consider the timing jitter performance, a long period of no-pulses is forbidden. Assume
one pulse slot is used to represent a binary bit. If on-pulse is \emph{one} and off-pulse is \emph{zero}, then
a long run of \emph{zeros} in the sequence, which is a
long period of no-pulses, would impact the timing jitter. In~\cite{ZQLR21}, two pulse slots are suggested
to represent a single bit, on-on (i.e., `11') is a~\emph{one} and on-off (i.e., `10') is a~\emph{zero} so that two consecutive
no-pulses are avoided. However, this scheme requires $2N$ pulse slots to represent a sequence
of length~$N$ in the order $n$ de Bruijn graph and it is required to receive a sub-sequence of $2n$ pulse slots to locate its position,
i.e., if the sequence is long of about $2N=2^{n+1}$ bits (representing a sequence of length $N$ in the de Bruijn graph),
about $2 \log N$ pulse slots are required to locate its position instead of $\log N$ pulse slots
if any sequence of $N$-pulses was permitted (all logarithms in this paper are in base 2).
A~second disadvantage in the proposed scheme is that the sequences of consecutive \emph{zeros} were constrained to length one, while
in reality, a few consecutive no-pulses are permitted subject to a constraint that their length will not be larger than a certain threshold $s$.
Therefore, there is a target to use less redundant pulse slots to achieve both goals, to synchronize accurately, and to avoid long periods of
no-pulses. For this purpose run-length limited (RLL) sequences in the de Bruijn graph are proposed~\cite{CDNTV22,CNTV23}, where one pulse is
represented by one binary bit. An on-pulse is represented by a \emph{one} and an off-pulse is represented by a \emph{zero}.
This scheme is more general and more efficient with lower redundancy and higher rate than the ones in the previous work.
The scheme combines two concepts, RLL sequences and sequences in the de Bruijn graph which will be defined now.
Such sequences are important from engineering point of view. They are formed by using combinatorial
properties of the de Bruijn graph. Moreover, the scheme itself is simple and can be applied easily.

{\bf \emph{RLL sequences}} are binary sequences in which there is an upper bound on the number of consecutive \emph{zeros} in a sequence.
An {\bf $(n,s)$\emph{-word}} is a binary word of length $n$ in which the longest run of consecutive \emph{zeros} is of
length at most $s$. An {\bf $(n,s)$-\emph{sequence}} is a sequence whose windows of length $n$ are distinct $(n,s)$-words.
The family of sequences in which there are no runs of more than $s$ consecutive \emph{zeros}
was extensively studied due to many applications that require such sequences~\cite{Imm90,Imm04}.

The de Bruijn graph of order $n$, $G_n$, was defined first by Nicolaas Govert de Bruijn~\cite{deB46} and in parallel
by Good~\cite{Goo46}. The graph is a directed graph with $2^n$ vertices which are represented by the set
of all binary words of length $n$.
The edges of $G_n$ are represented by the $2^{n+1}$ binary words of length $n+1$.
There is a directed edge $(x_0 x_1 ~ \cdots ~ x_{n-1} x_n)$ from the vertex $(x_0 x_1~\cdots ~ x_{n-1})$ to the vertex
$(x_1 ~ \cdots ~ x_{n-1} x_n)$. A {\bf \emph{span $n$ de Bruijn sequence}} is a cyclic binary sequence in which each binary $n$-tuple
is contained exactly once in a window of length $n$.
The de Bruijn graph and its sequences associated with the graph were studied extensively~\cite{Etz24,Gol17}.

A {\bf \emph{walk}} in the graph is a sequence of directed edges such that
the end-vertex of one edge is the start-vertex of the next edge. The length of a {\bf \emph{walk}} is the number of edges in the walk.
A {\bf \emph{tour}} is a walk in which the first vertex is also the last one.
Any binary sequence can be represented by a walk in $G_n$,
where any $n$ consecutive symbols represent a vertex and any $n+1$ consecutive symbols represent an edge.
A cyclic sequence can be represented by a tour. The consecutive
$n$ symbols and $n+1$ symbols represent a walk with its consecutive vertices and consecutive edges, respectively.
A {\bf \emph{path (cycle)}} in the graph is a walk (tour) with no repeated vertices.
In a cycle, each vertex can be considered as the first vertex of the cycle.
A {\bf \emph{trail} (circuit)} in the graph is a walk (tour) with no repeated edges.
A span~$n$ de Bruijn sequence can be represented by an Eulerian circuit
in $G_{n-1}$, i.e., a circuit which traverses each edge exactly once.
In a circuit, the first vertex is also the last one. It can be also represented by a Hamiltonian cycle in~$G_n$,
i.e., a cycle that visit each vertex of $G_n$ exactly once.
Similarly, any path of length $N$ in $G_n$ can be represented by an acyclic sequence of length $N+n-1$
with no repeated $n$-tuples. Each $n$-tuple is associated with a vertex.
This sequence can be represented also by a trail in $G_{n-1}$, where each $n$-tuple is associated with an edge.
This is demonstrated in Example~\ref{ex:seq_graph}, where also the distinction between cyclic and acyclic sequence
is demonstrated. In the rest of the paper if not mentioned, then the sequence is cyclic, but the results are given for both types
of sequences.

\begin{example}
\label{ex:seq_graph}
For $n=3$, the cyclic sequence $[00011101]$ is a span 3 de Bruijn sequence whose cycle is in $G_3$ is as follows
$$
(000) \rightarrow (001) \rightarrow (011) \rightarrow (111) \rightarrow (110) \rightarrow (101) \rightarrow (010) \rightarrow (100) \rightarrow (000).
$$
In $G_2$ this circuit is
$$
\edge{00}{00}{000} \edge{}{01}{001} \edge{}{11}{011} \edge{}{11}{111} \edge{}{10}{110} \edge{}{01}{101} \edge{}{10}{010} \edge{}{00}{100}
$$
As an acyclic sequence this sequence is written as $0001110100$, i.e., adding two bits (and in general $n-1$ bits) to the cyclic sequence.
For the following path in $G_3$
$$
(000) \rightarrow (001) \rightarrow (011) \rightarrow (111) \rightarrow (110)
$$
its sequence (acyclic) is $0001110$.
\end{example}

One structure that will be used in our exposition and is extensively studied in the literature associated with the
de Bruijn graph is a necklace. A {\bf \emph{necklace of order $n$}} is a cycle in $G_n$ whose length is a divisor of $n$.
If the cycle is of length $n$, then the necklace is of {\bf \emph{full-order}}. If the necklace is of length which is a divisor of $n$
that is smaller than $n$, then the necklace is {\bf \emph{degenerated}}. Each vertex in $G_n$ is represented by a word of length $n$
and in the necklace, we have all cyclic shifts of such a word. The necklaces of order $n$ are the equivalence classes
of the relation defined on words of length $n$, where two words are related if one is a cyclic shift of the other.
If the necklace has length $d$ which is a divisor of~$n$, then it contains $d$ words of length $n$, but
the necklace can be represented by a sequence of length~$d$.
In $G_{n-1}$ such a necklace of order $n$ is also a cycle of length $d$.
The necklace in $G_{n-1}$ is represented by the edges represented by words of length $n$.
The runs of \emph{zeros} in a necklace are considered to be a cyclic run since the necklace is
a cycle. Let {\bf \emph{$(n,s)$-necklace}} be a necklace of order $n$ that does not have a (cyclic) run of more than $s$ consecutive \emph{zeros}.
Such a necklace contains only $(n,s)$-words. There are cyclic and acyclic sequences. An {\bf \emph{acyclic}} sequence of length $k$
is written as $(s_0 s_1 ~ \cdots ~ s_{k-1})$ and if it represents words of length $n$, then it contains $k-n+1$ words and it is
associated with a walk of length $k-n+1$ in~$G_{n-1}$. A~{\bf \emph{cyclic}} sequence of length $k$ is written as
$[s_0 s_1 ~ \cdots ~ s_{k-1}]$ and if it represents words of length~$n$, then it contains $k$ words and it is associated with a tour
of length $k$ in~$G_{n-1}$. A~necklace by its definition is a cyclic sequence.

\begin{example}
Assume $n=6$ and consider the $(6,2)$-necklace with the vertex $(001011)$. This necklace is of full-order and it contains
the six words of length 6 from the cycle
$$
(001011) \rightarrow (010110) \rightarrow (101100) \rightarrow (011001) \rightarrow (110010) \rightarrow  (100101) \rightarrow (001011)
$$
in $G_6$. These six words are cyclic shifts of each other and their necklace is represented for example by $[001011]$.

In $G_5$ this necklace is represented by the cycle
$$
(00101) \rightarrow (01011) \rightarrow (10110) \rightarrow (01100) \rightarrow (11001) \rightarrow (10010) \rightarrow (00101)
$$
where the representation of the edges in $G_5$ is the same as the representation of
the vertices in $G_6$ for these two cycles in $G_5$ and $G_6$, respectively.

The degenerated necklace $[011]$ of order 6 contains the three vertices in $G_6$ which form the cycle of length 3 whose vertices
and edges are given by
$$
(011011) \rightarrow (110110) \rightarrow (101101) \rightarrow (011011) ~.
$$

In $G_5$ this cycle is represented by
$$
(01101) \rightarrow (11011) \rightarrow (10110) \rightarrow (01101)~.
$$

\hfill\quad $\blacksquare $
\end{example}

Now, let

\begin{itemize}
\item $h_{n,s}$ be the number of $(n,s)$-words,

\item $\ell_{n,s}$ be the number of words in all the $(n,s)$-necklaces,

\item $m_{n,s}$ be the maximum length of a cyclic $(n,s)$-sequence.
\end{itemize}

One of the main results of this contribution is to demonstrate that $m_{n,s} \leq \ell_{n,s}$. The existence of cyclic $(n,s)$-sequences
with length $\ell_{n,s}$ are known to exist from the literature with simple constructions~\cite{GaSa18,SWW16},
and hence we have that $m_{n,s} = \ell_{n,s}$.

In this work, a combination of sequences in the de Bruijn graph that have a run-length constraint on the number
of consecutive \emph{zeros}, is discussed. Another combination of sequences in the de Bruijn graph with local run-length
constraint was considered in~\cite{CEKMVVY22}. An upper bound on the maximum length of cyclic and acyclic $(n,s)$-sequences
is proved and sequences which attain this bound are constructed.
By their definition, if $s$ is small, these sequences can be used in free-space quantum key distribution with flexibility in the parameters.
This is a combinatorial concept motivated by an engineering application problem
and can be used in a very simple way for this application.

Let $G_n(s)$ be the subgraph of $G_n$ induced by all the vertices of $G_n$ whose representations are $(n,s)$-words.
Our goal is to find the maximum length path and maximum length cycle in $G_n(s)$.
The definition of $G_n(s)$ implies that the edges of this graph also do not have a run of \emph{zeros}
whose length is greater than $s$. Therefore, all the $(n,s)$-words are represented by the edges of $G_{n-1}(s)$.
This property will be important in $G_{n-1}(s)$. This implies that a maximum length $(n,s)$-sequence is a trail of
maximum length in $G_{n-1}(s)$ and also a path of maximum length in $G_n(s)$.
For the upper bound on the maximum length of such a sequence,
a trail in $G_{n-1}(s)$ (which is a circuit if the sequence is cyclic) will be considered. For the lower bound on the maximum
length of such a sequence and constructing many such sequences, paths in $G_n(s)$ (which is a cycle if the sequence is cyclic).
This is unique as usually it is not required to use different orders of the graph for similar tasks.

\begin{example}

The graphs $G_3$, $G_3(1)$ and $G_3(2)$ are depicted in Fig.~\ref{fig:DB_G3}.
It is readily verified that a maximum length circuit in $G_3(1)$ has length 7 and it is associated
with a $(4,1)$-sequence. A maximum length circuit in $G_3(2)$ has length 12 and it is associated
with a $(4,2)$-sequence. A maximum length cycle in $G_3(1)$ has length 4 and it is associated with
a $(3,1)$-sequence. A maximum length cycle in $G_3(2)$ has length 7 and it is associated with
a $(3,2)$-sequence.

\begin{figure}[ht]
\vspace{1.2cm}
\begin{picture}(80,80)(-95,60)
\includegraphics[width=10cm]{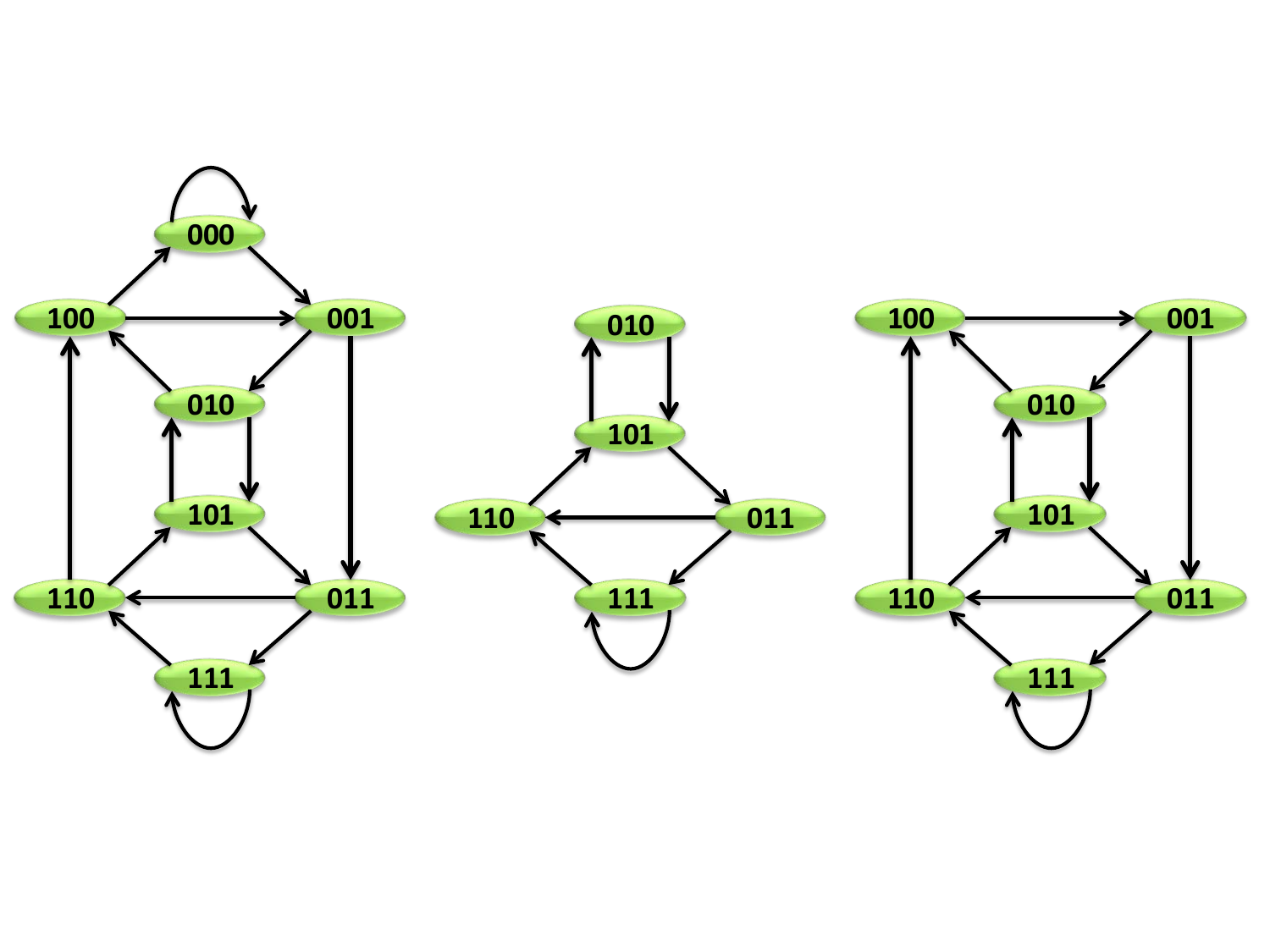}
\end{picture}
\vspace{0.6cm}
\caption{The de Bruijn graph $G_3$ on the left, $G_3(1)$ in the middle and $G_3(2)$ on the right.}
\label{fig:DB_G3}
\end{figure}

\hfill\quad $\blacksquare $
\end{example}

Note, that a cyclic
sequence of length $N$ with no repeated $n$-tuples has $N$ binary digits, $N$~vertices, and $N$ edges in $G_n$ and also in $G_n(s)$.
A related acyclic sequence with no repeated $n$-tuples has $N+n-1$ binary digits, $N$ vertices, and $N-1$ edges in $G_n$ and $G_n(s)$.
Henceforth, unless stated otherwise, all the $(n,s)$-sequences that will be considered are cyclic.

\begin{example}
\label{ex:not_by_rule}
The following cycle in $G_5 (2)$
$$
(00110) \rightarrow (01101) \rightarrow (11010) \rightarrow (10101) \rightarrow (01011) \rightarrow  (10111) \rightarrow (01111) \rightarrow (11111)
$$
$$
\rightarrow (11110) \rightarrow (11101) \rightarrow (11011) \rightarrow (10110) \rightarrow  (01100) \rightarrow (11001) \rightarrow (10010) \rightarrow (00101)
$$
$$
\rightarrow (01010) \rightarrow (10100) \rightarrow (01001) \rightarrow (10011) \rightarrow (00110)
$$
forms the $(5,2)$-sequence
$$
[00110101111101100101]
$$
of length 20. The sequence contains two words $(00110)$ and $(01100)$ which are two consecutive $(5,2)$-words not in a $(5,2)$-necklace
since their necklace contains the words $(00011)$ and $(11000)$, that are not $(5,2)$-words.

\hfill\quad $\blacksquare $
\end{example}

The maximum length $(5,2)$-sequence has length 21. Hence,
Example~\ref{ex:not_by_rule} indicates that there could be long (almost the same length as the maximum length or maybe of maximum length)
$(n,s)$-sequences which contain $(n,s)$-words that are not contained in
$(n,s)$-necklaces. The possibility of such maximum length $(n,s)$-sequences with $(n,s)$-words which are not contained
in $(n,s)$-necklaces cannot be ruled out by simple verification.
This implies that it is not straightforward to know the maximum length of an $(n,s)$-sequence.
Moreover, the length of a maximum length acyclic $(n,s)$-sequence is not obtained from the maximum length
$(n,s)$-sequence by adding $n-1$ bits (as done for a de Bruijn sequence), since more $(n,s)$-words can be added to the cyclic sequence.

Observing that RLL sequences can be applied efficiently for quantum key distribution is the first contribution of this paper from an
engineering point of view. This is the motivation for this paper.
The main contributions of this work can be summarized as follows:
\begin{enumerate}
\item A proof for an upper bound on the length of the constructed cyclic and acyclic sequences, i.e.,
it is proved that $m_{n,s} \leq \ell_{n,s}$. It was known before that there are
$(n,s)$-sequences of length $\ell_{n,s}$, but it was not known that these are the $(n,s)$-sequences of maximum length.
Our proof implies that $m_{n,s} = \ell_{n,s}$.
The proof also implies that a maximum length cyclic $(n,s)$-sequence
contains exactly all the words of all the $(n,s)$-necklaces. Related results are obtained for acyclic $(n,s)$-sequences.

\item An efficient construction for a large set of such sequences using a tailored designed storage that satisfies all the requirements of
$(n,s)$-sequences. Constructions for one $(n,s)$-sequence was known before, but these constructions cannot be adapted for
constructing a large set of $(n,s)$-sequences.
\end{enumerate}

The contributions of the paper are presented in Sections~\ref{sec:upper_max} and~\ref{sec:construct_max}.
Sections~\ref{sec:num_seq} and~\ref{sec:decode} are results which are either straightforward consequences
or were obtained in previous publications. These results are presented for completeness.
For simplicity we usually assume that $s < n-1$ since if $s \geq n$ then an span $n$ de Bruijn sequence is an $(n,s)$-sequence
and if $s=n-1$, then a shortened span $n$ de Bruijn sequence (one \emph{zero} is removed from the run of $n$ consecutive \emph{zeros})
is an $(n,s)$-sequence.

The rest of the paper is organized as follows. In Section~\ref{sec:upper_max} we consider an upper bound on the maximum length
of $(n,s)$-sequences which are cycles in $G_n$ (and also in $G_n(s)$).
We prove that this bound is equal to the number of words in the $(n,s)$-necklaces, i.e., $m_{n,s} \leq \ell_{n,s}$
and it can be attained only by the vertices of these necklaces.
The trail in $G_{n-1}(s)$ with the maximum length can have another $s$ edges (and $s$ vertices) from $G_{n-1}(s)$.
The detailed proof is based on the necklaces obtained by the edges of $G_{n-1} (s)$.
In Section~\ref{sec:construct_max} it will be proved that the upper bound which was derived in Section~\ref{sec:upper_max} can be attained
by many $(n,s)$-sequences which can be constructed efficiently.
This is proved by considering Hamiltonian cycles in a subgraph of $G_n (s)$ that contains only the vertices of the
$(n,s)$-necklaces. By adding $s$ vertices (and $s$ edges) of $G_{n-1}(s)$
from necklaces which contain a cyclic run with $s+1$ \emph{zeros} a longer acyclic $(n,s)$-sequence is obtained.
An efficient algorithm to construct these sequences will be presented. Although the type of algorithm which is presented is
not new, the choice of keys to form a very large set of such sequences is new.
In Section~\ref{sec:num_seq} enumeration of the number of $(n,s)$-necklaces is done
and a formula for the length of a maximum length $(n,s)$-sequence is derived.
In Section~\ref{sec:decode} a maximum length $(n,s)$-sequence from which the position of each $n$-tuple can be efficiently decoded
is constructed. The construction is an adaptation of a well-known method.
The section considers the related literature on these types of constructions.
Conclusion, several possible generalizations, and future research are discussed in Section~\ref{sec:conclude}.

\section{The Maximum Length of an $(n,s)$-Sequence}
\label{sec:upper_max}

In this section, we will present an upper bound on the length of $(n,s)$-sequences.
To make the non-trivial proof simpler, it is broken to a sequence
of claims which lead step-by-step to the main result.
Let $\cC$ be a circuit, of maximum length in~$G_{n-1}$, which does not contain an edge with $s+1$ consecutive \emph{zeros} in its representation.
This circuit is a cycle in $G_n$ with no vertex having $s+1$ consecutive \emph{zeros} in its representation and hence $\cC$
represents an $(n,s)$-sequence.
To find an upper bound on the length of $\cC$ we will try to remove the edges from $G_{n-1}(s)$ which are not contained in $\cC$.
These edges are associated with vertices whose in-degree is not equal to their out-degree.
The following lemmas are immediate consequences from the binary representation of the vertices, their in-edges, and out-edges.

\begin{lemma}
\label{lem:in-degree=1_s}
The vertices in $G_{n-1} (s)$ with in-degree one are those with the prefix $0^s1$.
\end{lemma}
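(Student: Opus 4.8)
The plan is to analyze directly the structure of in-edges at a vertex of $G_{n-1}(s)$ in terms of its binary representation. Let $v = (x_0 x_1 \cdots x_{n-2})$ be a vertex of $G_{n-1}(s)$, so $v$ is an $(n-1,s)$-word. In $G_{n-1}$, the in-neighbors of $v$ are the two words $(0 x_0 x_1 \cdots x_{n-3})$ and $(1 x_0 x_1 \cdots x_{n-3})$, and the corresponding in-edges of $v$ are represented by the length-$n$ words $(0 x_0 x_1 \cdots x_{n-2})$ and $(1 x_0 x_1 \cdots x_{n-2})$. The vertex $v$ has in-degree one in $G_{n-1}(s)$ precisely when exactly one of these two in-edges lies in $G_{n-1}(s)$, i.e., exactly one of these two length-$n$ words is an $(n,s)$-word (here I use the fact, noted in the paragraph before the lemma, that the edges of $G_{n-1}(s)$ are exactly the $(n,s)$-words, equivalently no edge carries a run of $s+1$ zeros).

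First I would observe that prepending a $1$ never creates a longer run of zeros: the word $(1 x_0 \cdots x_{n-2})$ has the same maximal zero-run as $v$ shifted by one position, so it is always an $(n,s)$-word whenever $v$ is an $(n-1,s)$-word. Hence the in-edge labeled with a leading $1$ is always present, and in-degree one is equivalent to the in-edge $(0 x_0 \cdots x_{n-2})$ being absent, i.e., $(0 x_0 x_1 \cdots x_{n-2})$ containing a run of $s+1$ consecutive zeros. Since $v$ itself has no run of more than $s$ zeros, the only way prepending a $0$ can produce a run of $s+1$ zeros is if that run starts at the new first coordinate; that is, $x_0 = x_1 = \cdots = x_{s-1} = 0$ and $x_s = 1$ (the entry $x_s$ must be a $1$, since if $x_s$ were also $0$ then $v$ would already contain the run $x_0\cdots x_s$ of $s+1$ zeros, contradicting $v \in G_{n-1}(s)$; note $s < n-1$ guarantees $x_s$ exists). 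This says exactly that $v$ has prefix $0^s 1$. Conversely, if $v$ has prefix $0^s1$, then $(0 x_0 \cdots x_{n-2})$ begins with $0^{s+1}$ and is not an $(n,s)$-word, so that in-edge is absent and $v$ has in-degree one.

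The argument is essentially a short case check, so there is no serious obstacle; the only point requiring a little care is to confirm that a run of $s+1$ zeros created by prepending a $0$ to an $(n-1,s)$-word must be anchored at the first coordinate (otherwise the same run, minus its first symbol, would already appear inside $v$), and to note the mild boundary assumption $s<n-1$ that is in force throughout so that the coordinate $x_s$ is well-defined. Combining the two directions gives the claimed characterization.
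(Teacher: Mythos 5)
Your proof is correct, and it is exactly the routine verification the paper treats as an "immediate consequence" of the binary representation of vertices and in-edges (the paper gives no explicit proof): the 1-prepended in-edge always survives, and the 0-prepended in-edge is missing precisely when the vertex starts with $0^s1$. Your care about the run being anchored at the first coordinate and the standing assumption $s<n-1$ is appropriate and matches the paper's conventions.
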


\begin{lemma}
\label{lem:out-degree=1_s}
The vertices in $G_{n-1} (s)$ with out-degree one are those with the suffix $10^s$.
\end{lemma}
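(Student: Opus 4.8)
The plan is to read off the out-neighbourhood of an arbitrary vertex of $G_{n-1}(s)$ directly from its binary label. Write such a vertex as $v = (x_1 x_2 \cdots x_{n-1})$, an $(n-1,s)$-word. In the full de Bruijn graph $G_{n-1}$ it has exactly the two out-edges leading to $u_0 := (x_2 \cdots x_{n-1}0)$ and $u_1 := (x_2 \cdots x_{n-1}1)$. Since $G_{n-1}(s)$ is the subgraph of $G_{n-1}$ \emph{induced} by the $(n-1,s)$-words, an out-edge of $v$ survives in $G_{n-1}(s)$ if and only if its head $u_b$ is again an $(n-1,s)$-word. So the whole statement reduces to deciding, for each $b \in \{0,1\}$, whether $u_b$ avoids a run of more than $s$ consecutive \emph{zeros}.

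First I would dispose of $u_1$. Every maximal run of \emph{zeros} in $u_1 = (x_2 \cdots x_{n-1}1)$ is contained in the factor $(x_2 \cdots x_{n-1})$ of $v$, hence has length at most $s$; thus $u_1$ is always an $(n-1,s)$-word and the corresponding out-edge is always present. In particular no vertex of $G_{n-1}(s)$ has out-degree $0$, and $v$ has out-degree $1$ if and only if $u_0$ is \emph{not} an $(n-1,s)$-word.

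Next I would analyze $u_0 = (x_2 \cdots x_{n-1}0)$. Let $t$ be the length of the maximal run of \emph{zeros} that is a suffix of $v$; since $v$ is an $(n-1,s)$-word, $t \le s$, and because $s < n-1$ (our standing assumption) this suffix run does not reach position $1$, so deleting $x_1$ leaves it intact. Hence the suffix run of \emph{zeros} of $u_0$ has length exactly $t+1$, while every other maximal run of \emph{zeros} of $u_0$ lies inside $(x_2 \cdots x_{n-1})$ and so has length at most $s$. Therefore $u_0$ is an $(n-1,s)$-word if and only if $t+1 \le s$. Combining the two cases, $v$ has out-degree $1$ exactly when $t = s$, i.e.\ when $v$ ends in a run of precisely $s$ \emph{zeros}; as $\abs{v} = n-1 > s$, such a run is immediately preceded by a $1$, so this is equivalent to $v$ having the suffix $10^s$, which proves the lemma.

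I do not expect a real obstacle: the argument is a direct inspection of labels. The only points that deserve a line of care are that $u_1$ is always legal (so that losing the out-edge to $u_0$ is the only way to drop below out-degree $2$), and the bookkeeping — valid because $s < n-1$ — that turns ``exactly $s$ trailing \emph{zeros}'' into ``suffix $10^s$''. Lemma~\ref{lem:in-degree=1_s} follows from the mirror-image argument applied to the two in-edges coming from $(0 x_1 \cdots x_{n-2})$ and $(1 x_1 \cdots x_{n-2})$, with the leading run of \emph{zeros} of $v$ playing the role of $t$.
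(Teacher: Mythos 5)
Your argument is correct and is exactly the ``immediate consequence from the binary representation'' that the paper has in mind when it states this lemma without proof: the out-neighbour ending in $1$ always stays in $G_{n-1}(s)$, and the one ending in $0$ is lost precisely when the vertex already ends in a maximal run of $s$ \emph{zeros}, i.e.\ has suffix $10^s$ (using $s<n-1$). The care you take with the induced-subgraph definition and the $s<n-1$ bookkeeping is appropriate and introduces no gap.
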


The following lemmas are simple observations from the definitions.

\begin{lemma}
All the words in a necklace of order $n$ with at least two disjoint runs of $s+1$ or more \emph{zeros} have at least one acyclic run
with $s+1$ consecutive \emph{zeros}.
\end{lemma}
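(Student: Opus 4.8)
The plan is to reduce the statement to one elementary observation: opening a cycle at a single point can split at most one of its maximal runs of \emph{zeros}. First I would fix notation. Let $N$ be a necklace of order $n$ of length $d\mid n$, and let $\sigma$ be the cyclic binary sequence of length $n$ whose rotations are exactly the words of $N$: if $N$ is of full-order then $\sigma$ is $N$ itself, and if $N$ is degenerated then $\sigma$ is the period of $N$ written $n/d$ times. The hypothesis that $N$ has at least two disjoint runs of $s+1$ or more \emph{zeros} means that $\sigma$ contains two distinct maximal runs of \emph{zeros}, say $R_1$ and $R_2$, with $R_1\cap R_2=\emptyset$ and each of length at least $s+1$. (Reading the hypothesis on the period instead of on $\sigma$ loses nothing: distinct maximal runs of the period stay distinct and disjoint after the $n/d$-fold repetition.)

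Next I would fix an arbitrary word $w$ of $N$. Reading $w$ amounts to choosing a \emph{cut}, i.e.\ one of the $n$ gaps between consecutive symbols of $\sigma$, and then listing the $n$ symbols of $\sigma$ that follow the cut. A maximal run of \emph{zeros} of $\sigma$ occurs in $w$ as one contiguous block of the same length, \emph{unless} the cut lies strictly in its interior, in which case it becomes a shorter suffix block together with a shorter prefix block of $w$. Since $R_1$ and $R_2$ are disjoint, the sets of interior gaps of $R_1$ and of $R_2$ are disjoint, so the single cut lies in the interior of at most one of $R_1,R_2$. Hence at least one of them occurs intact inside $w$, and since its length is at least $s+1$, the word $w$ has an acyclic run of at least $s+1$ consecutive \emph{zeros}. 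As $w$ was arbitrary, every word of $N$ has this property.

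I do not expect a genuine obstacle; once the ``one cut splits at most one maximal run'' observation is isolated the argument is immediate, and disjointness of the interiors is the only property used. The two points worth a sentence of care are: (i) the degenerate case $d<n$, where one must work with the length-$n$ sequence $\sigma$ so that the words of $N$ are literally its windows; and (ii) the weaker reading in which the two runs of $\ge s+1$ \emph{zeros} might lie inside a single maximal run $R$ (forcing $|R|\ge 2(s+1)$), where one observes that even if the cut splits $R$ into blocks of lengths $a$ and $b$ with $a+b\ge 2(s+1)$, the larger block still has length $\ge s+1$, so the conclusion persists.
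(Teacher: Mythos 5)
Your proof is correct, and it matches the paper's intent: the paper states this lemma without proof as a "simple observation from the definitions," and your cut-splits-at-most-one-maximal-run argument (with the careful handling of degenerated necklaces and of the weaker reading where both runs sit inside one long run) is exactly the straightforward reasoning being appealed to.
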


\begin{corollary}
\label{cor:s+1_run}
All the words of each necklace of order $n$ with at least two disjoint runs of $s+1$ \emph{zeros} are not contained in the circuit $\cC$.
In particular all the words of a degenerated necklace with at least one run of $s+1$ or more \emph{zeros} are not contained in $\cC$.
\end{corollary}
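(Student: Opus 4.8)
The plan is to obtain the corollary as a direct consequence of the preceding lemma together with the defining property of the circuit $\cC$. Recall that $\cC$ is, by construction, a circuit in $G_{n-1}$ none of whose edges carries $s+1$ consecutive \emph{zeros} in its length-$n$ representation; equivalently, every length-$n$ word that occurs as an edge of $\cC$ is an $(n,s)$-word.

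First I would prove the main assertion. Let $\cN$ be a necklace of order $n$ possessing at least two disjoint runs of $s+1$ or more \emph{zeros}, and let $w$ be an arbitrary word of length $n$ in $\cN$. By the preceding lemma, $w$ contains an acyclic run of $s+1$ consecutive \emph{zeros}, i.e., the factor $0^{s+1}$ occurs as a contiguous substring of $w$. Hence $w$, viewed as an edge of $G_{n-1}$, is exactly of the kind excluded from $\cC$, so $w$ is not an edge of $\cC$. Since $w$ was arbitrary, no word of $\cN$ is contained in $\cC$.

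Next I would settle the ``in particular'' clause about degenerated necklaces. Let $\cN$ be degenerated of order $n$: its length $d$ is a proper divisor of $n$, so $n = kd$ with $k \ge 2$, and the words of $\cN$ of length $n$ are the $k$-fold repetitions $u^k$ of a length-$d$ representative $u$. Assume $\cN$ has a cyclic run of $s+1$ or more \emph{zeros}. If $u = 0^d$, then every word of $\cN$ equals $0^n$, which contains $0^{s+1}$ since $s < n-1$, so it is not an edge of $\cC$. Otherwise $u$ contains a \emph{one}, each maximal cyclic run of \emph{zeros} of $u$ is flanked by \emph{ones}, and such a run reappears $k \ge 2$ times, pairwise disjointly, in the cyclic word $u^k$. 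Choosing the run of length at least $s+1$ shows that $\cN$ has at least two disjoint runs of $s+1$ or more \emph{zeros}, and the main assertion applies.

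I do not expect a genuine obstacle: the corollary is bookkeeping on top of the lemma and the construction of $\cC$. The only point requiring care is the reduction in the degenerated case, where the all-\emph{zeros} necklace must be isolated before one can conclude that a single cyclic run of $s+1$ \emph{zeros} in the short representative $u$ forces two disjoint such runs among the length-$n$ words.
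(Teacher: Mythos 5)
Your proof is correct and follows exactly the route the paper intends: the paper states this corollary without proof as an immediate consequence of the preceding lemma together with the defining property of $\cC$ (no edge of $\cC$ carries $s+1$ consecutive \emph{zeros}), which is precisely your main argument. Your extra care in the degenerated case --- isolating the all-\emph{zeros} necklace and observing that a period with a run of at least $s+1$ \emph{zeros} repeats $k\geq 2$ times disjointly in the length-$n$ words --- just makes explicit the reduction the paper leaves to the reader.
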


\begin{lemma}
\label{lem:more_2s+1}
If a necklace of order $n$ has a word with a run of at least $2s+1$ \emph{zeros}, then each word on this necklace has at least one
run with more than $s$ \emph{zeros}.
\end{lemma}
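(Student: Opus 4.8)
The statement asserts: if a necklace of order $n$ contains a word with a run of at least $2s+1$ consecutive \emph{zeros}, then \emph{every} word on that necklace has a run of more than $s$ \emph{zeros} (i.e.\ is not an $(n,s)$-word). The plan is to work directly with the cyclic structure of the necklace and argue by counting positions. Fix a necklace $\cN$ of order $n$ and suppose one of its words, viewed cyclically, contains a block $B$ of at least $2s+1$ consecutive \emph{zeros}. Since all words on $\cN$ are the cyclic shifts of this one word, it suffices to show that for \emph{every} cyclic shift, some length-$n$ window contains a run of more than $s$ \emph{zeros}; because the necklace length $d$ divides $n$, reasoning about the cyclic word of length $d$ and its rotations is equivalent to reasoning about the length-$n$ representatives.

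First I would set up coordinates on the cyclic word and locate the zero-block $B = 0^{t}$ with $t \ge 2s+1$, occupying cyclic positions, say, $\{i, i+1, \dots, i+t-1\}$. Any cyclic shift of the word is obtained by choosing a starting position $j$ and reading off $d$ consecutive symbols cyclically; the claim is that the resulting word of length $n$ (obtained by concatenating $n/d$ copies if $d<n$, or just the word itself if $d=n$) always exhibits a run of at least $s+1$ \emph{zeros}. The key observation is that a window of length $n \ge d \ge$ (length of $B$) $\ge 2s+1$ taken at \emph{any} starting point must, as it slides around the cycle, overlap the block $B$ in at least $\lceil t/2 \rceil \ge s+1$ consecutive positions — more precisely, for any starting offset, the window of length $n$ either contains all of $B$ (giving a run of $t \ge s+1$ zeros) or contains a suffix of $B$ followed by wrap-around to a prefix of $B$, and the longer of these two pieces has length at least $\lceil t/2 \rceil \ge s+1$. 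That is exactly the "cyclic" nature of runs on a necklace that the paper emphasizes.

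The cleanest way to carry this out is: (1) reduce to the cyclic word of length $d$ (using $d \mid n$, so every $(n,s)$-word on $\cN$ is $(d/\ldots)$-periodic and has the same set of cyclic runs); (2) note $d \ge 2s+1$ since the word contains a run of $\ge 2s+1$ zeros and a run on a cyclic word of length $d$ has length at most $d$; (3) for an arbitrary rotation, place the "cut point" at position $j$ and observe that the zero-block $B$ of length $t \ge 2s+1$ is split by the cut into a prefix piece of length $p$ and a suffix piece of length $t-p$ with $0 \le p \le t$; one of $p$, $t-p$ is $\ge \lceil t/2 \rceil \ge s+1$; whichever piece that is appears as an honest (non-wrapping) run of \emph{zeros} inside the length-$d$ word starting at $j$, hence inside the length-$n$ representative. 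Therefore every word on $\cN$ has a run of at least $s+1 > s$ \emph{zeros}, which is the assertion.

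The main obstacle — and it is minor — is handling the bookkeeping between the length-$d$ necklace representative and the length-$n$ words when $d < n$, making sure that "acyclic run inside the length-$d$ cyclic rotation" genuinely produces "acyclic run inside the length-$n$ word," and getting the floor/ceiling estimate $\lceil t/2 \rceil \ge s+1$ right from $t \ge 2s+1$ (for $t = 2s+1$, $\lceil t/2 \rceil = s+1$, so the bound is tight, confirming the threshold $2s+1$ is the correct one). Everything else is routine case analysis on where the cut falls relative to $B$.
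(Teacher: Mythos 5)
Your proof is correct and is exactly the intended argument: the paper states Lemma~\ref{lem:more_2s+1} without proof as one of the ``simple observations from the definitions,'' and that observation is precisely your cut-splitting argument --- every word on the necklace is obtained by cutting the cyclic word at one position, and the cut breaks the cyclic run of $t \ge 2s+1$ \emph{zeros} into at most two acyclic pieces whose lengths sum to $t$, so the longer piece has length at least $\lceil t/2 \rceil \ge s+1$. The only stylistic remark is that your detour through the length-$d$ period representative is harmless but unnecessary: one can cut the length-$n$ cyclic word directly, which works uniformly whether the necklace is of full order or degenerated.
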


\begin{corollary}
All the words of each necklace of order $n$ with a run of at least $2s+1$ \emph{zeros} are not contained in the circuit $\cC$.
\end{corollary}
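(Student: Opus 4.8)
The plan is to deduce the statement directly from Lemma~\ref{lem:more_2s+1} together with the structure of the circuit $\cC$. Recall that $\cC$ is, by construction, a circuit in $G_{n-1}(s)$, i.e.\ every edge of $\cC$ is an $(n,s)$-word, and equivalently every vertex of the corresponding cycle in $G_n$ is an $(n,s)$-word; so $\cC$ contains no word with a run of more than $s$ consecutive \emph{zeros}. Now take any necklace $\cN$ of order $n$ that has some word with a run of at least $2s+1$ \emph{zeros}. By Lemma~\ref{lem:more_2s+1}, \emph{every} word of $\cN$ has a run of more than $s$ \emph{zeros}; in particular none of the words of $\cN$ is an $(n,s)$-word.

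First I would fix notation: let $w$ be a word of $\cN$, so $w$ is one of the $d$ cyclic shifts making up $\cN$ (where $d \mid n$). I would then simply observe that if $w$ were a word appearing in $\cC$ — that is, the label of one of the edges of the circuit $\cC$ in $G_{n-1}(s)$, equivalently one of the vertices of the cycle in $G_n$ — then $w$ would have to be an $(n,s)$-word, since $G_{n-1}(s)$ and $G_n(s)$ contain only $(n,s)$-edges and $(n,s)$-vertices respectively. This contradicts Lemma~\ref{lem:more_2s+1}, which forces $w$ to contain a run of more than $s$ \emph{zeros}. Hence no word of $\cN$ lies on $\cC$, which is exactly the assertion of the corollary. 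I would phrase the one-line proof as: ``By Lemma~\ref{lem:more_2s+1} every word of such a necklace has a run of more than $s$ \emph{zeros}, hence is not an $(n,s)$-word, and therefore is not a vertex of the cycle $\cC$ in $G_n$ (equivalently, not an edge of $\cC$ in $G_{n-1}(s)$).''

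There is essentially no obstacle here: the corollary is a one-step consequence of the preceding lemma, strictly parallel to the way Corollary~\ref{cor:s+1_run} is deduced from the lemma immediately preceding it. The only thing to be careful about is the bookkeeping between the two graph representations — a word of length $n$ is a \emph{vertex} of $G_n$ but an \emph{edge} of $G_{n-1}$ — and making explicit that ``contained in $\cC$'' refers to this word appearing as (the label of) an edge of the circuit $\cC \subseteq G_{n-1}(s)$. Since $\cC$ was defined to avoid exactly the words with $s+1$ consecutive \emph{zeros}, a fortiori it avoids the words with $\ge 2s+1$ consecutive \emph{zeros} and, by the lemma, all of their necklace-mates as well. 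No nontrivial calculation is required.
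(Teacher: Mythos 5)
Your proposal is correct and follows exactly the argument the paper intends: by Lemma~\ref{lem:more_2s+1} every word of such a necklace has a run of more than $s$ \emph{zeros}, so it is not an $(n,s)$-word and hence cannot appear as an edge of the circuit $\cC$ in $G_{n-1}(s)$ (equivalently, a vertex of the corresponding cycle in $G_n$), which is precisely why the paper states this as an immediate corollary without further proof. Your care in distinguishing the edge/vertex representations in $G_{n-1}$ versus $G_n$ is a fine point but does not change the substance.
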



Henceforth, a string which starts and ends with a \emph{one} and has no run of more than $s$~\emph{zeros}
will be called an {\bf \emph{$s$-ones string}}.

\begin{lemma}
\label{lem:s+kless}
In $G_{n-1}$ a necklace of order $n$ that contains a unique run with more than $s$~\emph{zeros} and the length of this
run is $s+k$, $1 \leq k \leq s$, contains $n-s+k-1$ edges with a run of more than $s$ \emph{zeros}.
\end{lemma}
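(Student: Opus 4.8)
Let $\cN$ be a necklace of order $n$ (so a cyclic word of length $n$, read cyclically) with exactly one cyclic run of \emph{zeros} of length greater than $s$, and suppose that run has length $s+k$ with $1 \le k \le s$. Every other cyclic run of \emph{zeros} in $\cN$ has length at most $s$. The words of length $n$ on $\cN$ are the $n$ cyclic shifts of $\cN$; in $G_{n-1}$ these same words label the edges of the cycle that $\cN$ traces out. An edge ``has a run of more than $s$ \emph{zeros}'' means its $n$-bit label, read as an \emph{acyclic} string, contains a run of at least $s+1$ consecutive \emph{zeros}. So the plan is simply to count, among the $n$ acyclic length-$n$ windows obtained by sliding around the cyclic word $\cN$, how many contain an acyclic run of $\ge s+1$ \emph{zeros}, and show this number equals $n-s+k-1$.

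**Main counting step.** Fix coordinates so that the long run occupies positions $0,1,\dots,s+k-1$ of the cyclic word, i.e. $\cN = 0^{s+k} w$ where $w$ is a string of length $n-s-k$ that begins and ends with a \emph{one} (it ends with a \emph{one} because the run is maximal on the left, and begins with a \emph{one} because the run is maximal on the right, going cyclically), and moreover $w$ has no internal run of more than $s$ \emph{zeros} — in our earlier terminology $w$ is an $s$-ones string (or is forced to be so; the $k \le s$ hypothesis will matter here to guarantee $w$ is nonempty and genuinely has this form, since $n - s - k \ge n - 2s \ge 1$ under the standing assumption $s < n-1$). The key observation is: a length-$n$ acyclic window of the cyclic word contains an acyclic run of $\ge s+1$ \emph{zeros} if and only if that window contains at least $s+1$ of the $s+k$ consecutive positions of the long run as a \emph{contiguous} block — equivalently, the window overlaps the long run in a contiguous sub-block of length $\ge s+1$. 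I would argue that because $w$ is an $s$-ones string, no run of $\ge s+1$ \emph{zeros} can be manufactured anywhere except from (a sub-block of) the long run; a window straddling the boundary picks up a prefix $0^j$ or a suffix $0^j$ of the long run, and a run of that length survives only if $j \ge s+1$.

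**Finishing the count.** Now enumerate the starting positions of the $n$ windows. A window starting at position $p$ for $p = 0,1,\dots,s+k-1$ contains the block $0^{s+k-p}$ at its front (followed by $w$, then wrapping); this front block has length $\ge s+1$ exactly when $p \le k-1$, giving $k$ windows. A window starting inside $w$, say at the position $s+k+i$ for $i = 0,1,\dots,n-s-k-1$, wraps around and picks up a suffix $0^{?}$ of the long run at its tail end of length equal to $\min(s+k,\,\text{amount of the run reached})$; working this out, the tail block of \emph{zeros} from the long run has length $\ge s+1$ precisely for the last $n-2s-1$ of these windows (those that reach far enough around), and one checks $k + (n-2s-1) = n - 2s + k - 1$. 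Hmm — this is off by $s$ from the target $n-s+k-1$, so the correct accounting must also count windows that start in the tail of $w$ and carry a \emph{prefix} run of \emph{zeros} into the long run from the right; combining both boundary effects carefully is where the bookkeeping lives. The cleanest way is: a window ``sees'' an acyclic $\ge s+1$ run iff its span (an arc of $n$ consecutive cyclic positions) contains an arc of $\ge s+1$ consecutive positions all lying in the long run; since the long run is an arc of length $s+k$, such windows are exactly those whose length-$n$ arc contains one of the $(s+k)-(s+1)+1 = k$ distinct length-$(s+1)$ sub-arcs of the long run, and an inclusion–exclusion / direct count over the $n$ cyclic window-positions of ``contains a fixed length-$(s+1)$ arc'' (each such event holds for $n-(s+1)+1 = n-s$ consecutive windows, with overlaps among the $k$ events forming a consecutive block) yields $n - s + k - 1$.

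**The main obstacle.** The genuinely delicate point is not the arithmetic but justifying that \emph{no spurious} run of $> s$ \emph{zeros} appears: one must use that $\cN$ has a \emph{unique} run exceeding $s$, so that $w$ is an $s$-ones string, and then verify that any acyclic window's run of \emph{zeros} of length $> s$ must be a contiguous chunk of the long run (possibly truncated at a window boundary). This requires checking the boundary windows that contain a prefix and/or suffix piece of the long run together with parts of $w$ — ruling out, for instance, that $0^j$ (a truncated prefix of the long run with $j \le s$) concatenated with a leading run of \emph{zeros} of $w$ could combine; it cannot, precisely because $w$ \emph{begins with a one}. Pinning down the edge cases $k=s$ and small $n-2s$ (ensuring $w \ne \varnothing$) via the standing hypothesis $s < n-1$ is the other place to be careful.
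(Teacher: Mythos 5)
Your final count is correct and is essentially the paper's argument viewed from the complementary side: the paper lists the ``good'' rotations explicitly as $0^i X 0^{s+k-i}$ with $k \le i \le s$ (there are $s-k+1$ of them, since a rotation avoids a long run exactly when its cut splits the run $0^{s+k}$ into two pieces each of length at most $s$) and subtracts from $n$, while you count the ``bad'' rotations directly as the union of the $k$ cut-position arcs of length $n-s$, which indeed forms one block of length $n-s+k-1$; both hinge on the same key fact you isolate, that uniqueness of the long run (so that $X$/$w$ is an $s$-ones string bounded by \emph{ones}) prevents any spurious run of more than $s$ \emph{zeros}. Two slips in your write-up are worth noting, though neither affects the final argument: your abandoned first count and its diagnosis are both wrong --- every one of the $n-s-k$ windows whose cut lies inside $w$ contains the entire block $0^{s+k}$ and is therefore bad (not just the ``last $n-2s-1$'' of them, and no extra ``prefix-run from the right'' windows need to be added), which is exactly where the missing $s$ hides; and the chain $n-s-k \ge n-2s \ge 1$ does not follow from $s<n-1$ (take $n=10$, $s=8$) --- the correct reason $w$ is nonempty is simply that the run of length $s+k$ is a maximal cyclic run, hence bounded by \emph{ones}, forcing $s+k \le n-1$.
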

\begin{proof}
Consider a necklace of order $n$ that contains a unique run with $s+k$ \emph{zeros}, $1 \leq k \leq s$.
The edges in the necklace that do not have a run with more than $s$ \emph{zeros}
are of the form $0^i X0^{s+k-i}$, $k \leq i \leq s$, where $X$ is an $s$-ones string of length $n -s - k$.
There are $s-k+1$ such edges and a total of $n$ edges in the necklace
and hence the total number of edges with a run of at least $s+1$ \emph{zeros} is $n-(s-k+1)=n-s+k-1$.
\end{proof}

\begin{lemma}
\label{lem:pathLs-i}
Any edge in $G_{n-1}(s)$ of the form $e=0^s X 0^k$, where $1 \leq k \leq s$, and $X$ is an $s$-ones string of length $n-s-k$, yields
a path, which starts at the vertex $v=(0^s X 0^{k-1})$, whose length is at least $s-k+1$, its edges do not have a run of more
than $s$ \emph{zeros}, and these edges are not contained in $\cC$.
\end{lemma}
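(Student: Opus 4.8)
The plan is to exhibit the path explicitly, verify that it is a path all of whose edges avoid a run of more than $s$ \emph{zeros}, and then invoke the maximality of $\cC$ to keep those edges off $\cC$. I would set $e_0 := e = 0^sX0^k$ and, for $0 \le i \le s-k$, put $e_i := 0^{s-i}X0^{k+i}$. Since $X$ is an $s$-ones string of length $n-s-k$, each $e_i$ has length $(s-i)+(n-s-k)+(k+i) = n$ and so is an edge of $G_{n-1}$, with tail $v_i := (0^{s-i}X0^{k+i-1})$ and head $(0^{s-i-1}X0^{k+i}) = v_{i+1}$; in particular $e_0$ starts at $v = v_0$, giving the walk $v_0 \to v_1 \to \cdots \to v_{s-k+1}$ whose edges are $e_0,e_1,\ldots,e_{s-k}$ and whose length is $s-k+1$. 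Because $X$ begins and ends with a \emph{one} and has no run of more than $s$ \emph{zeros}, the word $e_i$ has leading run $s-i \le s$, trailing run $k+i \le s$ (using $i \le s-k$), and internal runs of length at most $s$, so $e_i$ is an edge of $G_{n-1}(s)$. Finally $v_j$ begins with exactly $s-j$ \emph{zeros} followed by the leading \emph{one} of $X$, and $s,s-1,\ldots,k-1$ are pairwise distinct, so the vertices $v_0,\ldots,v_{s-k+1}$ are distinct and the walk is a path.

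To see that no $e_i$ lies on $\cC$ I would argue from the maximality of $\cC$. By Lemmas~\ref{lem:in-degree=1_s} and~\ref{lem:out-degree=1_s}, in $G_{n-1}(s)$ the endpoint $v_0=(0^sX0^{k-1})$ has in-degree $1$ and out-degree $2$ (its missing in-edge $0^{s+1}X0^{k-1}$ carries $s+1$ \emph{zeros}), the endpoint $v_{s-k+1}=(0^{k-1}X0^s)$ has in-degree $2$ and out-degree $1$ (its missing out-edge $0^{k-1}X0^{s+1}$ carries $s+1$ \emph{zeros}), and the interior vertices $v_1,\ldots,v_{s-k}$ are balanced. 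Since $\cC$ is a longest circuit in $G_{n-1}(s)$, the collection $F$ of edges of $G_{n-1}(s)$ that it omits must be a smallest edge set whose deletion equalises in-degree and out-degree at every vertex (and leaves the rest connected); such an $F$ decomposes into directed paths carrying the out-surplus of vertices like $v_0$ into the in-surplus of vertices like $v_{s-k+1}$. One then checks that any directed path in $G_{n-1}(s)$ from a vertex with prefix $0^s1$ to a vertex with suffix $10^s$ that meets $v_0$ or $v_{s-k+1}$ has at least $s-k+1$ edges — each edge removes one leading \emph{zero} — and that the unique path of that minimal length running from $v_0$ to $v_{s-k+1}$ is exactly the path built above; routing the surpluses in any other way (pairing $v_0$ or $v_{s-k+1}$ with different surplus vertices, or joining them by a longer path) makes $|F|$ strictly larger, which is impossible for a longest $\cC$. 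Hence $e_0,\ldots,e_{s-k}\in F$, i.e. none of them lies on $\cC$.

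The step I expect to be the main obstacle is precisely this last one: showing that an optimal ``repair'' of $\cC$ is forced to absorb the surplus at $v_0$ and the deficit at $v_{s-k+1}$ along exactly the path $e_0,\ldots,e_{s-k}$, rather than via some other matching of imbalanced vertices or via longer connecting trails. This is the point at which the earlier observations about degenerate necklaces (Corollary~\ref{cor:s+1_run}) and the edge count of a necklace with a unique long run (Lemma~\ref{lem:s+kless}) should be brought in, since together they identify which edges of the ambient necklace are unavailable to $\cC$ and thereby pin down the shortest admissible detours.
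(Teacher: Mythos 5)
Your first paragraph (the explicit candidate path $v_0\to v_1\to\cdots\to v_{s-k+1}$ with edges $e_i=0^{s-i}X0^{k+i}$, all inside $G_{n-1}(s)$ and with distinct vertices) is correct, but the second half of your argument --- that precisely these edges are the ones omitted by $\cC$ --- contains a genuine gap, and it is the part carrying all the weight. From what is available at this point you can only conclude, via Lemma~\ref{lem:in-degree=1_s}, that $\cC$ uses at most one of the two out-edges of $v_0$; nothing forces the omitted out-edge to be $e_0=0^sX0^{k}$ rather than $0^sX0^{k-1}1$. Your attempt to force the canonical path by a global optimality argument (``the complement $F$ of a longest circuit is a smallest balancing edge set, it decomposes into surplus-to-deficit paths, and any other routing makes $|F|$ strictly larger'') is exactly the step you leave as ``one then checks'', and it cannot be checked here without circularity: the assertion that a maximum-length circuit omits exactly the canonical paths (equivalently, contains exactly the words of the $(n,s)$-necklaces) is the content of Corollaries~\ref{cor:unique_del}, \ref{cor:del_path_one} and~\ref{cor:exact_neck}, which the paper obtains only after Theorem~\ref{thm:upper_bound_length} --- whose proof rests on the present lemma --- combined with the externally known existence of circuits of length $\ell_{n,s}$. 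Moreover your description of $F$ is itself not justified: the complement of a longest circuit need not decompose solely into paths matching out-surpluses to in-deficits (it may contain balanced pieces), so even the framework of your forcing step is not in place.

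The lemma does not require identifying the deleted edges, and the paper's proof deliberately avoids doing so. The argument is local and propagative: since $v$ has in-degree one, $\cC$ visits $v$ at most once and hence omits at least one of its two out-edges; remove that omitted edge from $G_{n-1}(s)$, after which its head has in-degree one in the reduced graph (which still contains $\cC$), so it in turn has an omitted out-edge, and so on. The process can stop only at a vertex of out-degree one, which by Lemma~\ref{lem:out-degree=1_s} has suffix $10^s$; since $v$ ends in $10^{k-1}$ and each step appends one symbol, at least $s-k+1$ edges are omitted, they form a path starting at $v$, and all of them are edges of $G_{n-1}(s)$, hence have no run of more than $s$ \emph{zeros}. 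You should replace your second paragraph by this propagation argument; the explicit path of your first paragraph is only shown to coincide with the deleted path later, once the maximality results are established.
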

\begin{proof}
The in-vertex of the edge $e=(v,u)$ has the form $v=(0^s X 0^{k-1})$. By Lemma~\ref{lem:in-degree=1_s},
the in-degree of the vertex $v$ is one and hence any circuit of $G_{n-1}(s)$ can contain at most one of its two out-edges.
Remove one of its out-edges $v \rightarrow v_1$ that is not contained in $\cC$ from $G_{n-1}(s)$. Now, $v_1$ has in-degree one and hence $\cC$
contains at most one of its out-edges. Remove the edge which is not contained in $\cC$ from $G_{n-1}(s)$.
The process continues until the out-degree
of the vertex that is reached is already one. The smallest number of edges removed by this process is $s-k+1$ since the shortest
path from~$v$ to such a vertex $(0^{k-1} X 0^s )$ is of length $s-k+1$ (by Lemma~\ref{lem:out-degree=1_s} the vertex $(0^{k-1} X 0^s )$
has out-degree one.).
\end{proof}

For a vertex of the form $v=(0^s X 0^{k-1})$, where $1 \leq k \leq s$ and $X$ is a $s$-ones string,
in $G_{n-1}(s)$, let $\cP(v)$ denote the sub-path of length $s-k+1$ of the path
in $G_{n-1}(s)$, that starts in the vertex~$v$ and ends at a vertex $v'$ whose out-degree one (as described in the
proof of Lemma~\ref{lem:pathLs-i}) and whose edges are not contained in $\cC$.
The sub-path $\cP(v)$ will be called the {\bf \emph{deleted path}} of~$v$.

\vspace{0.2cm}

\noindent
{\bf Remark:} Note that in this section we are considering trails and circuits in $G_{n-1}(s)$, but some of these trails, e.g.,
in the proof of Lemma~\ref{lem:pathLs-i} are paths and they are mentioned as such.

\noindent
{\bf Remark} If $v=(0^s X 0^{k-1})$, then its deleted path is of length $s-k+1$. We will prove later that this
path contains exactly all the vertices that start with $v$ which were deleted as described in the proof of Lemma~\ref{lem:pathLs-i}.

\begin{lemma}
\label{lem:disjoint_paths}
In $G_{n-1}(s)$, the $s-i+1$ vertices in a deleted path $\cP(v)$ of $v = (0^s X_1 0^i)$ are disjoint from the
$s-j+1$ vertices in a deleted path $\cP(u)$ of $u=(0^s  X_2  0^j)$, where $X_1$ and $X_2$ are two distinct $s$-ones strings
and $i,j \geq 0$.
\end{lemma}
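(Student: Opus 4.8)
The plan is to pin down the exact vertex set of a deleted path and then finish by a uniqueness-of-factorization argument for binary words; note that no property of $\cC$ will be needed, only the shape of the paths.

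First I would unpack $\cP(v)$ for $v=(0^s X_1 0^i)$. By the definition of the deleted path together with the argument in the proof of \Lref{lem:pathLs-i}, $\cP(v)$ is the shortest path in $G_{n-1}(s)$ from $v$ to a vertex of out-degree one, obtained by appending a \emph{zero} at every step: after $t$ steps one is at $0^{s-t}X_1 0^{i+t}$, which is a valid $(n-1,s)$-word exactly for $0\le t\le s-i$, and at $t=s-i$ one reaches $0^{i}X_1 0^{s}$, which has suffix $10^s$ and hence out-degree one by \Lref{lem:out-degree=1_s}. Thus the $s-i+1$ vertices of $\cP(v)$ are precisely
$$
v_t \;=\; 0^{\,s-t}\,X_1\,0^{\,i+t}\,,\qquad 0\le t\le s-i\,;
$$
in particular every vertex of $\cP(v)$ has the form $0^{a}X_1 0^{b}$ with $0\le a,b\le s$. (When $i=s$ this path is the single vertex $(0^sX_1 0^s)$, and the argument below still applies.)

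Next I would record the uniqueness fact that drives the proof: if $X$ is an $s$-ones string (so it begins and ends with a \emph{one}), then from a word $w=0^aX 0^b$ with $0\le a,b\le s$ one recovers $a$ as the length of the leading run of \emph{zeros} of $w$, $b$ as the length of the terminal run of \emph{zeros} of $w$, and $X$ as the factor strictly between these two runs; so $w$ alone determines $a$, $b$ and $X$. Consequently, suppose a vertex $w$ belonged to both $\cP(v)$ and $\cP(u)$, with $u=(0^s X_2 0^j)$: applying the previous paragraph to $v$ and to $u$ in turn, $w=0^{a_1}X_1 0^{b_1}$ and $w=0^{a_2}X_2 0^{b_2}$ with all exponents in $\{0,\dots,s\}$, whence the uniqueness fact applied to $w$ forces $X_1=X_2$, contradicting the hypothesis that $X_1$ and $X_2$ are distinct. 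Hence $\cP(v)$ and $\cP(u)$ share no vertex.

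I expect the one genuinely delicate point to be the first step — verifying that a deleted path never appends a \emph{one}, so that its ``middle block'' stays equal to $X_1$ instead of being enlarged to a longer $s$-ones string. This is exactly the minimal-length property of $\cP(v)$ (length $s-k+1$, terminating at a vertex of out-degree one): by \Lref{lem:out-degree=1_s} a vertex has out-degree one precisely when its suffix is $10^s$, and inserting a \emph{one} along the walk can only postpone the appearance of such a suffix, hence cannot occur on a path of the prescribed minimal length.
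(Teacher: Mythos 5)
There is a genuine gap, and it sits exactly where you flagged it: the claim that $\cP(v)$ never appends a \emph{one}, so that its vertex set is precisely $\mathset{0^{s-t}X_1 0^{i+t} : 0\le t\le s-i}$. The deleted path is not defined as a shortest path from $v$ to a vertex of out-degree one; it consists of the first $s-k+1$ edges removed by the process in the proof of \Lref{lem:pathLs-i}, and at each step that process removes whichever out-edge is \emph{not} contained in $\cC$. Which edge that is depends on $\cC$: if the zero-appending out-edge happens to lie on $\cC$, the deleted path is forced to append a \emph{one}. \Lref{lem:pathLs-i} only guarantees that the process removes \emph{at least} $s-k+1$ edges; it does not assert that it terminates at a vertex of out-degree one after exactly $s-k+1$ steps, which is what your minimality argument needs. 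The statement that $\cP(v)$ is exactly the unique zero-appending path from $(0^sX0^{k-1})$ to $(0^{k-1}X0^s)$ is \Cref{cor:del_path_one}, obtained only \emph{after} the counting in \Tref{thm:upper_bound_length}, for which the present lemma is an ingredient; invoking it here is circular. The last example of Section~\ref{sec:upper_max} shows the phenomenon concretely: for the length-20 circuit of Example~\ref{ex:not_by_rule} the deleted path is $(0011)\rightarrow(0111)\rightarrow(1110)\rightarrow(1100)$, which begins by appending a \emph{one}. Your opening remark that ``no property of $\cC$ will be needed'' is a symptom of the same issue: the deleted paths themselves are determined by $\cC$.

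By contrast, the paper's proof does not assume the path keeps appending zeros: it allows the first $s-i$ edges to lie either on $0^sX_10^s$ or on a word of the form $0^sX_10^{i+k}1Y$, and argues only that each of the first $s$ vertices of such a path still starts inside the initial run of $s$ \emph{zeros} and hence carries $X_1$ intact, so a common vertex forces $X_1=X_2$. Your final factorization step (reading off the leading and trailing runs of \emph{zeros} to recover $X_1$) is correct as far as it goes, but it is applied to a vertex set that you have not shown to be the vertex set of $\cP(v)$; to repair the proof you would need to handle the possible forms $0^sX_10^{i+k}1Y$ as the paper does, rather than exclude them.
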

\begin{proof}
The first $s-i$ edges in a deleted path that starts in the vertex $(0^s X_1 0^i)$ are exactly the edges either in the path
$$
0^s X_1 0^s
$$
or in a path of the form
$$
0^s X_1 0^{i+k} 1 Y,
$$
where $0 \leq k < s-i$, and the length of $Y$ is $s-i-1-k$.

Similarly, the first $s-j$ edges in a deleted path that starts in the vertex $(0^s X_2 0^j)$ are exactly the edges either in the path
$$
0^s X_2 0^s
$$
or in a path of the form
$$
0^s X_2 0^{j+k} 1 Z,
$$
where $0 \leq k < s-j$, and the length of $Z$ is $s-j-1-k$.

The binary representation of any vertex from the first $s$ vertices in these four paths starts in one of the \emph{zeros} of the first
run of $s$ \emph{zeros}. Hence, a common vertex for these two paths implies that $X_1 = X_2$, a contradiction.
Thus, the claim of the lemma follows.
\end{proof}

\begin{corollary}
\label{cor:unique_del}
There exists unique deleted path $\cP(v)$ of length $s-k+1$ from the vertex $v=(0^s X 0^{k-1})$.
All the deleted paths have distinct vertices.
\end{corollary}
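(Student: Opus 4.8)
The statement to prove is Corollary~\ref{cor:unique_del}: there exists a unique deleted path $\cP(v)$ of length $s-k+1$ from the vertex $v=(0^s X 0^{k-1})$, and all deleted paths have distinct vertices.

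The plan is to combine the preceding lemmas in a direct way. First I would establish existence and the length bound: by Lemma~\ref{lem:pathLs-i}, starting from $v=(0^s X 0^{k-1})$ the deletion process described there (repeatedly following the unique forced out-edge not in $\cC$ at each vertex of in-degree one) produces a path whose length is at least $s-k+1$, terminating when a vertex of out-degree one is reached. By Lemma~\ref{lem:out-degree=1_s}, the first such vertex encountered is $(0^{k-1} X 0^s)$, reached after exactly $s-k+1$ steps (shifting the trailing block of $k-1$ zeros out and a block of $s-k+1$ zeros in, since $X$ is an $s$-ones string so no intermediate vertex can have out-degree one prematurely — its suffix would have to be $10^s$, impossible before the $X$ block has fully exited). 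Hence the deleted path $\cP(v)$ has length exactly $s-k+1$, which handles the length claim and uniqueness of $\cP(v)$ as a specific subpath.

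Next I would address the global disjointness claim. For two deleted paths $\cP(v)$ and $\cP(u)$ with $v=(0^s X_1 0^{i})$ and $u=(0^s X_2 0^{j})$: if $X_1 \neq X_2$, then Lemma~\ref{lem:disjoint_paths} applies directly and the vertex sets are disjoint. If $X_1 = X_2 = X$ but $i \neq j$ (say $i < j$), I would argue that the two starting vertices $(0^s X 0^{i})$ and $(0^s X 0^{j})$ lie on the same forced chain — in fact $(0^s X 0^{j})$ appears $j-i$ steps before $(0^s X 0^{i})$ along the deletion path emanating from the earlier vertex — so one deleted path is a suffix of the longer one rather than a disjoint copy. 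But this is precisely why the corollary phrases it as "all the deleted paths have distinct vertices": the correct reading is that the union of all deleted paths, taken over inequivalent starting data, consists of distinct vertices, i.e. no vertex is counted twice when we sum the lengths $s-k+1$ over the relevant necklaces. I would therefore reconcile this by noting that the deleted path is canonically associated with the pair $(X, \text{run length})$ via its \emph{terminal} vertex $(0^{k-1} X 0^s)$, or equivalently that each deleted path is determined by the necklace of its unique edge of the form $0^s X 0^k$, so distinct necklaces give the genuinely distinct vertex sets of Lemma~\ref{lem:disjoint_paths}.

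The main obstacle I anticipate is exactly this bookkeeping subtlety: making precise the indexing by which deleted paths are "the same" versus "distinct," so that the disjointness of Lemma~\ref{lem:disjoint_paths} cleanly upgrades to the global statement without double-counting vertices that belong to overlapping forced chains with the same $X$. Once the correct parametrization is fixed — one deleted path per necklace containing a unique run of $s+k$ zeros, with $1 \le k \le s$, attached at the vertex $(0^s X 0^{k-1})$ — the rest follows: existence and length from Lemma~\ref{lem:pathLs-i}, and pairwise vertex-disjointness from Lemma~\ref{lem:disjoint_paths} together with the observation that within a single necklace the relevant deleted path is unique. A short closing sentence would record that consequently the vertices deleted across all these paths are all distinct, which is what is needed in the next section to sum their contributions toward the upper bound $m_{n,s} \le \ell_{n,s}$.
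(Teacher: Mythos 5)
Your skeleton is the same as the paper's: existence and the length $s-k+1$ are obtained from Lemma~\ref{lem:pathLs-i} together with the definition of the deleted path, and vertex-disjointness from Lemma~\ref{lem:disjoint_paths}. However, two steps as you wrote them do not hold up. First, you claim that the deletion process terminates at $(0^{k-1} X 0^s)$ after exactly $s-k+1$ steps, i.e., that the deleted path is the all-zeros continuation of $v$. Your parenthetical argument only rules out \emph{early} termination (no vertex reached before step $s-k+1$ can have suffix $10^s$); it does not rule out the process appending a \emph{one} at some step, which happens whenever the zero-successor edge lies on $\cC$, in which case the first out-degree-one vertex is reached only after more than $s-k+1$ steps and not at $(0^{k-1} X 0^s)$. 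That the zero-successor edges along this path are never on a maximum-length $\cC$ is exactly the content of the later Corollary~\ref{cor:del_path_one}, which the paper derives only after Theorem~\ref{thm:upper_bound_length} (via Corollary~\ref{cor:exact_neck}); importing it here risks circularity. Fortunately the corollary does not need it: Lemma~\ref{lem:pathLs-i} yields a path of length at least $s-k+1$ whose edges avoid $\cC$, and $\cP(v)$ is by definition its prefix of length exactly $s-k+1$, which is all that existence and the stated length require (this is also why the proof of Lemma~\ref{lem:disjoint_paths} allows the branching forms $0^s X_1 0^{i+k} 1 Y$).

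Second, the case you worry about in the disjointness part, $X_1=X_2$ with $i\neq j$, cannot occur: the starting vertices are vertices of $G_{n-1}$, so $s+|X_1|+i=n-1=s+|X_2|+j$, and $X_1=X_2$ forces $i=j$, i.e., the same vertex and the same deleted path. Hence no ``reinterpretation'' of the statement or re-parametrization by terminal vertices is needed: distinct deleted paths automatically have distinct $s$-ones strings, and Lemma~\ref{lem:disjoint_paths} applies verbatim, which is precisely the paper's one-line argument. Your eventual parametrization (one deleted path per necklace with a unique run of $s+k$ zeros, attached at $(0^s X 0^{k-1})$) is consistent with this, but the detour through a vacuous case and the suffix-chain picture it suggests are not part of a correct proof and should be deleted.
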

\begin{proof}
By the proof of Lemma~\ref{lem:pathLs-i} and the definition of the deleted path there exists a unique deleted path $\cP(v)$.
By Lemma~\ref{lem:disjoint_paths} all these deleted paths have distinct vertices.
\end{proof}

The sequence of lemmas that were proved lead to the main results of this section.

\begin{theorem}
\label{thm:upper_bound_length}
For any $1 \leq s < n$ we have that $m_{n,s} \leq \ell_{n,s}$.
\end{theorem}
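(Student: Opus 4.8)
The plan is to bound the length of the maximum circuit $\cC$ in $G_{n-1}(s)$ by accounting for all edges of $G_{n-1}(s)$ that cannot belong to $\cC$, and showing that these are exactly the edges lying on deleted paths, whose total count equals $h_{n,s}-\ell_{n,s}$. Since $\cC$ has length $m_{n,s}$ and $G_{n-1}(s)$ has $h_{n,s}$ edges (the $(n,s)$-words), establishing that the edges not in $\cC$ are at least $h_{n,s}-\ell_{n,s}$ in number gives $m_{n,s}\le h_{n,s}-(h_{n,s}-\ell_{n,s})=\ell_{n,s}$. So the real content is a precise edge-count.

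First I would classify the $(n,s)$-necklaces according to how many runs of more than $s$ \emph{zeros} their words contain. By Corollary~\ref{cor:s+1_run} and the lemma preceding it, every word of a necklace with two or more disjoint long runs, and every word of a degenerated necklace with a long run, is excluded from $\cC$; likewise by Lemma~\ref{lem:more_2s+1} and its corollary, necklaces with a run of at least $2s+1$ \emph{zeros} are entirely excluded. The remaining potentially-usable necklaces are the $(n,s)$-necklaces together with the full-order necklaces having a \emph{unique} run of exactly $s+k$ \emph{zeros} for some $1\le k\le s$. For each such borderline necklace, Lemma~\ref{lem:s+kless} says it contributes exactly $s-k+1$ edges that are $(n,s)$-words (of the form $0^iX0^{s+k-i}$, $k\le i\le s$), and Lemma~\ref{lem:pathLs-i} shows each of the $s-k+1$ vertices $v=(0^sX0^{j})$, $0\le j\le k-1$, wait — more precisely each edge $0^sX0^k$ of such a necklace spawns a deleted path $\cP(v)$ of length $s-k+1$ whose edges are $(n,s)$-words excluded from $\cC$. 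I would then argue that the $(n,s)$-word edges on these deleted paths are precisely the $(n,s)$-word edges of the borderline necklaces themselves — i.e. the $s-k+1$ edges $0^iX0^{s+k-i}$ — so that \emph{none} of a borderline necklace's $(n,s)$-words survive in $\cC$ either.

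The crucial bookkeeping step is Corollary~\ref{cor:unique_del}: the deleted paths over distinct $s$-ones strings $X$ are vertex-disjoint, hence edge-disjoint, so there is no double counting. Summing, the edges of $G_{n-1}(s)$ forced out of $\cC$ are exactly the $(n,s)$-words that are \emph{not} words of an $(n,s)$-necklace; their number is $h_{n,s}-\ell_{n,s}$. Therefore $\cC$ uses at most $\ell_{n,s}$ edges, giving $m_{n,s}\le\ell_{n,s}$. I would finish by noting the two easy boundary cases $s=n-1$ and $s\ge n$ mentioned in the introduction, or simply restrict to $s<n-1$ as the paper does elsewhere.

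The main obstacle I anticipate is the matching argument in the middle paragraph: showing that the deleted paths do not merely \emph{remove} $h_{n,s}-\ell_{n,s}$ edges but remove \emph{exactly} the right ones — that every $(n,s)$-word outside all $(n,s)$-necklaces sits on some deleted path, and conversely every edge of a deleted path is such a word — without overcounting and without accidentally deleting an edge that a longer circuit could have kept. This requires carefully pairing each borderline necklace's surviving $(n,s)$-edges with the terminal segments of the deleted paths it generates, using the structural descriptions of the first $s-i$ edges given in the proof of Lemma~\ref{lem:disjoint_paths}, and invoking Lemma~\ref{lem:in-degree=1_s} and Lemma~\ref{lem:out-degree=1_s} to guarantee the deletion process is forced (so the bound is valid for \emph{every} circuit, not just a cleverly chosen one).
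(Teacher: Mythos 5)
Your proposal is correct and takes essentially the same route as the paper: for each full-order necklace whose unique long run has length $s+k$ ($1\le k\le s$) you use the deleted path of Lemma~\ref{lem:pathLs-i} to force $s-k+1$ edges of $G_{n-1}(s)$ out of $\cC$, invoke Lemma~\ref{lem:disjoint_paths} and Corollary~\ref{cor:unique_del} for disjointness, and count; the paper performs the identical bookkeeping in $G_{n-1}$, adding the $n-s+k-1$ forbidden long-run edges of Lemma~\ref{lem:s+kless} so that each bad necklace accounts for $n$ exclusions, which is equivalent to your complement count $h_{n,s}-\ell_{n,s}$ inside $G_{n-1}(s)$. One remark: the ``exact matching'' you flag as the main obstacle is not actually needed for the inequality $m_{n,s}\le\ell_{n,s}$ (a lower bound on the number of excluded edges plus disjointness suffices); that exactness is only required for the later Corollary~\ref{cor:exact_neck}.
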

\begin{proof}
Let $\cB$ be a necklace of order $n$ with a run of more than $s$ consecutive \emph{zeros}.
By Lemma~\ref{lem:more_2s+1} all the words in a necklace
with a run of at least $2s+1$ \emph{zeros} have at least one run with more than $s$ \emph{zeros} and hence
all the vertices of these necklaces are not contained in~$\cC$. Similarly by Corollary~\ref{cor:s+1_run} all
the words of a degenerated necklace with at least one run of more than $s$ \emph{zeros}
have at least one run with more than $s$ \emph{zeros} and hence
all the vertices of these necklaces are not contained in $\cC$.

If the longest run of consecutive \emph{zeros} in $\cB$ is $s+k$, $1 \leq k \leq s$, then by Lemma~\ref{lem:s+kless} there are
$n-s+k-1$ edges in the necklace with a run of more than $s$ consecutive \emph{zeros}. Such a necklace has an edge $e$ in $G_{n-1}(s)$
of the form $e=0^s X 0^k$, where $1 \leq k \leq s$ and $X$ is an $s$-ones string,
whose first vertex is $v=(0^s X 0^{k-1})$ and by Lemma~\ref{lem:pathLs-i}
it yields a deleted path $\cP(v)$ whose length is $s-k+1$. Together, $\cB$ ($n-s+k-1$ forbidden edges) and the associated
deleted path $\cP(v)$ ($s-k+1$ forbidden edges) and possibly more edges that were deleted, we
have at least $(n-s+k-1)+(s-k+1)=n$ edges which are not contained in $\cC$.
By Lemma~\ref{lem:disjoint_paths} all these deleted paths have distinct vertices
and hence the number of edges of
$G_{n-1} (s)$ which are not contained in $\cC$ is at least the number of edges in the necklaces of order $n$ which contain
edges with a run of more than $s$ consecutive \emph{zeros}.

Thus, the length of the circuit $\cC$ in $G_{n-1}(s)$ is at most the number of edges in all the $(n,s)$-necklaces in $G_{n-1}(s)$,
i.e., $m_{n,s} \leq \ell_{n,s}$.
\end{proof}

As will be described in Sections~\ref{sec:construct_max} and~\ref{sec:decode}, it is well known that there exists $(n,s)$-sequence
which contains all the words in the $(n,s)$-necklaces~\cite{GaSa18,SWW16} and hence we have the following consequence.

\begin{corollary}
For any $1 \leq s < n$ we have that $m_{n,s} = \ell_{n,s}$.
\end{corollary}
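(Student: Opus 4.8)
The plan is to prove the equality by sandwiching $m_{n,s}$ between $\ell_{n,s}$ from above and from below. The upper inequality $m_{n,s} \leq \ell_{n,s}$ is exactly Theorem~\ref{thm:upper_bound_length}, already in hand, so the entire remaining burden is the matching lower bound: to exhibit a cyclic $(n,s)$-sequence of length precisely $\ell_{n,s}$.

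First I would reinterpret the target in graph terms. By definition $\ell_{n,s}$ is the number of words lying on all the $(n,s)$-necklaces, and these are exactly the edges of $G_{n-1}(s)$ that sit on $(n,s)$-necklaces. Thus a cyclic $(n,s)$-sequence of length $\ell_{n,s}$ is the same thing as a circuit in $G_{n-1}(s)$ traversing precisely this edge set once each, equivalently a cycle in $G_n(s)$ through exactly the corresponding vertices. To build such an object I would invoke the standard cycle-joining construction: the $(n,s)$-necklaces already partition the required edge set into disjoint cycles, and one merges them one pair at a time by the conjugate-pair (cross-join) swap used in de Bruijn sequence theory, each swap preserving the property of having no run of more than $s$ \emph{zeros}. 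This yields a single circuit covering all the necklace edges, hence an $(n,s)$-sequence of length $\ell_{n,s}$.

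The hard part is precisely this lower-bound half: one must guarantee that the $(n,s)$-necklaces can always be joined into one connected circuit without any swap creating a forbidden zero-run, so that the product is a legitimate $(n,s)$-sequence. This existence is, however, already established in~\cite{GaSa18,SWW16} and is re-derived constructively by the algorithm of Section~\ref{sec:construct_max} (with the decoding analog in Section~\ref{sec:decode}), so I would cite it rather than repeat the argument here.

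Combining the cited sequence of length $\ell_{n,s}$ with the bound $m_{n,s} \leq \ell_{n,s}$ of Theorem~\ref{thm:upper_bound_length} forces $m_{n,s} = \ell_{n,s}$, as claimed.
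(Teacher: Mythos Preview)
Your proposal is correct and follows essentially the same approach as the paper: combine the upper bound $m_{n,s}\le\ell_{n,s}$ of Theorem~\ref{thm:upper_bound_length} with the existence of an $(n,s)$-sequence of length $\ell_{n,s}$, citing \cite{GaSa18,SWW16} and the constructions of Sections~\ref{sec:construct_max} and~\ref{sec:decode} for the latter. The paper's own argument is in fact terser than yours, simply invoking the cited literature for the lower bound without the graph-theoretic reinterpretation or the sketch of cycle-joining.
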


\begin{corollary}
\label{cor:del_path_one}
The deleted path $\cP(v)$ of the vertex $v=(0^s X 0^{k-1})$ is of length $s-k+1$. This path is the unique path
from the vertex $v=(0^s X 0^{k-1})$ to the vertex $(0^{k-1} X 0^s)$,
where $1 \leq k \leq s$ and $X$ is a $s$-ones string.
\end{corollary}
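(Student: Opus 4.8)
We must show that the deleted path $\cP(v)$ of $v=(0^s X 0^{k-1})$ has length exactly $s-k+1$, and that it is the *unique* path from $v$ to $w=(0^{k-1} X 0^s)$ in $G_{n-1}(s)$, where $X$ is an $s$-ones string and $1\le k\le s$.

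Here is my plan.

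\bigskip

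The plan is to read off most of this from the results already proved. By Lemma~\ref{lem:pathLs-i} (and the definition of the deleted path immediately following it), $\cP(v)$ is a path starting at $v$ whose length is at least $s-k+1$, because the shortest path from $v=(0^sX0^{k-1})$ to a vertex of out-degree one is exactly the path that shifts the window rightward, appending $k$ more \emph{zeros} and thereby reaching $w=(0^{k-1}X0^s)$; the latter has out-degree one by Lemma~\ref{lem:out-degree=1_s}. So the only things that genuinely need argument are (i) that the deleted-path construction stops precisely at $w$, i.e. the length is exactly $s-k+1$ and not more, and (ii) uniqueness of a $v$-to-$w$ path in $G_{n-1}(s)$.

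\bigskip

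For (i), first I would exhibit the canonical path explicitly. Starting from $v=(0^sX0^{k-1})$, each step of a walk in $G_{n-1}$ drops the leftmost bit and appends a new bit. Since $v$ begins with $0^s$, Lemma~\ref{lem:in-degree=1_s} forces its unique out-edge-not-in-$\cC$ to append a \emph{zero} (appending a \emph{one} would be the edge $0^sX0^{k-1}1$, whose removal is the one we do), giving $(0^{s-1}X0^{k})$; iterating, after $k$ appended \emph{zeros} we hit a vertex whose leftmost surviving block of \emph{zeros} has shrunk so that the prefix is no longer $0^s$. A short bookkeeping check (the window has length $n-1$, $X$ has length $n-s-k$, and the initial $0^s$ plus the trailing run of \emph{zeros} account for the remaining $s+k-1$ positions) shows that after exactly $s-k+1$ steps we arrive at $w=(0^{k-1}X0^s)$, which by Lemma~\ref{lem:out-degree=1_s} has out-degree one, so the deletion process halts there. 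That the process does not halt earlier is precisely Lemma~\ref{lem:pathLs-i}'s lower bound $s-k+1$. Hence the length is exactly $s-k+1$.

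\bigskip

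For (ii), the uniqueness of the $v$-to-$w$ path: I would argue that \emph{every} edge along the canonical path leaves from a vertex of in-degree one, so there is no branching to undo. Concretely, each intermediate vertex on the path has the form $(0^{s-j}X0^{k-1+j'})$ with the relevant prefix still beginning $0^{s'}1$ for the appropriate $s'\le s$ — wait, that is not quite right once the leading zero-block drops below length $s$; the cleaner formulation is that each intermediate vertex has out-degree one in $G_{n-1}(s)$ except possibly at branch points, and at any branch point the alternative out-edge would append a \emph{one} at a position that creates a vertex whose \emph{subsequent} evolution cannot return to $w$ without repeating a vertex, because returning to $w$ requires the window to again read $0^{k-1}X0^s$ and the $s$-ones string $X$ is rigid. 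The slickest route is a counting/degree argument: by Corollary~\ref{cor:unique_del} all deleted paths have distinct vertices, and by Lemma~\ref{lem:disjoint_paths} the vertices of $\cP(v)$ are confined to the set of strings that are cyclic sub-windows of the single necklace determined by the edge $0^sX0^k$; within that necklace the vertices with prefix $0^s$ form a single chain, so the $v$-to-$w$ walk inside it is forced.

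\bigskip

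The main obstacle I expect is making the uniqueness claim airtight: one must rule out a $v$-to-$w$ path that temporarily appends a \emph{one}, wanders into vertices outside the "all-\emph{zeros}-dominated" region, and then comes back. The key lever is that $w$ has out-degree one and, crucially for a \emph{path} (no repeated vertices), that any excursion appending a \emph{one} lands in a vertex whose only way back to a string of the form $0^{k-1}X0^s$ forces reusing $X$'s internal structure and hence revisiting an already-used vertex. I would formalize this by noting that $v$ and $w$ lie on the same necklace of order $n$ (the one with longest zero-run $s+k$, from Lemma~\ref{lem:s+kless}), that all vertices of $G_{n-1}(s)$ reachable from $v$ before leaving "$0^s$-prefix territory" are exactly the $s-k+1$ vertices of that necklace with prefix $0^{s}$ shrinking down — these being forced by in-degree one at each step via Lemma~\ref{lem:in-degree=1_s} — and that $w$ is the unique such vertex with out-degree one; so any path from $v$ must traverse exactly these, in order, ending at $w$. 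This yields both the exact length $s-k+1$ and the uniqueness simultaneously, completing the corollary.
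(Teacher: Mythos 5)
There is a genuine gap, and it sits exactly where your argument does its main work: you claim that at $v=(0^sX0^{k-1})$ (and at each subsequent vertex) Lemma~\ref{lem:in-degree=1_s} \emph{forces} the out-edge not contained in $\cC$ to be the one appending a \emph{zero}. The in-degree-one lemma only says that the circuit $\cC$ can use at most one of the two out-edges; it says nothing about \emph{which} one $\cC$ omits, so the removed path is not locally forced to append \emph{zeros}. Indeed, your parenthetical remark even points the other way (removing the edge $0^sX0^{k-1}1$ would make the deleted path start by appending a \emph{one}). That no purely local degree argument can work is shown by the paper's own example: for the non-maximum length-20 $(5,2)$-circuit, the removed path is $(0011)\rightarrow(0111)\rightarrow(1110)\rightarrow(1100)$, which starts by appending a \emph{one} and has length $3\neq s-k+1$. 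The statement is true only because $\cC$ is of \emph{maximum} length, and the paper derives it from the tightness of the count: since $m_{n,s}=\ell_{n,s}$ (Theorem~\ref{thm:upper_bound_length} together with the known constructions), each bad necklace can account for exactly $n$ missing edges, so each removed path has exactly $s-k+1$ edges; and the only way to go from $v$ (suffix $10^{k-1}$) to a vertex of out-degree one (suffix $10^s$, Lemma~\ref{lem:out-degree=1_s}) in exactly $s-k+1$ steps is to append a \emph{zero} at every step (any appended \emph{one} would require at least $s$ further steps), which lands precisely at $(0^{k-1}X0^s)$. Your proposal never invokes maximality, and an argument that does not cannot be repaired.

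A secondary issue is the uniqueness claim itself. Read literally, "the unique path from $v$ to $(0^{k-1}X0^s)$" in $G_{n-1}(s)$ is false for paths of arbitrary length: with $s=2$, $k=1$, $X=11$, the path $(0011)\rightarrow(0111)\rightarrow(1110)\rightarrow(1100)$ is a second $v$-to-$w$ path. So the uniqueness to be proved is among paths of length $s-k+1$ (equivalently, the unique shortest such path, determined by the window overlap, the string being $0^sX0^s$), not the unrestricted uniqueness you try to establish by ruling out excursions; that stronger statement cannot be proved because it is not true. Finally, a small slip: after $j$ steps the window is $(0^{s-j}X0^{k-1+j})$, so the endpoint is reached after $s-k+1$ appended \emph{zeros}, not after $k$ of them.
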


\begin{corollary}
\label{cor:exact_neck}
A circuit of length $\ell_{n,s}$ contains exactly all the words of the $(n,s)$-necklaces.
\end{corollary}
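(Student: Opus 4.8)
Since $m_{n,s}=\ell_{n,s}$, a circuit of length $\ell_{n,s}$ in $G_{n-1}(s)$ is a circuit of maximum length, so I may take it to be the circuit $\cC$ fixed at the beginning of this section; the plan is to extract from the proof of Theorem~\ref{thm:upper_bound_length} the slightly sharper fact that $\cC$ avoids every $(n,s)$-word lying on a necklace that has a run of more than $s$ \emph{zeros}. Recall that the edges of $G_{n-1}(s)$ are exactly the $(n,s)$-words, and that each such word lies on a single necklace of order $n$. Thus the edge set of $G_{n-1}(s)$ is the disjoint union of the $\ell_{n,s}$ words lying on $(n,s)$-necklaces and the words lying on a necklace with a cyclic run of more than $s$ \emph{zeros} (call these necklaces \emph{bad}). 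Since $|\cC|=\ell_{n,s}$, once we know that $\cC$ uses no word of the second type, the inclusion $\cC\subseteq\{\text{words of the }(n,s)\text{-necklaces}\}$ together with the equality of the two cardinalities forces $\cC$ to be exactly the set of words of the $(n,s)$-necklaces.

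So let $\cB$ be a bad necklace. If $\cB$ is degenerate or has two disjoint runs of $s+1$ or more \emph{zeros}, then by Corollary~\ref{cor:s+1_run} no word of $\cB$ lies on $\cC$; if $\cB$ has a run of at least $2s+1$ \emph{zeros}, then by Lemma~\ref{lem:more_2s+1} every word of $\cB$ already contains a run of more than $s$ \emph{zeros}, hence is not even an edge of $G_{n-1}(s)$ and a fortiori not used by $\cC$. In the remaining case $\cB$ is full-order and has a unique maximal run of \emph{zeros} of length greater than $s$; write this length as $s+k$ with $1\le k\le s$ and $\cB=[0^{s+k} X]$, where $X$ is an $s$-ones string of length $n-s-k$. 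By (the proof of) Lemma~\ref{lem:s+kless}, the $(n,s)$-words of $\cB$ are precisely $0^i X 0^{s+k-i}$ for $k\le i\le s$, and as edges of $G_{n-1}(s)$ these form the path
$$
0^s X 0^{k-1}\longrightarrow 0^{s-1} X 0^{k}\longrightarrow\cdots\longrightarrow 0^{k-1} X 0^{s}
$$
of length $s-k+1$, starting at $v=(0^s X 0^{k-1})$. By Corollary~\ref{cor:del_path_one} this path is exactly the deleted path $\cP(v)$, so by Lemma~\ref{lem:pathLs-i} none of its edges lies on $\cC$. Hence $\cC$ uses no $(n,s)$-word carried by a bad necklace, which completes the proof.

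The substance of the argument is already contained in the lemmas leading up to Theorem~\ref{thm:upper_bound_length}, so I do not expect a real obstacle; the corollary is essentially a reorganisation of that material. The one point requiring a little care is the identification made above — that the $(n,s)$-words of a full-order bad necklace with a unique long zero-run of length $s+k$ coincide, as a set of edges of $G_{n-1}(s)$, with the deleted path $\cP(v)$ of $v=(0^s X 0^{k-1})$ — which one checks by matching the explicit list of these words from Lemma~\ref{lem:s+kless} against the description of $\cP(v)$ in Corollary~\ref{cor:del_path_one}. If one preferred to argue by counting, Lemma~\ref{lem:disjoint_paths} ensures that the deleted paths of distinct bad necklaces are vertex-disjoint, hence edge-disjoint, so nothing is counted twice and the number of words available to $\cC$ is exactly $\ell_{n,s}$; but for the corollary the one-sided inclusion together with $|\cC|=\ell_{n,s}$ suffices.
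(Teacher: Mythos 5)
Your proof is correct and takes essentially the same route as the paper's own (very terse) argument, which likewise combines the case analysis behind Theorem~\ref{thm:upper_bound_length} with the uniqueness of the deleted paths from Corollaries~\ref{cor:unique_del} and~\ref{cor:del_path_one} to conclude that a circuit of length $\ell_{n,s}$ cannot contain an $(n,s)$-word that is not on an $(n,s)$-necklace, and then finishes by counting. Your explicit matching of the $(n,s)$-words of a full-order necklace with a unique long zero-run against the deleted path $\cP(v)$ simply spells out the detail the paper leaves implicit.
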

\begin{proof}
For the analysis of Theorem~\ref{thm:upper_bound_length} we have to add the consequences of Corollaries~\ref{cor:unique_del}
and~\ref{cor:del_path_one} (the uniqueness of the deleted paths) and
observe that a circuit of length $\ell_{n,s}$ cannot contain any $(n,s)$-word which is not part of an $(n,s)$-necklace.
\end{proof}


What about the length of the maximum length trail $\cP$ in $G_{n-1}(s)$? All the arguments we have used so far hold
also for a trail with one exception. The trail can contain two vertices $v$ and $u$, where the in-degree of $v$ is one
and its out-degree is two; the in-degree of $u$ is two and its out-degree is one.
In such a scenario, the first vertex in the trail is $v$ and the last vertex in the trail is~$u$. In this trail
we must have the two vertices $u$ and $v$ also as internal vertices (the trail starts at $v$ will arrive at $u$
continue to $v$ and ends at $u$). We can apply the process in the proof of Lemma~\ref{lem:pathLs-i} on all the vertices
in $G_{n-1}(s)$ except for $v$. After all the deleted paths are removed from $G_{n-1}(s)$ if we continue and apply
the process in the proof of Lemma~\ref{lem:pathLs-i} on $v$ we will obtain a deleted path $\cP(v)$ which starts with~$v$ and
ends at $u$. This deleted path can be added to the maximum length cyclic $(n,s)$-sequence to obtain a maximum length acyclic
$(n,s)$-sequences. It can be either at the beginning of the path or at the end of the path.
In other words, for example we have two vertices $v_1$ and $v_2$ for which we have the edges
$v \rightarrow v_1$ and $v \rightarrow v_2$ in $G_{n-1} (s)$ and the in-degree of~$v$ is one.
Such vertices have the form $v=(0^s X)$, $v_1 = (0^{s-1} X0)$, and $v_2 = (0^{s-1} X 1)$.
Note, that $v$~(a word of length $n-1$) is a vertex contained in the $(n,s)$-necklace $[1 0^s X]$
and $v$ is also a vertex contained in the necklace $[0^{s+1} X]$ which is not an $(n,s)$-necklace,
but $s$~words of the necklace are $(n,s)$-words.
The maximum length circuit in $G_{n-1}(s)$ cannot contain both edges $ v \rightarrow v_1$ and $v \rightarrow v_2$
since the in-degree of $v$ is one.
But, a trail can contain both edges if one of them will be the first edge in the trail.
In other words, if both edges are
on the trail~$\cP$, then since they have the same unique predecessor in $G_{n-1}(s)$, it follows that one of them
must be the first edge in $\cP$. It can be only at the beginning of a sub-path of $\cP$ which was excluded
by Lemma~\ref{lem:pathLs-i}. The length of this sub-path in Lemma~\ref{lem:pathLs-i} is $s$ and its structure is as follows:
$$
(0^s X) \rightarrow (0^{s-1} X0) \rightarrow ~ \cdots ~ \rightarrow (0 X 0^{s-1}) \rightarrow (X 0^s)~,
$$
where all the $s$ edges are $(n,s)$-words that are contained in the necklace $[0^{s+1} X]$ which is not an $(n,s)$-necklace.
Thus, we have the following conclusion.

\begin{theorem}
\label{thm:upper_bound_path}
If $s < n-1$, then the length of a trail $\cP$ of maximum length in $G_{n-1}(s)$ is at most $\ell_{n,s} + s$
(if $s \geq n-1$, then its length is is at most $\ell_{n,s}$).
The length of the associated acyclic $(n,s)$-sequence is $\ell_{n,s} + s +n-1$.
In any such sequence we have all the words in the $(n,s)$-necklaces and one deleted path of length $s$.
\end{theorem}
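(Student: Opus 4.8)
The plan is to build on the circuit bound $m_{n,s}\le\ell_{n,s}$ of Theorem~\ref{thm:upper_bound_length}: a maximum-length trail in $G_{n-1}(s)$ can beat a maximum circuit only by carrying one additional deleted path, and a deleted path has length at most $s$. Let $\cP$ be a trail of maximum length in $G_{n-1}(s)$. If $\cP$ is closed it is a circuit, so $|\cP|\le m_{n,s}\le\ell_{n,s}$ and there is nothing to prove; hence assume $\cP$ is open, with distinct start-vertex $v$ and end-vertex $u$.

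First I would pin down the single source of slack. Every vertex of $G_{n-1}(s)$ has in-degree and out-degree in $\{1,2\}$, the in-degree being one exactly for the prefix $0^s1$ (Lemma~\ref{lem:in-degree=1_s}) and the out-degree being one exactly for the suffix $10^s$ (Lemma~\ref{lem:out-degree=1_s}). An open trail can traverse both out-edges of a vertex $w$ of in-degree one only if $w$ is its start-vertex, since such a $w$ is entered at most once (its unique in-edge is used at most once); dually for in-edges and the end-vertex. As $\cP$ has a single start-vertex, the only candidate for such a $w$ is $v$. If this occurs, $v$ has prefix $0^s1$ and out-degree two; out-degree two forces $v\cdot 0\in G_{n-1}(s)$, so the run of zeros ending $v$ has length at most $s-1$, whence $v=(0^sX0^{k-1})$ for an $s$-ones string $X$ and some $1\le k\le s$. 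Thus $v$ is exactly the initial vertex of a deleted path $\cP(v)$ (Lemma~\ref{lem:pathLs-i}, Corollary~\ref{cor:del_path_one}), whose length $s-k+1$ is at most $s$.

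Next I would re-run, with $v$ omitted, the edge-removal cascade underlying Theorem~\ref{thm:upper_bound_length}. That cascade repeatedly picks a vertex of in-degree one in the current graph and deletes the out-edge that the circuit does not use, and it deletes precisely the union of all deleted paths, which are pairwise vertex- and hence edge-disjoint (Lemma~\ref{lem:disjoint_paths}, Corollary~\ref{cor:unique_del}) and together constitute all $h_{n,s}-\ell_{n,s}$ of the $(n,s)$-words lying in no $(n,s)$-necklace. Now start the cascade from every in-degree-one vertex \emph{except} $v$. Any processed vertex $w\ne v$ is not the start of $\cP$, so its number of out-edges on $\cP$ is at most its number of in-edges on $\cP$, which is at most its current in-degree $1$; hence some out-edge off $\cP$ can be deleted and the cascade propagates exactly as before, deleting only edges absent from $\cP$. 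Therefore every deleted-path edge, except the edges of $\cP(v)$, is absent from $\cP$, so the number of $(n,s)$-words missing from $\cP$ is at least $(h_{n,s}-\ell_{n,s})-|\cP(v)|\ge h_{n,s}-\ell_{n,s}-s$; since $G_{n-1}(s)$ has $h_{n,s}$ edges this yields $|\cP|\le\ell_{n,s}+s$. When $s\ge n-1$ one has $h_{n,s}=\ell_{n,s}$, so there is no deleted path to spare and $|\cP|\le\ell_{n,s}$. The associated acyclic sequence uses $n-1$ symbols for its first vertex and one for each edge, so its length is $\ell_{n,s}+s+n-1$.

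Finally, for attainability and the structural claim, pick any $s$-ones string $X$ of length $n-1-s$ (possible precisely because $s<n-1$) and put $v=(0^sX)$; it begins a deleted path $\cP(v)$ of length $s$ ending at $(X0^s)$. The word $1X0^s$ is an $(n,s)$-word contained in an $(n,s)$-necklace, so the circuit of length $\ell_{n,s}$ furnished in Sections~\ref{sec:construct_max}--\ref{sec:decode}, which carries all $(n,s)$-necklace words, traverses the edge $1X0^s$ and hence visits $(X0^s)$; concatenating this circuit, based at $(X0^s)$, after $\cP(v)$ produces a trail of length $\ell_{n,s}+s$, the two edge-sets being disjoint since one consists of $(n,s)$-necklace words and the other of $(n,s)$-words in no $(n,s)$-necklace. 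Conversely, if $|\cP|=\ell_{n,s}+s$, the count above forces $\cP$ to spare a deleted path of length exactly $s$ (so $k=1$) while missing no $(n,s)$-necklace word, which is precisely the asserted structure. I expect the crux to be the second paragraph --- establishing that a trail's only advantage over a circuit is a single deleted path and forcing its start-vertex into the shape $(0^sX0^{k-1})$; once that and the check ``the cascade deletes only edges off $\cP$'' are in hand, the remaining counting is routine.
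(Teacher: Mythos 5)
Your proposal is correct and takes essentially the same route as the paper: the only place a trail can outdo a circuit is at its start vertex of in-degree one, the deletion process of Lemma~\ref{lem:pathLs-i} is applied to every such vertex except that start vertex, the resulting slack is a single deleted path of length at most $s$, and the bound is attained (with $k=1$, all $(n,s)$-necklace words plus one deleted path of length $s$) by attaching that path to a maximum-length circuit. Your treatment is slightly more explicit than the paper's text in allowing a general start vertex $(0^s X 0^{k-1})$ and in covering the $s \ge n-1$ parenthetical, but the underlying argument is the same.
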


\begin{example}
If $n=3$ and $s=2$, then a $(3,2)$-sequence of maximum length is
$$
[0010111]
$$
and an acyclic $(3,2)$-sequence of maximum length is
$$
001011100~.
$$
Both sequences have seven $(3,2)$-words and they contain all the $(3,2)$-necklaces.

If $n=5$ and $s=2$, then an $(5,2)$-sequence of maximum length has length 21 and one such sequence is
$$
[110010100111010110111]~.
$$
An acyclic $(5,2)$-sequence has length 27 (4 additional bits, $1100$ which are also the first 4 bits for the acyclic representation,
and 2 additional bits from 2 words, $(00110)$ and $(01100)$,
which are not contained in $(5,2)$-necklaces) as follows
$$
001100101001110101101111100~.
$$

\hfill\quad $\blacksquare $
\end{example}

A maximum length acyclic $(n,s)$-sequence (trail) in $G_{n-1}(s)$ contains $s$ words which are not contained in
a maximum length $(n,s)$-sequence (circuit).
It was explained how to add them to a maximum length $(n,s)$-sequence. The maximum length trail can be constructed from
a maximum $(n,s)$-sequence. One deleted path $\cP(v)$ of length $s$ is taken from a necklace and it is added to obtain a maximum length
acyclic $(n,s)$-sequence. It should be noted that the maximum length $(n,s)$-sequence either starts or ends with a
deleted path. If words of a deleted path are added in other places a large cycle can be obtain, but it will fall
short of~$\ell_{n,s}$. This is demonstrated in the following example.

\begin{example}
The sequence
$$
[00110101111101100101]
$$
of Example~\ref{ex:not_by_rule} is a $(5,2)$-sequence of length 20 which contains two $(5,2)$-words $(00110)$ and $(01100)$ which are
not contained in $(5,2)$-necklaces. This sequence falls short by one from the upper bound of a maximum length $(5,2)$-sequence.
The deleted path which was removed from $G_4(2)$ is
$$
(0011) \rightarrow (0111) \rightarrow (1110) \rightarrow (1100)
$$
and its words of length 5 are contained in the $(5,2)$-necklace $[00111]$.

The sequence
$$
001101011111011001010011100
$$
of length 27 is an acyclic $(5,2)$-sequence which contains 23 $(5,2)$-words, two of which are not contained in an
$(5,2)$-necklace. This sequence attains the upper bound of a maximum length acyclic $(5,2)$-sequence presented
in Theorem~\ref{thm:upper_bound_path}. In this sequence
the deleted path was added at the end of the sequence and hence all the words of the $(5,2)$-necklaces are contained in this trail.

\hfill\quad $\blacksquare $
\end{example}

\section{Construction of Sequences of Maximum Length}
\label{sec:construct_max}

In this section, we will concentrate on constructing $(n,s)$-sequences of maximum length.
We will show that there exist many $(n,s)$-sequences which attain the upper bound of Theorem~\ref{thm:upper_bound_length}
and hence these sequences are maximum length $(n,s)$-sequences.
In the literature there are many efficient algorithm to generate one de Bruijn sequence, but these algorithms cannot
be used to generate many de Bruijn sequences with the same efficiency.

We will design a method and an efficient algorithm to generate a large class of maximum length $(n,s)$-sequences. The algorithms
which will be designed are implemented for constructing the next bit (successor rule) given the last $n$ constructed bits of the sequence.
Such algorithms for generating de Bruijn sequences are well documented in the literature starting with the work of Fredricksen~\cite{Fre70,Fre72,Fre75} who was the first to consider efficient generation of a large set of these cycles.
In~\cite{EtLe84} a very large set of such cycles were generated. Another efficient algorithm to generate a large set of de Bruijn
sequences based on a set of primitive polynomials whose degrees are co-prime was presented by Li, Zeng, Li, Helleseth, and Li~\cite{LZLHL16}.
Many other algorithms with this flavor were
designed later and general frameworks for algorithms to generate one sequence by different
successor rules were given in~\cite{GSWW18} and~\cite{GSWW20}
which extended the framework to other similar structures known as universal cycles.
But, these frameworks are not designed for a construction of a large set of sequences.
As there are a few strategies to generate many sequences we concentrate on one in which the number of generated
sequences is $2^K$ when $K$ bits are stored. Each different assignment to these $K$ bits yields a different maximum length $(n,s)$-sequence.
The amount of required storage is determined by the user subject to the value of $n$.
We start by describing the general method and continue with an algorithm to construct a very large class of maximum length $(n,s)$-sequences.
The method and its algorithm is a modification and a generalization of the one
to generate de Bruijn sequences presented in~\cite{EtLe84}.
While the upper bound on the length of a maximum length $(n,s)$-sequences was based on an analysis of trails in $G_{n-1}(s)$,
the $(n,s)$-sequences which will be generated by this method are cycles in $G_n(s)$.
The basic principle in the method is taking all the $(n,s)$-necklaces and merging them into one cycle.
For a vertex $X=(x_1 x_2 ~ \cdots ~ x_{n-1} x_n)$ in $G_n$, its {\bf \emph{companion}} $X'$
is defined by $X'=(x_1 x_2 ~ \cdots ~ x_{n-1} \bar{x}_n)$, where $\bar{x}$ is the binary complement of $x$.
Two vertex-disjoint cycles in $G_n$, $\cC_1$ and $\cC_2$, are joined together into one cycle, if they contain a pair
of companion vertices $X$ on $\cC_1$ and $X'$ on $\cC_2$. If $Y \rightarrow X$ is an edge on $\cC_1$
and $Z \rightarrow X'$ is an edge on $\cC_2$, then keeping all the edges of $\cC_1$ and $\cC_2$ except for these two edges
and adding the edges $Y \rightarrow X'$ and $Z \rightarrow X$, will merge the two cycles into one cycle.
This is depicted in Fig.~\ref{fig:cross-join}.

\begin{figure}[ht]
\vspace{0.0cm}
\begin{picture}(80,80)(-110,100)
\includegraphics[width=9.0cm]{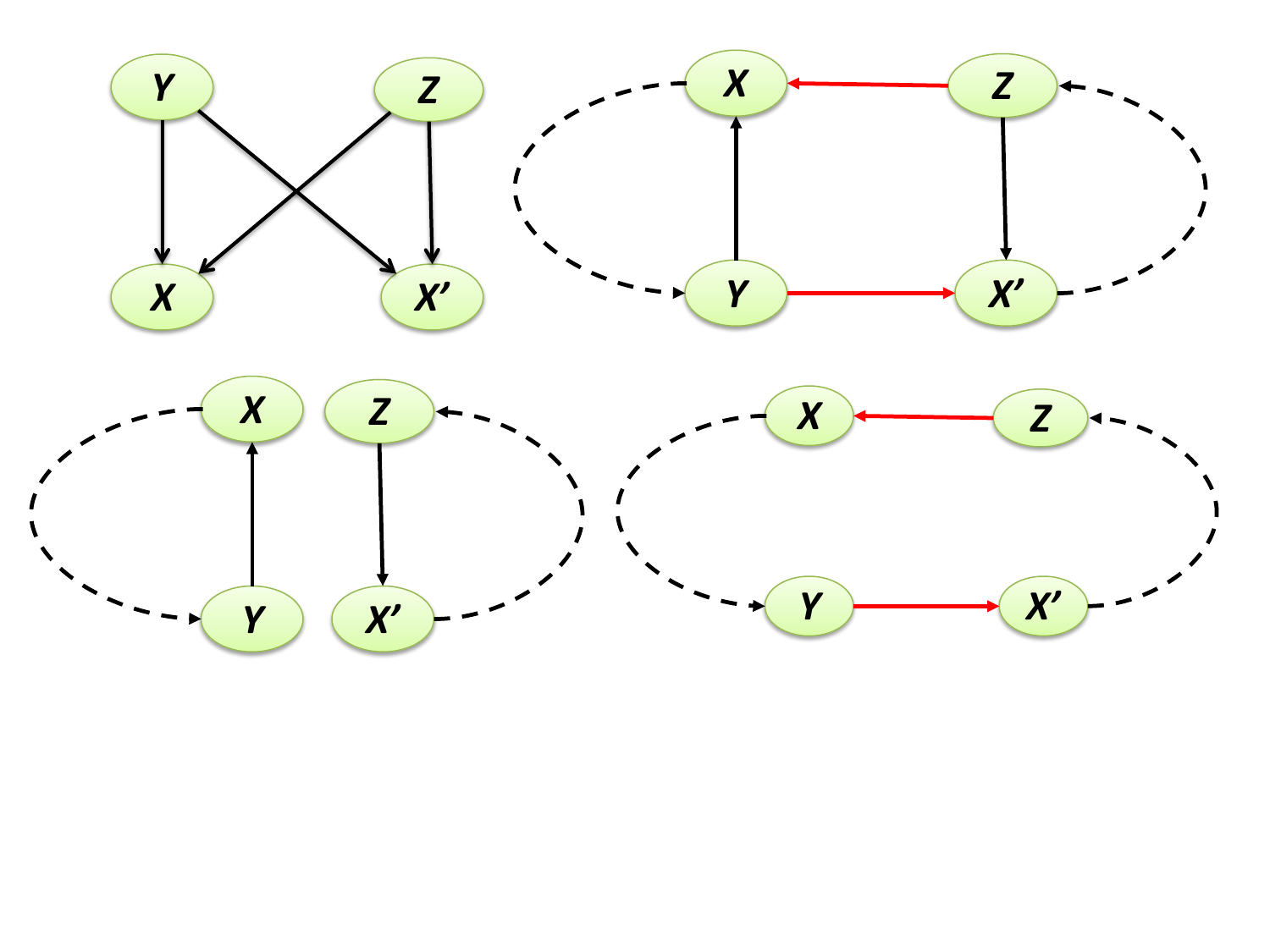}
\end{picture}
\vspace{1.3cm}
\caption{Merging two cycles using companion vertices}
\label{fig:cross-join}
\end{figure}

The merging of all the cycles is done as follows.
The $(n,s)$-necklaces are ordered by their weight from the one with the largest weight to the one with
the smallest weight. We start with the necklace which contains only \emph{ones} and continue to the necklace
with a unique \emph{zero} and start with step 2. At each step, we have a main cycle composed of the $(n,s)$-necklaces that
were merged so far and also the remaining $(n,s)$-necklaces.
At step $i$, $i \geq 2$, we choose on each $(n,s)$-necklace $\cX$ with weight $n-i$ (this is the number of \emph{ones} in each
word of the necklace) a vertex whose representation ends with a \emph{zero}, e.g., $(X0)$. Its companion $(X1)$ on a necklace $\cY$
has weight $n-i+1$ and it is readily verified that $\cY$ does not have a run of more than $s$ \emph{zeros}, i.e., $\cY$ in
also an $(n,s)$-necklace.
Hence, $(X1)$ is on the main cycle, so we can merge the necklace $\cX$ into the main cycle and continue.
This procedure ends when all the $(n,s)$-necklaces are merged into the main cycle.

To apply this procedure, we have to choose on each $(n,s)$-necklace, whose weight is at most $n-2$, a vertex whose
binary representation ends with a \emph{zero}. For example, this vertex can be the one which is the maximum one
as a binary number among all the vertices whose binary representation ends with a \emph{zero}. The companion of this
vertex has a weight greater by one and hence its cycle was merged before and therefore each necklace can be merged in its turn
to the main cycle. This procedure will result in a cycle of $G_n$ which contains all the vertices of the $(n,s)$-necklaces.
Algorithm~\ref{algo:merge} presents the formal steps of this method.

\begin{algorithm}[ht!]
\DontPrintSemicolon

  \KwInput{ Natural numbers $s,n \in \NN$, $s < n-1$}
  \KwOutput{maximum length cyclic $(n,s)$-sequence}
  $i \gets 0$  and $B_i \coloneqq b_i b_{i+1} ~\cdots ~b_{i+n-1} \gets 1^n$\\

   \While{$B_i \neq 0 1~\cdots~1$}
   {
    $Y \gets  b_{i+1} ~ \cdots ~b_{i+n-1} 0$ \\
    \If {the necklace of $Y$ contains a run of more than $s$ zeros}
        {
         $b_{i+n} \gets b_i$ // implies that $b_{i+n} \gets 1$

        }\Else{
         $Z\gets$ cyclic shift of $Y$ with the largest value \\
         \If{$Y = Z$}{
         $b_{i+n} \gets \bar{b}_i$
         } \Else{
         $b_{i+n} \gets b_i$
         }
        }
        $i \gets i+1$
   }

\Return $b_0b_1~\cdots~ b_{i}$ (each $b_i$ is returned when computed)

\caption{merge the $(n,s)$-necklaces using the vertex of largest value in a necklace}
\label{algo:merge}
\end{algorithm}
If we want to form more than one cycle we have to choose on some necklaces two vertices that end in a \emph{zero}.
For $n$ which is not sufficiently large (for example a constant which is not too small)
we will decide on these two necklaces and the two vertices using any chosen criteria. In this case,
$Z$ in the algorithm will be the chosen vertex if $Y$ is on one of these chosen necklaces. Using any simple choice
on a constant number of necklaces, the complexity of this algorithm
is at most $O(n)$ per each bit if we use the algorithm of Booth~\cite{Boo80} to find the cyclic shift with the largest value, which is the
one used for merging in most cases
(in~\cite{Boo80} the smallest value is considered, but the smallest value in a necklace $\cY$ is the complement of the
largest value in $\bar{\cY}$).

If $n$ is large enough and we want to form a larger set of sequences, then
we will choose the necklaces and these vertices that end with a \emph{zero}
based on a storage which is designed in advance and can have different choices as follows.
Choose an integer $k$ which depends on the number of maximum length $(n,s)$-sequences that we want to generate.
On most $(n,s)$-necklaces the word ending in a \emph{zero} with maximum value will be used to merge
the necklace into the main cycle. On $k$ necklaces it will be another word. Each such a word is defined by $\theta (n,s,k)$
free parameters and in total $K={k \cdot \theta (n,s,k)}$ free parameters.
This will imply that $2^K$ maximum length $(n,s)$-sequences will be generated by the algorithm.
The value of $\theta (n,s,k)$ will be increased as $n$ increases, or $s$ increases, or $k$~increases.

For example, let $m=\lceil \log k \rceil$ and consider the ordered set $V=\{ V(i) \}_{i=0}^{k-1}$ of $k$ distinct $(n,s)$-words in $k$
distinct $(n,s)$-necklaces, constructed as follows:
\begin{enumerate}
\item[{\bf (c1)}] The first $(s+1) \lceil \frac{m}{s} \rceil  +1$ bits of $V(i)$ contain \emph{ones}
at positions $(s+1)j$, $0 \leq j \leq \lceil \frac{m}{s} \rceil$ (this guarantee that in these positions there will
not be a run of more than $s$ consecutive \emph{zeros}). In the other
$s \lceil \frac{m}{s} \rceil$ positions (between the positions of the \emph{ones}) $V(i)$ forms the binary representation
of $i$. These $(s+1) \lceil \frac{m}{s} \rceil  +1$ bits are followed by a single \emph{zero}.

\item[{\bf (c2)}] The last $(s+1) \lceil \frac{m}{s} \rceil +4$ bits of $V(i)$ start and end with single \emph{zero}
and between these two \emph{zeros} there are $(s+1) \lceil \frac{m}{s} \rceil +2$ \emph{ones}.
We want that this run of $(s+1) \lceil \frac{m}{s} \rceil +2$ \emph{ones} will be the unique longest run of \emph{ones}
in the word.

\item[{\bf (c3)}] For each $j \geq 1$, in position $\left( (s+1) \lceil \frac{m}{s} \rceil +2 \right)j-1$,
where $\left( (s+1) \lceil \frac{m}{s} \rceil +2 \right)j \leq n-(s+1) \lceil \frac{m}{s} \rceil -3$, there is a \emph{zero}.
This guarantee that the requirement for the longest run of \emph{ones} in {\bf (c2)} is satisfied.

\item[{\bf (c4)}] Between the first $(s+1) \lceil \frac{m}{s} \rceil  +1$ bits and the last $(s+1) \lceil \frac{m}{s} \rceil +4$ bits
each $s+1$ bits we have a \emph{one} (at the end of these $s+1$ positions).
If there is a collision between these \emph{ones} and the \emph{zeros} of {\bf (c3)}, then at the place of the collision we will have
the \emph{one} after $s$ bits and not after $s+1$ bits and after this \emph{one} the \emph{zero} (to avoid such a collision).
This guarantees that there will not be a run of more than $s$ \emph{zeros} in this section of the word.

\item[{\bf (c5)}] The rest of the positions which were not specified are free positions in which there are
arbitrary assignments of \emph{zeros} and \emph{ones} (free parameters).
\end{enumerate}

We first note that each word of $V$ is an $(n,s)$-word from an $(n,s)$-necklace since the word starts with a \emph{one} (by {\bf (c1)})
and ends with an isolated \emph{zero} (by {\bf (c2)}) and in each $s+1$ consecutive positions there is at least one \emph{one}
by {\bf (c1)}, {\bf (c2)}, {\bf (c3)}, and {\bf (c4)}.
Note further that each $V(i)$ has a unique run of exactly $(s+1)\left( \lceil \frac{m}{s} \rceil \right) +2$ \emph{ones}
which ends just one position before the last position. All the other runs of \emph{ones} in each $V(i)$ are shorter.
These properties are guaranteed by {\bf (c2)} and {\bf (c3)}.
This implies that two such words from $V$ cannot be a shift of the other and also all the words are from necklaces of full-order.
In the first $(s+1)\left( \lceil \frac{m}{s} \rceil \right) +1$ bits
there is the binary representation of $i$ (by {\bf (c1)}) with separations of \emph{ones} to satisfy the constraint of no more than
$s$ consecutive \emph{zeros} (again by {\bf (c1)}). This will enable to find if the word that we consider is from the set $V$.
There is a flexibility in the number of free parameters that can be increased as $s$ get larger and/or if we choose a larger $k$.

The algorithm will have similar steps to those of Algorithm~\ref{algo:merge}.
It will first try to see if $Y$ is contained in a necklace with another member of $V$.
In this case the necklace is joined to the main cycle only if $Y$ is the associated word of $V$.
If no word of $V$ is contained in the same necklace as $Y$, then it examines whether $Y$ is a word ending in a \emph{zero} with the maximum value
in its necklace. Note, that in $V$ since the largest run of \emph{ones} is at the end of the word followed by a unique \emph{zero},
it follows that no $V(i)$ can be the word which ends in a \emph{zero} and of maximum value in its necklace
(this longest run of \emph{ones} starts the word of the maximum value in the necklace).
Algorithm~\ref{algo:mergeWithV} presents the formal steps of the described algorithm.

\begin{algorithm}[ht!]
\DontPrintSemicolon

  \KwInput{ Natural numbers $s,n,k \in \NN$, $s < n-1$ and an ordered set $V$}
  \KwOutput{maximum length cyclic $(n,s)$-sequence}
  $i \gets 0$  and $B_i \coloneqq b_i b_{i+1}~\cdots~b_{i+n-1} \gets 1^n$\\

   \While{$B_i \neq 0 1~\cdots~1$}
   {
    $Y \gets b_{i+1}~\cdots~b_{i+n-1}0$ \\
    \If {the necklace of $Y$ contains a run of more than $s$ zeros}
        {
         $b_{i+n} \gets b_i$ // implies that $b_{i+n} \gets 1$

        }\Else{
         \If {$Y$ and some word of $V$ are on the same necklace}
          {\If {Y is a vertex in $V$ } {$b_{i+n} \gets \bar{b}_i$}
          \Else {$b_{i+n} \gets b_i$} }
        \Else{
         $Z\gets$ cyclic shift of $Y$ with the largest value \\
         \If{$Y = Z$}{
         $b_{i+n} \gets \bar{b}_i$
         } \Else{
         $b_{i+n} \gets b_i$
         }
        } }
        $i \gets i+1$
   }

\Return $b_0b_1~\cdots ~b_{i}$ (each $b_i$ is returned when computed)

\caption{merge the $(n,s)$-necklaces with a set $V$}
\label{algo:mergeWithV}
\end{algorithm}

\begin{theorem}
\label{thm:alg_merge_ns}
$~$
\begin{enumerate}
\item[{\bf (a)}] For a given choice of $k$, there are $2^K$, where $K=k \cdot \theta (n,s,k)$, distinct choices for the set
$V$ of stored vertices to merge
the $(n,s)$-necklaces. Thus, Algorithm~\ref{algo:mergeWithV} can be used to produce $2^K$ distinct maximum length $(n,s)$-sequences.

\item[{\bf (b)}] The working space that the procedure requires to produce the next bit of a maximum length $(n,s)$-sequence
is $O(\textup{maximum}\{n,K\})$.

\item[{\bf (c)}] The complexity of the algorithm to find the next bit is $O(n)$.
\end{enumerate}
\end{theorem}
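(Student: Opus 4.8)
The plan is to treat the three parts separately, with almost all of the effort in \textbf{(a)}, which I would split into (i)~verifying that every $V$ built by \textbf{(c1)}--\textbf{(c5)} is admissible and makes Algorithm~\ref{algo:mergeWithV} output a maximum length $(n,s)$-sequence, (ii)~counting the choices, and (iii)~showing that $V\mapsto(\textrm{output sequence})$ is injective. For (i) I would lean on the discussion preceding the theorem: \textbf{(c1)}--\textbf{(c4)} force each word $V(i)$ to be an $(n,s)$-word on a \emph{full-order} $(n,s)$-necklace, the encoding of $i$ in the prefix together with the unique longest run of \emph{ones} just before the last bit makes the $k$ words pairwise inequivalent under cyclic shift, and no $V(i)$ is the maximum-value word (ending in a \emph{zero}) of its necklace; hence each $V(i)$ is a legitimate non-default merge vertex. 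So Algorithm~\ref{algo:mergeWithV} still absorbs every $(n,s)$-necklace into one cycle --- at step $i$ the companion of the merge vertex chosen on a weight-$(n-i)$ necklace has weight $n-i+1$, hence already lies on the main cycle --- and its output is a cycle through all $\ell_{n,s}$ vertices of the $(n,s)$-necklaces, which by \Tref{thm:upper_bound_length} is a maximum length $(n,s)$-sequence. For (ii), \textbf{(c1)}--\textbf{(c4)} fix a number of the $n$ coordinates of $V(i)$ depending only on $n,s,k$ (for $n$ large relative to $s,k$ this number is only $O(s\log k)$), leaving $\theta(n,s,k)$ free coordinates in \textbf{(c5)}, so there are $2^{k\,\theta(n,s,k)}=2^{K}$ admissible ordered sets $V$.

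For (iii), the main obstacle, I would exhibit, for any $V\neq V'$, a single window at which the two runs output different bits. Pick $i$ with $V(i)\neq V'(i)$ and write $u$ for the length-$(n-1)$ prefix of $V(i)$; since $V(i)$ ends in a \emph{zero}, appending $0$ to $u$ recovers $V(i)$. Consider the window $(0\,u)$. In the run on $V$ the tentative successor word is $Y=V(i)$, which lies in $V$, so the flip branch of the successor rule fires and $(0\,u)$ is followed by $1$, routing the walk to the companion $(u\,1)$ of $V(i)$ rather than to $V(i)$ itself. In the run on $V'$ the necklace of $V(i)$ carries no word of $V'$ --- it is full-order and its only word of the prescribed shape is $V(i)$, while $V(i)\notin V'$ since $V(i)$ encodes index $i$ yet differs from $V'(i)$ and, by the index encoding, from every $V'(j)$ with $j\neq i$ --- and $V(i)$ is not the maximum-value word of its necklace, so no flip occurs and $(0\,u)$ is followed by $0$. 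Since $(0\,u)$ is itself an $(n,s)$-necklace word it occurs as a window of both output sequences (each being a cycle through all $\ell_{n,s}$ such words), so the two sequences differ; equivalently, $V(i)$ is preceded by $(1\,u)$ in the first sequence and by $(0\,u)$ in the second. Hence the $2^{K}$ admissible sets $V$ yield $2^{K}$ distinct maximum length $(n,s)$-sequences. The delicate point here is exactly that flipping even one free bit of some $V(i)$ provably relocates a whole necklace's merge in a way no later step can silently undo, and the recognizability of the $V$-shape plus the maximality bookkeeping of \textbf{(c1)}--\textbf{(c4)} are what guarantee it.

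For \textbf{(b)} and \textbf{(c)} I would note that the non-free coordinates of each $V(i)$ are given by the closed forms \textbf{(c1)}--\textbf{(c4)} in $i,n,s,k$, so the procedure stores only the $K$ free bits (as $k$ blocks of $\theta(n,s,k)$ bits), the sliding window $b_{i+1}\cdots b_{i+n-1}$, and $O(1)$ scratch words of length $n$; the one step consulting $V$ --- deciding whether $Y$ lies on the same necklace as a word of $V$ --- computes a canonical rotation of $Y$, tests it against the shape of \textbf{(c1)}--\textbf{(c4)}, reads off the encoded index $j$, regenerates $V(j)$ from the stored bits, and compares, all in $O(n)$ extra space, giving working space $O(\textup{maximum}\{n,K\})$. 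For the per-bit running time, every operation in the loop body is linear: forming $Y$; one cyclic scan to test whether the necklace of $Y$ has a run of more than $s$ \emph{zeros}; the canonical rotation and the largest cyclic shift, each via Booth's algorithm~\cite{Boo80} in $O(n)$ (the largest shift of $Y$ being the complement of the smallest shift of $\bar Y$); the shape test and, when it succeeds, the regeneration and comparison of $V(j)$; and the single-bit output. Summing gives $O(n)$ per produced bit.
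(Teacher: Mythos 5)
Your proposal is correct and follows essentially the same route as the paper: the same count of the free parameters left by \textbf{(c1)}--\textbf{(c5)}, the same cycle-joining correctness argument (each non-default merge vertex $V(i)$ lies on a full-order $(n,s)$-necklace, is recognizable, and is not the necklace's maximal word ending in a \emph{zero}), and the same $O(\max\{n,K\})$-space, $O(n)$-per-bit analysis via the run-length scan, the shape test with index extraction, and Booth's algorithm; your explicit window-based injectivity argument merely fleshes out the paper's one-line assertion that different sets $V$ yield different sequences. (One tangential slip: the number of coordinates fixed by \textbf{(c1)}--\textbf{(c4)} is on the order of $n/s$ because of the periodic constraints \textbf{(c3)}--\textbf{(c4)}, not $O(s\log k)$, but this does not affect the count $2^{K}$ or any other claim.)
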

\begin{proof}
The proof of Theorem~\ref{thm:alg_merge_ns} is similar to the one proved in~\cite{EtLe84}.
Each different choice of the ordered set $V$ implies different choices for the vertices via which the merging is performed.
Different choices imply different $(n,s)$-sequences. The working space consists of the $K$ free parameters
of the set $V$ and a constant number $n$ bits
to store the current $n$ bits and $n$ bits to store the necklace and the current shift for comparison.
Hence, the total  working space is $O(\textup{maximum}\{n,K\})$.

As for the time complexity, there are three steps (lines) in the algorithm which are not trivial assignment or comparison.
To compute if there is a run with more than $s$ \emph{zeros} in $Y$ we scan $Y$ and sum the number of consecutive \emph{zeros}
along the necklace. This is done in $O(n)$ time per the $n$ bits of $Y$.
To find if $Y$ and some word of $V$ are on the same necklace, first we compute the length of the
largest run of \emph{ones} in $Y$ in $O(n)$ time per $n$ bits. If this longest run is not of length $(s+1) \lceil \frac{m}{s} \rceil +2$,
then by {\bf (c2)} $Y$ and $V$ are not on the same necklace. Similarly, it there are two such runs of length
$(s+1) \lceil \frac{m}{s} \rceil +2$ in $Y$, then by {\bf (c2)} $Y$ and $V$ are not on the same necklace.
If there is exactly one run of length $(s+1) \lceil \frac{m}{s} \rceil +2$ in $Y$, then we shift $Y$ in a way that
this run with a \emph{zero} before and a \emph{zero} after it will be  at the end of the word. Again, this is done
in at most $O(n)$ time per the $n$ bits.
Let $U$ be the obtained word. Now, if in the first $(s+1) \lceil \frac{m}{s} \rceil  +1$ bits
of $U$ there are \emph{ones} as required of {\bf (c1)}, then the other values in these $(s+1) \lceil \frac{m}{s} \rceil  +1$ bits
indicated the exact entry of $V$ which might be equal $U$ (the shift of $Y$).
Again, for this no more than $O(n)$ time per $n$ bits is required. We compare this entry of $V$ with $U$ to
determine if $Y$ and this word of $V$ are on the same necklace. Again, this is done in $O(n)$ time per $n$ bits.
Finally, to find the cyclic shift of $Y$ with the largest value in the necklace has complexity $O(n)$ due to~\cite{Boo80}.
Thus, the whole process will take no more than $O(n)$ time complexity per a computed bit.
\end{proof}

How large $n$ should be to make this algorithm effective? Recall, that $m=\lceil \log k \rceil$.
The requirement of {\bf (c1)} is $(s+1) \lceil \frac{m}{s} \rceil  +1$ bits, {\bf (c2)}
requires $(s+1) \lceil \frac{m}{s} \rceil  +4$ bits,
{\bf (c3)} requires at most $\frac{n-2(s+1) \lceil \frac{m}{s} \rceil}{(s+1) \lceil \frac{m}{s} \rceil  +2}$ bits,
and {\bf (c4)} requires at most $\frac{n-2(s+1) \lceil \frac{m}{s} \rceil}{s+1} +1$ bits.
This implies that $n$ must be very large, especially if $s$ is small or $k$ is large.
For example, if $s=1$, then only as little as $\frac{n}{2} - 2m -\frac{n-4m}{2m+2} -6$ bits in each word of $V$ are left for the
free parameters. However, when $n$ is sufficiently large the number of generated $(n,s)$-sequences is considerably large.

Other algorithms for generating de Bruijn sequences can be also adapted to merge all the $(n,s)$-necklaces for constructing $(n,s)$-sequences.
The construction of the maximum length $(n,s)$-sequences is based on merging all the $(n,s)$-necklaces.
There are $(n,s)$-sequences that contain $(n,s)$-words that are not contained in $(n,s)$-necklaces.
But, these sequences cannot be of maximum length by Corollary~\ref{cor:exact_neck}.

Finally, we consider a maximum length path in $G_n(s)$. This is done by adding a short path with $(n,s)$-words,
which are not contained in $(n,s)$-necklaces, to the maximum length cycle $\cC$ in $G_n(s)$.
Consider an $(n,s)$-word $(10^s X)$, for some $s$-ones string $X$, whose length is $n-1-s$, on an $(n,s)$-necklace
with the edge $(10^s  X) \rightarrow (0^s X 1)$ of the associated necklace on the cycle $\cC$. We start the maximum length path
with the vertex $(10^s X)$ and continue with the edge $10^s X1$ to the vertex $(0^s X1)$ as on the cycle $\cC$
until we reach the vertex $(10^s X)$ again. So far the path is the same as the
cycle $\cC$. Now, we continue with the $s$ edges,
$$
(10^s X) \rightarrow (0^s X0) \rightarrow ~ \cdots ~ \rightarrow (0X 0^s) ~,
$$
to obtain a path of maximum length in $G_n(s)$, i.e., $s$ edges and $s$ vertices were added to
the maximum length cycle $\cC$ and the outcome is a path in $G_n(s)$ which attain the bound of Theorem~\ref{thm:upper_bound_path}.
All the added vertices are on a necklace $\cX$ which is not an $(n,s)$-necklace, but the added vertices are
represented by the $(n,s)$-words which are contained in $\cX$. Note, that the added sub-path is to
the end of the cycle $\cC$, while in the upper bound it was considered in its beginning, but this is equivalent.

\begin{corollary}
For any $1 \leq s < n-1$ we have that a maximum length path in $G_n(s)$ has length $\ell_{n,s} +s$.
\end{corollary}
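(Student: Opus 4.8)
The plan is to derive this corollary as a direct consequence of the two theorems already established. By Theorem~\ref{thm:upper_bound_path}, a trail of maximum length in $G_{n-1}(s)$ has length at most $\ell_{n,s}+s$ when $s<n-1$; since any path in $G_n(s)$ corresponds to a trail in $G_{n-1}(s)$ (each $n$-tuple being an edge of $G_{n-1}(s)$), the length of a maximum length path in $G_n(s)$ is at most $\ell_{n,s}+s$. So the only thing left is the matching lower bound, i.e.\ to exhibit a path in $G_n(s)$ of length exactly $\ell_{n,s}+s$.

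First I would invoke Theorem~\ref{thm:alg_merge_ns} (equivalently, the constructions of~\cite{GaSa18,SWW16} cited just after Theorem~\ref{thm:upper_bound_length}): there exists a cycle $\cC$ in $G_n(s)$ containing exactly all the words of the $(n,s)$-necklaces, so $\cC$ has length $\ell_{n,s}$. Next I would select a convenient attachment point on $\cC$: pick any $s$-ones string $X$ of length $n-1-s$ and consider the $(n,s)$-word $(10^sX)$, which lies on the $(n,s)$-necklace $[10^sX]$ and hence is a vertex of $\cC$; the edge $(10^sX)\rightarrow(0^sX1)$ is the edge of that necklace used on $\cC$. (Such an $X$ exists whenever $s<n-1$, since then $n-1-s\ge 1$ and one may take $X=1^{n-1-s}$.)

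The construction of the path then follows the description in the paragraph preceding the statement: start the path at the vertex $(10^sX)$, traverse all of $\cC$ so as to return to $(10^sX)$, and then append the $s$ edges
$$
(10^s X) \rightarrow (0^s X0) \rightarrow ~\cdots~ \rightarrow (0X 0^s)~.
$$
I would check that each of these appended vertices is an $(n,s)$-word: they all have the form $0^aX0^b$ with $a+b=s$, and since $X$ is an $s$-ones string this word has no run of more than $s$ zeros; in particular the edges traversed are valid edges of $G_n(s)$. These $s$ vertices lie on the necklace $[0^{s+1}X]$, which is \emph{not} an $(n,s)$-necklace, so by Corollary~\ref{cor:exact_neck} none of them occurs on $\cC$; hence the walk is genuinely a path, with no repeated vertex, and it has $\ell_{n,s}+s$ edges. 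Combining with the upper bound above gives equality.

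The main point to be careful about — though it is routine rather than deep — is the verification that appending the deleted-path edges does not reuse a vertex of $\cC$ and does not create a run of more than $s$ zeros; both are handled by the observations that $X$ is an $s$-ones string and that $[0^{s+1}X]$ is not an $(n,s)$-necklace. Everything else is bookkeeping on top of Theorems~\ref{thm:upper_bound_path} and~\ref{thm:alg_merge_ns}.
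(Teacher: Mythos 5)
Your proposal is correct in substance and follows essentially the same route as the paper: the upper bound is exactly Theorem~\ref{thm:upper_bound_path}, and the lower bound is obtained, as in the paper, by traversing the maximum length cycle $\cC$ starting from a vertex $(10^sX)$ and then appending the $s$ edges $(10^sX)\rightarrow(0^sX0)\rightarrow\cdots\rightarrow(0X0^s)$, whose vertices lie on the non-$(n,s)$-necklace $[0^{s+1}X]$ and are therefore absent from $\cC$ by Corollary~\ref{cor:exact_neck}. One small slip: your walk revisits the starting vertex $(10^sX)$, so ``no repeated vertex'' is not literally true --- what you (and the paper) construct is a trail with $\ell_{n,s}+s$ edges in which only the initial vertex recurs, which is precisely the object the paper also calls a path here.
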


\section{Enumeration of the Number of Sequences}
\label{sec:num_seq}

In this section, we will consider enumeration associated with the number of $(n,s)$-words,
$(n,s)$-necklaces, and $(n,s)$-sequences related to the bounds and the
construction that were presented in the previous sections. For this purpose, we define the two well-known important concepts,
the rate and the redundancy of the set of sequences discussed in this paper.
The {\bf \emph{rate}} of an $(n,s)$-sequence $\cS$ with maximum length $N$ will be defined as
$$
\cR(n,\cS)=\limsup_{n \rightarrow \infty} \frac{\log N}{n} ~.
$$
The {\bf \emph{redundancy}} of an $(n,s)$-sequence $\cS$ of length $N$ will be defined as
$$
\text{red}(\cS) = n - \log N ~.
$$
The rate and redundancy are measures for the evaluating the amount of used bit information for the sequence (ratio for the rate
and difference for the redundancy.

The scheme in~\cite{ZQLR21} which was discussed in Section~\ref{sec:intro} uses a de Bruijn sequence of length~$2^n$, where
each \emph{one} is simulated by `11' and each \emph{zero} by `10'. This scheme
considers a sequence of length $2^{n+1}$ with windows of length $2n$ for each required $n$-tuple. Hence, it has a high
redundancy of $2n - \log 2^{n+1} =n-1$ and its rate is 0.5~. Our $(n,s)$-sequences reduce this redundancy
and increase this rate quite dramatically.

What is the exact length of a maximum length $(n,s)$-sequence? The computation of this size with a closed formula can be
done similarly to other computations associated with constrained codes~\cite{Imm90,MRS01}, and asymptotic
computations are done in the same way. Some of the computations can
be done more accurately. For example, we will present the computations for $s=1$.
Let $g_n$~be the number of $(n,1)$-words. The value of $g_n$ is the well-known Fibonacci number which appears
extensively in the theory of constraints codes.

\begin{lemma}
\label{lem:num_n_1}
The number of $(n,1)$-words is
$$
g_n = h_{n,1} = \frac{\left( \frac{1 + \sqrt{5}}{2} \right)^{n+2} - \left( \frac{1 - \sqrt{5}}{2} \right)^{n+2} }{\sqrt{5}}~.
$$
\end{lemma}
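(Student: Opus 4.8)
The plan is to set up a recurrence for $g_n = h_{n,1}$, the number of binary words of length $n$ with no run of two or more consecutive \emph{zeros}, and then solve it in closed form. First I would observe that every $(n,1)$-word either ends in a \emph{one}, in which case removing that last bit leaves an arbitrary $(n-1,1)$-word, or ends in a \emph{zero}; in the latter case the second-to-last bit must be a \emph{one} (else there would be a run $00$), so removing the last two bits leaves an arbitrary $(n-2,1)$-word. Hence $g_n = g_{n-1} + g_{n-2}$ for $n \geq 3$. The base cases are checked directly: $g_1 = 2$ (the words $0$ and $1$) and $g_2 = 3$ (the words $01$, $10$, $11$; only $00$ is excluded). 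Comparing with the Fibonacci numbers $F_k$ normalized so that $F_1 = F_2 = 1$, we see $g_n = F_{n+2}$, since $F_3 = 2$ and $F_4 = 3$.

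Next I would invoke the standard closed form (Binet's formula) for the Fibonacci recurrence: the characteristic equation $x^2 = x + 1$ has roots $\varphi = \frac{1+\sqrt 5}{2}$ and $\psi = \frac{1-\sqrt 5}{2}$, and the solution with $F_1 = F_2 = 1$ is $F_k = \frac{\varphi^k - \psi^k}{\sqrt 5}$. Substituting $k = n+2$ yields
$$
g_n = \frac{\left( \frac{1+\sqrt 5}{2} \right)^{n+2} - \left( \frac{1-\sqrt 5}{2} \right)^{n+2}}{\sqrt 5},
$$
which is exactly the claimed formula. Alternatively, one can avoid quoting Binet's formula and instead verify the closed form directly by induction: check that the stated expression satisfies the recurrence (immediate, since both $\varphi^{n+2}$ and $\psi^{n+2}$ individually satisfy it termwise) and matches the two base values $n=1,2$ by a short computation with $\varphi^2 = \varphi + 1$, $\psi^2 = \psi+1$, $\varphi + \psi = 1$, $\varphi - \psi = \sqrt 5$.

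There is no real obstacle here; the only point requiring a moment of care is getting the index shift right, i.e., confirming that the count of $(n,1)$-words equals $F_{n+2}$ rather than $F_{n+1}$ or $F_{n}$, which is why I would pin down the two small cases $n = 1$ and $n = 2$ explicitly. The identity $g_n = h_{n,1}$ asserted in the statement is just the definition of $h_{n,s}$ at $s=1$, so nothing extra is needed there. One mild subtlety worth a sentence: this counts words, with no constraint on \emph{cyclic} runs, so it is genuinely the Fibonacci count and not the (Lucas-type) count that arises when wrap-around runs of \emph{zeros} are also forbidden — that cyclic version will presumably be handled separately when $\ell_{n,1}$ is computed.
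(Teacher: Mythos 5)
Your proposal is correct and follows essentially the same route as the paper: establish the recurrence $g_n = g_{n-1} + g_{n-2}$ with $g_1=2$, $g_2=3$, and solve it via Binet's formula with roots $\varphi$ and $\psi$. The only difference is that you spell out the last-bit case analysis behind the recurrence, which the paper dismisses as well known.
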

\begin{proof}
It is well-known and easily computed that $g_n = g_{n-1} + g_{n-2}$, where $g_1 = 2$ and $g_2 =3$.
The solution for this recurrence is $g_n = \frac{\varphi^{n+2} - \psi^{n+2}}{\sqrt{5}}$,
where $\varphi = \frac{1 + \sqrt{5}}{2}$ and $\psi = \frac{1 - \sqrt{5}}{2}$.
\end{proof}

\begin{lemma}
\label{lem:n1words_in_n1neck}
The number of $(n,1)$-words in the $(n,1)$-necklaces is $\ell_{n,1}=g_n - g_{n-4}$.
\end{lemma}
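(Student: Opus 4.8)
The plan is to compare $g_n = h_{n,1}$ and $\ell_{n,1}$ directly by identifying exactly which length-$n$ binary words each quantity counts. First I would observe that a binary word of length $n$ lies on an $(n,1)$-necklace if and only if, read \emph{cyclically}, it contains no run of two or more consecutive \emph{zeros}. Indeed, the cyclic sequence of run lengths is invariant (up to rotation) under cyclic shift, so ``being an $(n,1)$-necklace'' is a property shared by all $n$ words of the necklace; conversely any word with no cyclic run of more than one \emph{zero} generates such a necklace. Hence $\ell_{n,1}$ equals the number of length-$n$ binary words that avoid $00$ cyclically, while $g_n = h_{n,1}$ is the number of length-$n$ binary words that avoid $00$ \emph{linearly}. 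Every word of the first kind is a word of the second kind, so it suffices to enumerate the difference.

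The difference consists of words $x_0 x_1 \cdots x_{n-1}$ with no linear occurrence of $00$ but with a cyclic one. I would argue that any such word must satisfy $x_0 = x_{n-1} = 0$: the zeros of a cyclic run of length $\geq 2$ occupy a cyclic interval of positions, and if that interval contained either two consecutive positions both $\le n-1$ or included two consecutive positions among $\{n-1,0,1\}$ beyond the single pair $\{n-1,0\}$, a linear $00$ would appear, which is excluded. So the run is exactly $\{n-1,0\}$, forcing $x_{n-1}=x_0=0$, hence $x_1 = 1$ and $x_{n-2} = 1$ by the no-linear-$00$ condition, and the word has the shape $0\,1\,(x_2 \cdots x_{n-3})\,1\,0$.

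Next I would exhibit a bijection between these words and the $(n-4,1)$-words, given by $x_0\cdots x_{n-1} \mapsto x_2 x_3 \cdots x_{n-3}$. Since $x_1 = x_{n-2} = 1$, the full word avoids $00$ linearly if and only if the middle block $x_2\cdots x_{n-3}$, of length $n-4$, avoids $00$ linearly; and every $(n-4,1)$-word extends uniquely to a word of the prescribed shape. Therefore the number of words counted by $g_n$ but not by $\ell_{n,1}$ is exactly $g_{n-4}$ (interpreting $g_j$ for small $j$ via the same recurrence $g_j = g_{j-1}+g_{j-2}$, e.g.\ $g_0 = 1$, so the formula is valid throughout the relevant range), which gives $\ell_{n,1} = g_n - g_{n-4}$.

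The whole argument is elementary; the only step that needs genuine care is the claim that a cyclic $00$ without any linear $00$ can arise only from the single wrap-around pattern $x_{n-1}=x_0=0$ — one must rule out longer cyclic runs and verify no second violation is hidden inside — and keeping the indexing of the small base cases of $g_j$ consistent so that the subtraction $g_n - g_{n-4}$ is legitimate for the smallest admissible $n$.
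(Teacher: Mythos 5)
Your proposal is correct and follows essentially the same route as the paper: both count the $(n,1)$-words whose necklace fails the constraint by noting they must have the form $01\,\cdots\,10$ with an $(n-4,1)$-word in the middle, giving $g_{n-4}$ excluded words and hence $\ell_{n,1}=g_n-g_{n-4}$. Your write-up simply spells out in more detail (the wrap-around run argument and the bijection) what the paper states tersely.
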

\begin{proof}
A necklace that is not an $(n,1)$-necklace contains $(n,1)$-words if it starts with 01 and ends with 10.
In between we have an $(n-4,1)$-word and hence by definition there are $g_{n-4}$ such words.
Thus, the number of $(n,1)$-words in the $(n,1)$-necklaces is $\ell_{n,1}=g_n - g_{n-4}$.
\end{proof}

Now we can combine Theorem~\ref{thm:upper_bound_length} and the constructions in Section~\ref{sec:construct_max}
with Lemma~\ref{lem:n1words_in_n1neck} to obtain the following conclusion.

\begin{corollary}
The maximum length of an $(n,1)$-sequence is $g_n - g_{n-4}$.
\end{corollary}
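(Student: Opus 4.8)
The plan is to assemble this statement from three ingredients already in hand, so the proof is essentially bookkeeping rather than new work. First I would invoke Theorem~\ref{thm:upper_bound_length} specialized to $s=1$, which gives the upper bound $m_{n,1} \leq \ell_{n,1}$: any cyclic $(n,1)$-sequence, viewed as a circuit in $G_{n-1}(1)$, cannot exceed the number of edges in the $(n,1)$-necklaces. Second, I would recall that the matching lower bound holds because the merging construction of Section~\ref{sec:construct_max} (equivalently, the classical constructions of~\cite{GaSa18,SWW16}) produces a cyclic $(n,1)$-sequence containing exactly all the words of the $(n,1)$-necklaces, hence of length $\ell_{n,1}$. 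Together these yield $m_{n,1} = \ell_{n,1}$.

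The remaining step is purely enumerative: substitute the closed form for $\ell_{n,1}$. By Lemma~\ref{lem:n1words_in_n1neck} we have $\ell_{n,1} = g_n - g_{n-4}$, where $g_n = h_{n,1}$ is the Fibonacci-type count of $(n,1)$-words from Lemma~\ref{lem:num_n_1}. Chaining $m_{n,1} = \ell_{n,1} = g_n - g_{n-4}$ completes the argument.

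There is no genuine obstacle here — the content has all been proved in the preceding sections, and the corollary merely records the $s=1$ instance together with the evaluation of $\ell_{n,1}$. If anything needs a word of care, it is only making explicit that Theorem~\ref{thm:upper_bound_length} and the Section~\ref{sec:construct_max} construction are both valid in the range $1 \leq s < n-1$ (and that $s=1$ with $n \geq 3$ lies in this range), so that the equality $m_{n,1} = \ell_{n,1}$ genuinely applies before one plugs in $g_n - g_{n-4}$. Beyond that, the proof is a one-line citation chain.
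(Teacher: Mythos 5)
Your proposal is correct and follows the same route as the paper: the corollary is obtained by combining Theorem~\ref{thm:upper_bound_length} (giving $m_{n,1}\leq\ell_{n,1}$), the constructions of Section~\ref{sec:construct_max} (giving the matching lower bound), and Lemma~\ref{lem:n1words_in_n1neck} to evaluate $\ell_{n,1}=g_n-g_{n-4}$. Your added remark about the valid range of $s$ is a reasonable touch of care but does not change the argument.
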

\begin{corollary}
The maximum length of an acyclic $(n,1)$-sequence is $g_n - g_{n-4}+n$, i.e., a trial of length $g_n - g_{n-4}+1$ in $G_{n-1}(s)$.
\end{corollary}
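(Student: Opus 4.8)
The plan is to obtain this as a direct specialization of the general acyclic bound to the case $s=1$, combined with the explicit count of $(n,1)$-words lying in $(n,1)$-necklaces. First I would note that for every $n\ge 3$ we have $1=s<n-1$, so Theorem~\ref{thm:upper_bound_path} applies: a maximum length trail $\cP$ in $G_{n-1}(1)$ has length at most $\ell_{n,1}+1$, and the associated acyclic $(n,1)$-sequence has length at most $\ell_{n,1}+1+n-1=\ell_{n,1}+n$. Substituting $\ell_{n,1}=g_n-g_{n-4}$ from Lemma~\ref{lem:n1words_in_n1neck} turns this into the bound $g_n-g_{n-4}+n$ for the sequence, equivalently $g_n-g_{n-4}+1$ for the trail, since an acyclic sequence of length $L$ over $n$-bit windows corresponds to a trail with $L-(n-1)$ edges in $G_{n-1}$.

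For the matching lower bound I would invoke the construction of Section~\ref{sec:construct_max}: Algorithm~\ref{algo:merge} (or Algorithm~\ref{algo:mergeWithV}) merges all the $(n,1)$-necklaces into a single cycle, producing a cyclic $(n,1)$-sequence of length exactly $\ell_{n,1}$ that contains precisely the words of the $(n,1)$-necklaces. As explained in the paragraph preceding the final corollary of that section, one then appends a single deleted path of length $s=1$ — for instance a path $(10X)\to(0X0)$ obtained from an $(n,1)$-word $(10X)$ on an $(n,1)$-necklace — to obtain an acyclic $(n,1)$-sequence of length $\ell_{n,1}+s+n-1=\ell_{n,1}+n=g_n-g_{n-4}+n$. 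Since this meets the upper bound from the previous paragraph, equality holds, and this simultaneously yields the companion statement that a maximum length trail in $G_{n-1}(1)$ has length $g_n-g_{n-4}+1$.

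I do not expect a genuine obstacle here; the corollary is essentially a bookkeeping consequence of results already in hand. The only points requiring a moment of care are verifying the hypothesis $s<n-1$ (which is what makes the additive term equal to $s$ rather than $0$, cf. the parenthetical in Theorem~\ref{thm:upper_bound_path}) and hence $n\ge 3$, and consistently translating among the three equivalent quantities: the length $g_n-g_{n-4}+n$ of the acyclic sequence, the length $g_n-g_{n-4}+1$ of the corresponding trail in $G_{n-1}(1)$, and the implicit length $g_n-g_{n-4}$ of the underlying maximum length cyclic $(n,1)$-sequence from the preceding corollary.
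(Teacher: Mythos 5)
Your proposal is correct and follows essentially the same route as the paper, which states this corollary as a direct combination of Theorem~\ref{thm:upper_bound_path} (specialized to $s=1$), the merging construction of Section~\ref{sec:construct_max} with the appended deleted path, and Lemma~\ref{lem:n1words_in_n1neck} giving $\ell_{n,1}=g_n-g_{n-4}$. The bookkeeping between trail length and acyclic sequence length, and the caveat $s<n-1$, are handled just as in the paper.
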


We can now have as a consequence that even a maximum length $(n,1)$-sequence improves the rate of the naive scheme which
was used before with de Bruijn sequences for a quantum key distribution scheme.
\begin{corollary}
The rate of maximum length $(n,1)$ sequences is 0.6942 and their redundancy is $0.3058n$.
\end{corollary}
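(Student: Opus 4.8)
The plan is a direct asymptotic evaluation of the exact length formula already established. First I would recall that, by Theorem~\ref{thm:upper_bound_length} together with Lemma~\ref{lem:n1words_in_n1neck} (i.e.\ the preceding corollary), a maximum length $(n,1)$-sequence $\cS$ has length $N = m_{n,1} = \ell_{n,1} = g_n - g_{n-4}$, and by Lemma~\ref{lem:num_n_1} that $g_n = (\varphi^{n+2} - \psi^{n+2})/\sqrt{5}$ with $\varphi = (1+\sqrt5)/2$ and $\psi = (1-\sqrt5)/2$. Since $|\psi| = (\sqrt5-1)/2 < 1$, the term $\psi^{n+2}$ tends to $0$, so $g_n = \tfrac{\varphi^{n+2}}{\sqrt5}\bigl(1 + o(1)\bigr)$ as $n\to\infty$, and hence
$$
N = g_n - g_{n-4} = \frac{\varphi^{n+2}}{\sqrt5}\,\bigl(1 - \varphi^{-4}\bigr)\,\bigl(1 + o(1)\bigr),
$$
where $1 - \varphi^{-4}$ is a fixed positive constant. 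Taking base-$2$ logarithms gives $\log N = (n+2)\log_2\varphi - \tfrac12\log_2 5 + \log_2\bigl(1 - \varphi^{-4}\bigr) + o(1)$, in which every term after $(n+2)\log_2\varphi$ is $O(1)$.

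For the rate, divide by $n$ and let $n \to \infty$, so that the $\limsup$ in the definition of $\cR(n,\cS)$ is in fact a genuine limit:
$$
\cR(n,\cS) = \lim_{n\to\infty} \frac{\log N}{n} = \log_2\varphi = \log_2\!\Bigl(\frac{1+\sqrt5}{2}\Bigr).
$$
A numerical evaluation gives $\log_2\varphi = 0.69424\ldots$, i.e.\ $0.6942$ to four decimal places. For the redundancy, $\text{red}(\cS) = n - \log N = (1 - \log_2\varphi)\,n + O(1)$, and $1 - \log_2\varphi = 0.30576\ldots$; thus the redundancy grows like $0.3058\,n$, which is the claimed value.

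There is essentially no real obstacle here beyond careful bookkeeping. The two points that need minor attention are: (i) justifying that $\psi^{n+2}\to 0$ and that $1 - \varphi^{-4}$ is a nonzero constant, so that the $o(1)$ and $O(1)$ error terms are legitimate and do not perturb the leading asymptotics (and so that the quoted redundancy is understood up to a bounded additive term); and (ii) carrying the numerical estimate of $\log_2\varphi$ to enough decimal places to certify the values $0.6942$ and $0.3058$. Neither step is delicate, so the proof is short.
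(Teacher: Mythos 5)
Your proposal is correct and follows the same route the paper intends: the corollary is a direct asymptotic consequence of $N=\ell_{n,1}=g_n-g_{n-4}$ and the closed form of Lemma~\ref{lem:num_n_1}, yielding rate $\log\varphi\approx 0.6942$ and redundancy $(1-\log\varphi)n\approx 0.3058n$ up to a bounded additive term. Your extra bookkeeping (the $\psi^{n+2}\to 0$ and $1-\varphi^{-4}>0$ observations) only makes explicit what the paper leaves implicit.
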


Lemma~\ref{lem:num_n_1} can be simply generalized to $(n,s)$-words as follows.
\begin{lemma}
\label{lem:num_n_s}
The number $h_{n,s}$ of $(n,s)$-words satisfy the following recursive formula:
$$
h_{n,s} = \sum_{i=1}^{s+1} h_{n-i,s}~,
$$
where $h_{i,s} = 2^i$ for $1 \leq i \leq s$ and $h_{s+1,s}= 2^{s+1}-1$.
\end{lemma}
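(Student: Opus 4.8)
The plan is to prove the two base-case identities by direct counting and to establish the recursion by partitioning the set of $(n,s)$-words according to the position of their leftmost \emph{one}.

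First I would dispose of the base cases. For $1 \leq i \leq s$, every binary word of length $i$ has its longest run of \emph{zeros} of length at most $i \leq s$, so all $2^i$ words of length $i$ are $(i,s)$-words; hence $h_{i,s} = 2^i$. For length $s+1$, the only word of length $s+1$ containing a run of more than $s$ \emph{zeros} is $0^{s+1}$ itself, so exactly one of the $2^{s+1}$ words of length $s+1$ is excluded, giving $h_{s+1,s} = 2^{s+1} - 1$.

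For the recursive step, fix $n \geq s+2$ (all indices appearing below then satisfy $n-i \geq n-(s+1) \geq 1$ and are covered by the base cases; setting $h_{0,s} := 1$ for the empty word also makes the formula valid already at $n=s+1$, recovering $h_{s+1,s} = \sum_{i=1}^{s+1} h_{s+1-i,s} = 2^s + \dots + 2 + 1 = 2^{s+1}-1$). Since $n > s$, an $(n,s)$-word cannot be all \emph{zeros}, so it has a leftmost \emph{one}, say in position $i$; its prefix is $0^{i-1}$, which itself must avoid a run of more than $s$ \emph{zeros}, forcing $1 \leq i \leq s+1$. Write the word as $0^{i-1}1\,w$ with $w$ of length $n-i$. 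Every run of \emph{zeros} of $w$ is a run of \emph{zeros} of the whole word (the \emph{one} in position $i$ separates it from the leading block), so $w$ is an $(n-i,s)$-word; conversely, for any $(n-i,s)$-word $w$ the word $0^{i-1}1\,w$ has its leading run of \emph{zeros} of length $i-1 \leq s$ and all other runs of \emph{zeros} contained in $w$, so it is an $(n,s)$-word with leftmost \emph{one} in position $i$. This gives a bijection between $(n,s)$-words and the disjoint union over $1 \leq i \leq s+1$ of the sets of $(n-i,s)$-words, yielding $h_{n,s} = \sum_{i=1}^{s+1} h_{n-i,s}$.

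There is essentially no hard step; the only points needing care are (i) confirming $1 \leq i \leq s+1$, so the sum has exactly $s+1$ terms and (for $n \geq s+1$) never calls on negative indices, and (ii) checking that prepending $0^{i-1}1$ never lengthens a forbidden run, which is immediate since the inserted \emph{one} caps the leading block. One could alternatively obtain the recursion from a transfer-matrix argument for the constrained system, but the leftmost-\emph{one} decomposition is the most transparent and makes the base cases drop out along the way.
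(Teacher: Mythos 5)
Your proof is correct and follows essentially the same route as the paper: the recursion comes from decomposing each $(n,s)$-word by its leading block of \emph{zeros} (equivalently, the position of its leftmost \emph{one}), giving $0^{i-1}1\,w$ with $w$ an $(n-i,s)$-word and $1 \leq i \leq s+1$. Your explicit verification of the base cases and of the bijection is a careful elaboration of what the paper states in one line, not a different argument.
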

\begin{proof}
Each $(n,s)$-word can start in $i$ \emph{zeros}, $0 \leq i \leq s$, followed by a \emph{one} and after that there is an $(n-i-1,s)$-word.
This implies the claim in the lemma.
\end{proof}

It was proved in~\cite{CaCu90,LLG15,Nyb12} that $h_{n,s}$ can be expressed as
$$
h_{n,s} = \left\lfloor   \frac{\lambda^{n+1} (\lambda -1)}{(s+2)\lambda - 2(s+1)} + 0.5 \right\rfloor ,
$$
where $\lambda$ is the unique positive real root of the equation
\begin{equation}
\label{eq:char_eq}
x^{s+1} - \sum_{i=0}^s x^i = 0 ~.
\end{equation}
Some values of the maximum length $(n,s)$-sequences and the value of $h_{n,s}$ are presented in Table~\ref{tab:h_n_s}
and the rates of the sequences are presented in Table~\ref{tab:R_n_s},
where $\cR(n,s) = \limsup\limits_{n \rightarrow \infty}\frac{\log N}{n} = \log \lambda$.

\begin{table}[htbp]
    \centering
\setlength\tabcolsep{2pt}
\begin{footnotesize}
\begin{tabular}{|c|c|c|c|c|c|c|c|c|c|c|c|c|c|c|c|c|c|c|c|}
\hline
\diagbox{$s$}{$n$}  & $1$  & $2$  & $3$  & $4 $  & $5 $  & $6 $  & $7  $  & $8  $  & $9  $  & $10  $  & $11  $  \\ \hline
1    & $2$ - $2$ & $3$ - $3$ & $4$ - $5$ & $7 $ - $8 $ & $11$ - $13$ & $18$ - $21$ & $29 $ - $34 $ & $47 $ - $55 $ & $76 $ - $89 $ & $123 $ - $144 $ & $199 $ - $233 $   \\ \hline
2    &  & $4$ - $4$ & $7$ - $7$ & $11$ - $13$ & $21$ - $24$ & $39$ - $44$ & $71 $ - $81 $ & $131$ - $149$ & $241$ - $274$ & $443 $ - $504 $ & $815 $ - $927 $   \\ \hline
3    &  & & $8$ - $8$ & $15$ - $15$ & $26$ - $29$ & $51$ - $56$ & $99 $ - $108$ & $191$ - $208$ & $367$ - $401$ & $708 $ - $773 $ & $1365$ - $1490$  \\ \hline
4    &  &  &  & $16$ - $16$ & $31$ - $31$ & $57$ - $61$ & $113$ - $120$ & $223$ - $236$ & $439$ - $464$ & $863 $ - $912 $ & $1695$ - $1793$   \\ \hline
5    &  &  &  & & $32$ - $32$ & $63$ - $63$ & $120$ - $125$ & $239$ - $248$ & $475$ - $492$ & $943 $ - $976 $ & $1871$ - $1936$   \\ \hline
6    &  &  &  &  &  & $64$ - $64$ & $127$ - $127$ & $247$ - $253$ & $493$ - $504$ & $983 $ - $1004$ & $1959$ - $2000$   \\ \hline
7    &  &  &  &  &  &  & $128$ - $128$ & $255$ - $255$ & $502$ - $509$ & $1003$ - $1016$ & $2003$ - $2028$   \\ \hline
\end{tabular}
\end{footnotesize}
\caption{The maximum length of a cyclic $(n,s)$-sequence and the number of $(n,s)$-words for $1\leq n\leq 11$ and $1\leq s\leq \text{maximum}(7,n)$.}
\label{tab:h_n_s}
\end{table}

\begin{table}[htbp]
    \centering
\begin{tabular}{|c|c|c|c|c|c|c|}
\hline
$s$             & $1$      & $2$      & $3$      & $4$      & $5$      & $6$        \\ \hline
$\lambda$       & $1.6180$ & $1.8393$ & $1.9276$ & $1.9659$ & $1.9836$ & $1.9920$  \\ \hline
$\log \lambda$ & $0.6942$  & $0.8791$ & $0.9468$   & $0.9752$ & $0.9881$ & $0.9942$  \\ \hline
\end{tabular}

\vspace{0.2cm}

\begin{tabular}{|c|c|c|c|c|c|c|}
\hline
$s$               &  $7$      & $8$      & $9$      & $10$     & $11$     & $12$     \\ \hline
$\lambda$       & $1.9960$ & $1.9980$ & $1.9990$ & $1.9995$ & $1.9998$ & $1.9999$ \\ \hline
$\log \lambda$ &  $0.9971$ & $0.9986$ & $0.9993$ & $0.9996$ & $0.9998$ & $0.9999$ \\ \hline
\end{tabular}
\caption{The rate, $\cR(n,s)$, of maximum length $(n,s)$-sequences for $1 \leq s \leq 12$.}
\label{tab:R_n_s}
\end{table}

Lemma~\ref{lem:n1words_in_n1neck} can be generalized as follows for $s >1$.


\begin{lemma}
The number of $(n,s)$-words in the $(n,s)$-necklaces is
$$
\ell_{n,s}=h_{n-1,s} + \sum_{i=1}^s (i \cdot h_{n-i-2,s}) ~,
$$
where $h_{0,s}=1$.
\end{lemma}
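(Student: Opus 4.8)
The plan is to reinterpret $\ell_{n,s}$ combinatorially and then decompose. First I would observe that the $(n,s)$-necklaces are exactly the equivalence classes of length-$n$ words under cyclic shift that consist entirely of $(n,s)$-words; since distinct necklaces are disjoint, $\ell_{n,s}$ equals the number of length-$n$ binary words $w$ whose necklace is an $(n,s)$-necklace, i.e.\ the number of length-$n$ binary words with no \emph{cyclic} run of more than $s$ zeros. This also disposes of degenerate necklaces automatically: a degenerate $(n,s)$-necklace of length $d\mid n$ just contributes its $d$ distinct $n$-tuples, each of which is one such word. Under the standing assumption $s<n-1$ the all-zeros word is not counted, so every counted word contains a $1$.

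Next I would set up the decomposition. Given a counted word $w$ (which contains a $1$), let $a$ be the length of its maximal prefix of zeros and $b$ the length of its maximal suffix of zeros, so that $w=0^a\,S\,0^b$ where $S$ starts and ends with a $1$, $|S|=n-a-b\ge 1$, and $S$ has no run of more than $s$ zeros, i.e.\ $S$ is an $s$-ones string. The cyclic run of zeros straddling the boundary of $w$ has length exactly $a+b$, so the cyclic constraint is equivalent to $a+b\le s$ together with the condition on $S$, and conversely every such triple $(a,S,b)$ yields a valid word. Since $a+b\le s<n-1$ we have $|S|=n-a-b\ge 2$, so $S=1\,S'\,1$ for an arbitrary $(|S|-2,s)$-word $S'$; hence the number of admissible $S$ of length $m$ is $h_{m-2,s}$, using the convention $h_{0,s}=1$. (In particular the degenerate case $|S|=1$ never occurs here, precisely because $s<n-1$.)

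Then I would sum. Writing $c=a+b$, for each $c$ with $0\le c\le s$ there are $c+1$ choices of $(a,b)$ and the middle string has length $n-c$, so
$$
\ell_{n,s}=\sum_{c=0}^{s}(c+1)\,h_{n-c-2,s}=\sum_{c=0}^{s}h_{n-c-2,s}+\sum_{i=1}^{s} i\,h_{n-i-2,s}.
$$
Finally I would invoke the recurrence of Lemma~\ref{lem:num_n_s} at parameter $n-1$, which is legitimate because $n-1\ge s+1$: $h_{n-1,s}=\sum_{i=1}^{s+1}h_{n-1-i,s}=\sum_{c=0}^{s}h_{n-c-2,s}$. Substituting this for the first sum gives $\ell_{n,s}=h_{n-1,s}+\sum_{i=1}^{s} i\,h_{n-i-2,s}$, as claimed.

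The only delicate point — more a step to phrase carefully than a genuine obstacle — is the first one: that $\ell_{n,s}$, defined as the number of \emph{words in} the $(n,s)$-necklaces, is the same as the plain count of length-$n$ words avoiding cyclic zero-runs longer than $s$, with no double counting arising from degenerate necklaces. After that the argument is the bookkeeping above, the collapse being driven by the $h$-recurrence from Lemma~\ref{lem:num_n_s}. As a sanity check, for $s=1$ the formula reads $\ell_{n,1}=g_{n-1}+g_{n-3}$, which equals $g_n-g_{n-4}$ and thus agrees with Lemma~\ref{lem:n1words_in_n1neck}.
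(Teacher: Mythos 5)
Your argument is correct and is essentially the paper's own counting: both proofs strip the leading and trailing runs of zeros, require their combined (cyclic, wrap-around) length to be at most $s$, and count the interior between the first and last \emph{one} as an unconstrained $(m,s)$-word, using $h_{0,s}=1$ and the standing assumption $s<n-1$. The only difference is cosmetic: the paper obtains the $h_{n-1,s}$ term directly from the case of words beginning with a \emph{one}, whereas you count all boundary pairs $(a,b)$ uniformly to get $\sum_{c=0}^{s}(c+1)h_{n-c-2,s}$ and then convert $\sum_{c=0}^{s}h_{n-c-2,s}$ into $h_{n-1,s}$ via the recurrence of Lemma~\ref{lem:num_n_s}, which indeed holds down to index $s+1$ under the convention $h_{0,s}=1$.
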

\begin{proof}
If an $(n,s)$-word starts with a \emph{one}, then after this \emph{one} we can have any one of the $h_{n-1,s}$ $(n-1,s)$-words.
Otherwise, an $(n,s)$-word is an $(n,s)$-necklaces has a prefix $0^{i_1}1$ and a suffix $1 0^{i_2}$, where $i_1 \geq 1$ and
$i_1+ i_2 = i \leq s$, i.e., $1 \leq i \leq s$,
and between them there could be any one of the $h_{n-i-2,s}$ $(n-i-2,s)$-words which implies the claim of the lemma.
\end{proof}

\begin{corollary}
The maximum length of a cyclic $(n,s)$-sequence is $\ell_{n,s}=h_{n-1,s} + \sum_{i=1}^s (i \cdot h_{n-i-2,s})$.
The maximum length of an acyclic $(n,s)$-sequence is $\ell_{n,s}=h_{n-1,s} + \sum_{i=1}^s (i \cdot h_{n-i-2,s})+s+n-1$, where $s \leq n-2$.
\end{corollary}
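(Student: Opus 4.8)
The plan is to reinterpret $\ell_{n,s}$ as a count of individual length-$n$ words and then classify those words by their first symbol. Since the $(n,s)$-necklaces are pairwise disjoint equivalence classes and a necklace of length $d$ contributes exactly $d$ words of length $n$, the quantity $\ell_{n,s}$ equals the number of binary words $w$ of length $n$ whose \emph{cyclic} closure has no run of more than $s$ \emph{zeros}. Writing $w$ with initial \emph{zero}-run of length $i_1$ and final \emph{zero}-run of length $i_2$, the key observation is that every cyclic run of \emph{zeros} of $w$ except the one straddling the seam is already a linear run of $w$, so such a $w$ is precisely an $(n,s)$-word for which, in addition, $i_1+i_2\le s$. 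Throughout I will use the standing assumption $s<n-1$ (so $s\le n-2$), which is exactly what makes every term $h_{n-i-2,s}$ below well defined once we set $h_{0,s}=1$ for the empty word.

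First I would dispose of the words $w$ beginning with a \emph{one}. Then $i_1=0$, so the extra condition $i_1+i_2\le s$ reduces to $i_2\le s$, which holds automatically for any $(n,s)$-word; conversely every $(n,s)$-word starting with \emph{one} lies in an $(n,s)$-necklace. Writing $w=1u$ and noting that prepending a \emph{one} neither creates nor destroys a run of \emph{zeros}, the map $w\mapsto u$ is a bijection onto the set of $(n-1,s)$-words, which accounts for the term $h_{n-1,s}$.

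Next I would treat the words $w$ beginning with a \emph{zero}, so $i_1\ge 1$; put $i=i_1+i_2$, so that $w$ lies in an $(n,s)$-necklace iff $1\le i\le s$. Because $s<n-1$, one checks that such a $w$ cannot be all \emph{zeros} and cannot contain exactly one \emph{one} (that case forces $i=n-1>s$), hence $w$ has at least two \emph{ones} and can be written uniquely as $w=0^{i_1}\,1\,B\,1\,0^{i_2}$, where $B$ is the (possibly empty) block strictly between the first and last \emph{one}. Here $B$ has length $n-i-2\ge 0$ and, being a factor of an $(n,s)$-word, is an arbitrary $(n-i-2,s)$-word (empty block handled by $h_{0,s}=1$). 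Conversely, any triple $(i_1,i_2,B)$ with $i_1\ge 1$, $i_1+i_2=i\le s$, and $B$ an $(n-i-2,s)$-word yields such a $w$, and this correspondence is a bijection. For each fixed $i$ there are exactly $i$ admissible pairs $(i_1,i_2)$ (namely $i_1\in\{1,\dots,i\}$) and $h_{n-i-2,s}$ choices of $B$, so this case contributes $\sum_{i=1}^{s} i\cdot h_{n-i-2,s}$. Summing the two cases gives the claimed formula.

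I expect the only genuinely delicate point to be the equivalence asserted in the first paragraph — that membership of $w$ in an $(n,s)$-necklace is governed solely by the seam run $i_1+i_2$ — together with keeping the boundary cases consistent: the empty block $B$ when $i=n-2$, the all-\emph{ones} word landing in the first case, and the exclusion of $w=0^n$ and of single-\emph{one} words, all of which rely on the conventions $h_{0,s}=1$ and $s\le n-2$.
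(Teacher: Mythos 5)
Your counting argument for the formula $\ell_{n,s}=h_{n-1,s}+\sum_{i=1}^{s} i\cdot h_{n-i-2,s}$ is correct, and it is essentially the paper's own argument for the lemma immediately preceding this corollary: classify the words of the $(n,s)$-necklaces by the first symbol, the words starting with a \emph{one} being in bijection with the $(n-1,s)$-words, and the words starting with a \emph{zero} having the unique form $0^{i_1}1B1 0^{i_2}$ with $i_1\ge 1$, $i_1+i_2=i\le s$ and $B$ an $(n-i-2,s)$-word, giving $i\cdot h_{n-i-2,s}$ words for each $i$. Your handling of the seam run and of the boundary cases (the all-\emph{zeros} word, words with a single \emph{one}, the empty block with $h_{0,s}=1$) is, if anything, more explicit than the paper's.

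The gap is that the corollary does not claim a value for $\ell_{n,s}$; it claims the maximum lengths of cyclic and acyclic $(n,s)$-sequences, and your proposal never connects the count to these. For the cyclic part you must invoke (or reprove) the earlier results that $m_{n,s}\le\ell_{n,s}$ (Theorem~\ref{thm:upper_bound_length}) and that this bound is attained by the constructions of Section~\ref{sec:construct_max}, so that $m_{n,s}=\ell_{n,s}$; you do not even cite this. More seriously, the acyclic half, with maximum length $\ell_{n,s}+s+n-1$, is not addressed at all: it rests on the trail analysis of Theorem~\ref{thm:upper_bound_path}, namely that a maximum-length trail in $G_{n-1}(s)$ can contain exactly $s$ additional $(n,s)$-words coming from one deleted path of a necklace that is not an $(n,s)$-necklace, together with the fact that an acyclic sequence representing $N$ windows of length $n$ has $N+n-1$ bits. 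Without citing or reproving these facts, what you have established is the enumeration lemma that precedes the corollary, not the corollary itself.
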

\begin{corollary}
The rate of maximum length $(n,s)$-sequences is $\log \lambda$ and their redundancy is $n - \lambda n$, where $\lambda$
is the root of Eq.~\textup{(\ref{eq:char_eq})}.
\end{corollary}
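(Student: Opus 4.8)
The plan is to start from the exact value of the maximum length established just above, namely $N=\ell_{n,s}=h_{n-1,s}+\sum_{i=1}^{s}\bigl(i\cdot h_{n-i-2,s}\bigr)$ in the cyclic case, and to feed in the known closed form
$$
h_{n,s}=\left\lfloor \frac{\lambda^{n+1}(\lambda-1)}{(s+2)\lambda-2(s+1)}+0.5\right\rfloor ,
$$
where $\lambda$ is the unique positive real root of Eq.~(\ref{eq:char_eq}). Setting $c_s=\frac{\lambda(\lambda-1)}{(s+2)\lambda-2(s+1)}$, the $\lfloor\,\cdot+0.5\rfloor$ rounding contributes only a bounded error, so $h_{n,s}=c_s\lambda^{n}+O(1)$. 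Since $s$ is fixed, summing finitely many such estimates gives
$$
\ell_{n,s}=c_s\lambda^{n-2}\Bigl(\lambda+\sum_{i=1}^{s} i\,\lambda^{-i}\Bigr)+O(1)=d_s\,\lambda^{n}+O(1),
$$
with $d_s=c_s\lambda^{-2}\bigl(\lambda+\sum_{i=1}^{s} i\lambda^{-i}\bigr)$ a positive constant depending only on $s$. Hence $\log N=n\log\lambda+\log d_s+o(1)$.

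From this the two claims follow at once. Dividing by $n$ and letting $n\to\infty$ gives $\cR(n,s)=\limsup_{n\to\infty}\frac{\log N}{n}=\log\lambda$; in fact $\frac{\log N}{n}\to\log\lambda$, so the value is unambiguous. For the redundancy, $\text{red}(\cS)=n-\log N=n(1-\log\lambda)-\log d_s+o(1)$, so the redundancy equals $(1-\log\lambda)n$ up to a bounded additive term; since Eq.~(\ref{eq:char_eq}) gives the identity $\lambda^{s+1}(\lambda-2)=-1$, forcing $1<\lambda<2$, the coefficient $1-\log\lambda$ is strictly positive, as it must be. The acyclic case is identical: by Theorem~\ref{thm:upper_bound_path} the maximum acyclic length is $\ell_{n,s}+s+n-1=d_s\lambda^{n}+O(n)$, whose logarithm is still $n\log\lambda+O(1)$, giving the same rate and the same leading redundancy term. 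I would also record the consistency check with Section~\ref{sec:num_seq}: for $s=1$, $\lambda=\varphi$, $\log\lambda=0.6942$ and $1-\log\lambda=0.3058$, matching the stated rate and redundancy.

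The only points requiring genuine (but routine) care are: (i) that the denominator $(s+2)\lambda-2(s+1)$ is strictly positive, so that $c_s$ and hence $d_s$ is a well-defined positive constant — this follows from the identity $\lambda^{s+1}(\lambda-2)=-1$, which pins $\lambda$ into $(1,2)$ and keeps the denominator bounded away from $0$, and is anyway implicit in the cited derivations of the formula for $h_{n,s}$; and (ii) that the rounding in the floor and the lower-order terms $\sum_{i=1}^{s}i\,h_{n-i-2,s}$ (respectively the extra $s+n-1$ in the acyclic case) may be absorbed into the $O(1)$ (respectively $O(n)$) remainder, which is legitimate because $d_s\lambda^{n}$ dominates and these corrections vanish after dividing $\log N$ by $n$. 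I do not anticipate any real obstacle: once $\ell_{n,s}=\Theta(\lambda^{n})$ is pinned down with an explicit positive leading constant, both the rate and the redundancy drop out by taking logarithms.
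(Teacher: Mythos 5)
Your proof is correct and follows essentially the same route as the paper, which states this corollary without a separate proof as an immediate consequence of the formula $\ell_{n,s}=h_{n-1,s}+\sum_{i=1}^{s} i\,h_{n-i-2,s}$ and the cited closed form $h_{n,s}=\Theta(\lambda^{n})$; you merely make the routine asymptotics explicit. Note also that the paper's phrase ``redundancy is $n-\lambda n$'' is evidently a typo for $n-n\log\lambda=(1-\log\lambda)n$ (consistent with the $s=1$ case, $0.3058n$, stated earlier), and that corrected statement is exactly what your argument establishes.
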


We might be interested in the number of
$(n,s)$-necklaces of weight $k$. This can be associated with an
algorithm to merge $(n,s)$-necklaces, with restricted weight (another possible
constraint), especially when $s=1$. When $s=1$ this number is not difficult to compute.

\begin{lemma}
\label{lem:n_1words}
The number of $(n,1)$-words with weight $k$ is $\binom{k+1}{n-k}$ if $k+1 \geq n-k$.
\end{lemma}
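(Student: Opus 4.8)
The plan is to count binary words of length $n$ with exactly $k$ ones in which no two ones are separated by a run of more than one zero; equivalently, every maximal run of zeros has length at most $1$. I would begin by placing the $k$ ones in a row, which creates $k+1$ gaps (before the first one, between consecutive ones, and after the last one). The $n-k$ zeros must be distributed into these $k+1$ gaps. The $(n,1)$-constraint on internal runs of zeros — here, every maximal block of zeros, including the ones at the two ends, since these are words not necklaces — forces each gap to receive at most one zero.

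So the count is the number of ways to choose which of the $k+1$ gaps receive a single zero, and there are $n-k$ zeros to place, hence $\binom{k+1}{n-k}$ such words, valid precisely when $n-k \leq k+1$, i.e. $k+1 \geq n-k$ (and $0$ otherwise, since then there are not enough gaps to absorb all the zeros without creating a run of length $\geq 2$). The only subtlety worth spelling out is that for an $(n,1)$-\emph{word} (as opposed to a necklace) the leading and trailing runs of zeros are ordinary, non-cyclic runs, so the length-$\le 1$ restriction applies to the two end gaps as well as the internal ones; this is exactly why all $k+1$ gaps are treated symmetrically and the binomial coefficient comes out as $\binom{k+1}{n-k}$ rather than something that distinguishes interior from boundary.

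I would also briefly note the degenerate edge cases: if $k = 0$ the only word is $0^n$, which has a run of $n$ zeros and is an $(n,1)$-word only for $n \le 1$, consistent with $\binom{1}{n}$; and if $k = n$ the word is $1^n$, counted by $\binom{n+1}{0}=1$. There is no real obstacle here — the argument is the standard stars-and-bars / gap-insertion count — so the main thing to get right is simply the bookkeeping that there are $k+1$ gaps and each holds at most one zero, together with the feasibility condition $n - k \le k+1$.
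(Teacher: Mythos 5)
Your argument is correct and is essentially the same as the paper's: the paper also observes that the $n-k$ zeros must be isolated among the $k+1$ gaps determined by the $k$ ones (including the two ends), giving $\binom{k+1}{n-k}$. Your write-up just spells out the gap-counting and the feasibility condition $n-k\leq k+1$ in more detail.
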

\begin{proof}
An $(n,1)$-word with weight $k$ has $k$ \emph{ones} and the $n-k$ \emph{zeros} are isolated between
the \emph{ones} (including at the beginning or the end) and hence there are $\binom{k+1}{n-k}$ such words.
\end{proof}

\begin{lemma}
\label{lem:weight_k_neck}
The number of words in the $(n,1)$-necklaces with words of weight $k$ is
$$
\binom{k}{n-k} + \binom{k-1}{n-k-1}
$$
if $k+1 \geq n-k$.
\end{lemma}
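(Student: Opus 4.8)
The plan is to count words of weight $k$ in $(n,1)$-necklaces by splitting the set of $(n,1)$-necklaces into two classes according to their length: those of full-order (length $n$) and the degenerate ones. First I would recall from Lemma~\ref{lem:n_1words} that the total number of $(n,1)$-words of weight $k$ is $\binom{k+1}{n-k}$, and that these words are partitioned among the $(n,1)$-necklaces together with the non-$(n,1)$-necklaces; by Lemma~\ref{lem:n1words_in_n1neck}'s proof the only non-$(n,1)$-necklaces containing $(n,1)$-words are those obtained from an $(n-4,1)$-word sandwiched as $01 \cdot (\text{word}) \cdot 10$, but actually it is cleaner to count directly: a word of weight $k$ lies in an $(n,1)$-necklace iff, when read cyclically, no two \emph{zeros} are adjacent, i.e. the $n-k$ \emph{zeros} are cyclically non-adjacent among the $n$ positions. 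So the count of such words is $\frac{n}{k}\binom{k}{n-k}$ by the standard cyclic "non-adjacent selection" identity... but since necklaces of full-order contribute $n$ words each and degenerate ones fewer, I would instead count necklaces first and then multiply by length.

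Concretely, the key steps are: (1) A degenerate $(n,1)$-necklace of weight $k$ has a period $d \mid n$, $d < n$, and each period block is itself a $(d,1)$-necklace-word of weight $kd/n$ with no cyclically adjacent zeros; since $\gcd$ considerations force $d/n \cdot$(structure), enumerate these separately — for $s=1$ the only degenerate necklaces with no adjacent zeros and the relevant weight turn out to be very constrained (e.g. $[01]$-type and $[1]$-type blocks), contributing the term $\binom{k-1}{n-k-1}$. (2) For full-order necklaces of weight $k$: the number of binary \emph{cyclic sequences} of length $n$, weight $k$, with no two cyclically adjacent zeros is the classical $\frac{n}{n-k}\binom{n-k}{k}$ or equivalently $\frac{n}{k}\binom{k}{n-k}$; dividing by $n$ (full-order, so each necklace = $n$ words, but we want words not necklaces, so we keep it) — the number of \emph{words} in full-order necklaces is this cyclic-sequence count minus the words lying on degenerate necklaces. (3) Add the two contributions: words on full-order necklaces plus words on degenerate necklaces, simplify using Pascal-type identities to land on $\binom{k}{n-k} + \binom{k-1}{n-k-1}$.

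Alternatively — and this is probably the slicker route I would actually write — use the already-established Lemma~\ref{lem:n1words_in_n1neck} refined by weight. An $(n,1)$-word of weight $k$ fails to lie in an $(n,1)$-necklace precisely when its cyclic form has two adjacent zeros, which (for $s=1$, weight $k$, so $n-k$ zeros with $n-k \le k$) happens iff the word has the form (linearly) $X 0 \cdots$ wrapping with a $0$ at both ends after rotation, i.e. is a rotation of $0 W 0$ where $W$ is a linear string; counting such "bad" words of weight $k$ and subtracting from $\binom{k+1}{n-k}$ should give the total number of \emph{good} words, and then one more step extracts words-in-necklaces. The cleanest bookkeeping: good words of weight $k$ number $\binom{k+1}{n-k}-\binom{k}{n-k-1}$ (removing those with a wrap-around adjacent zero pair), and one checks $\binom{k+1}{n-k}-\binom{k}{n-k-1} = \binom{k}{n-k}+\binom{k-1}{n-k-1}$ by two applications of Pascal's rule.

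The main obstacle I anticipate is the degenerate-necklace bookkeeping: one must be careful that a "word in an $(n,1)$-necklace" is counted once per position (so a degenerate necklace of length $d$ contributes $d$ words, appearing $n/d$ times over if one is careless), and one must verify that for $s=1$ no parity or divisibility pathology inflates the count — essentially checking that the naive linear count $\binom{k+1}{n-k}$ minus the wrap-around-bad count already correctly accounts for multiplicities, because every $(n,1)$-word (good or bad) is counted exactly once as a length-$n$ string regardless of its period. Once that is pinned down, the remainder is the routine Pascal-identity simplification, which I would not belabor. The condition $k+1 \ge n-k$ is needed exactly so that all the binomial coefficients in play are the "honest" ones (nonzero and given by the usual formula) and so that $(n,1)$-words of weight $k$ exist at all.
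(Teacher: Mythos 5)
The route you actually commit to (subtract the wrap-around-bad words from the $\binom{k+1}{n-k}$ linear $(n,1)$-words of weight $k$) is precisely the paper's approach, but your count of the bad words is wrong, and with it the final identity. An $(n,1)$-word of weight $k$ fails to lie in an $(n,1)$-necklace exactly when it both starts and ends with a \emph{zero}; since it has no two adjacent zeros linearly, such a word is forced to be of the form $01\,W\,10$ with $W$ an $(n-4,1)$-word of weight $k-2$, so by Lemma~\ref{lem:n_1words} there are $\binom{k-1}{n-k-2}$ bad words, not $\binom{k}{n-k-1}$ (your expression counts the $(n,1)$-words of weight $k$ that merely \emph{start} with a zero, regardless of how they end). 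Consequently the identity you invoke is false: one application of Pascal's rule gives $\binom{k+1}{n-k}-\binom{k}{n-k-1}=\binom{k}{n-k}$, which falls short of the target by $\binom{k-1}{n-k-1}$; concretely, for $n=5$, $k=3$ your bookkeeping yields $6-3=3$, while the correct number of weight-$3$ words in $(5,1)$-necklaces is $5$. With the corrected bad count one gets $\binom{k+1}{n-k}-\binom{k-1}{n-k-2}=\binom{k}{n-k}+\binom{k-1}{n-k-1}$ by two applications of Pascal's rule, and this is exactly the paper's proof.

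It is worth noting that your first sketch already contained a complete alternative argument that you abandoned: the number of length-$n$ strings of weight $k$ whose $n-k$ zeros are pairwise cyclically non-adjacent is $\frac{n}{k}\binom{k}{n-k}$ by the standard cycle-selection identity, and since each such string is counted exactly once as a labeled length-$n$ word (a degenerate necklace of length $d\mid n$ contributes exactly $d$ distinct length-$n$ words, so no multiplicity correction or division by $n$ is needed), this quantity is already the number of words in the $(n,1)$-necklaces of weight $k$; a one-line computation shows $\frac{n}{k}\binom{k}{n-k}=\binom{k}{n-k}+\binom{k-1}{n-k-1}$. As written, however, the only fully specified computation in your proposal is the erroneous subtraction, so the proof does not go through without the correction above.
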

\begin{proof}
A word in an $(n,1)$-necklace is an $(n,1)$-word that does not start and end in a \emph{zero}.
By Lemma~\ref{lem:n_1words} the total number of $(n,1)$-words with weight $k$ is $\binom{k+1}{n-k}$.
An $(n,1)$-word which starts with a \emph{zero} and ends with a \emph{zero}, starts with 01 and ends with 10.
In between we have an $(n-4,1)$-word with weight $k-2$. By Lemma~\ref{lem:n_1words}, there are $\binom{k-1}{n-k-2}$ such words and hence
the number of words in the $(n,1)$-necklaces with words of weight $k$ is
$$
\binom{k+1}{n-k} - \binom{k-1}{n-k-2}= \binom{k}{n-k} + \binom{k-1}{n-k-1} ~.
$$
\end{proof}

It should be noted that on one hand the set $V$ defined in Section~\ref{sec:construct_max} is less effective
when $s=1$, but based on Lemmas~\ref{lem:n_1words} and~\ref{lem:weight_k_neck} an efficient algorithm to construct
a large set of maximum length $(n,1)$-sequences can be designed. This algorithm will be based on
Algorithm~\ref{algo:merge} to merge $(n,1)$-necklaces and an efficient algorithm to enumerate the associated binomial coefficient, e.g.,
the enumerative encoding of Cover~\cite{Cov73}.

\section{Sequences with Efficient Positioning Decoding}
\label{sec:decode}

In this section, we use some known algorithms and present a very simple and efficient algorithm for
generating one $(n,s)$-sequence (acyclic and cyclic) of maximum length.
The advantage of this algorithm is that the position of any given ${n\text{-tuple}}$ can be decoded efficiently.
The algorithm is based on an idea of concatenating necklaces that was presented by Fredricksen and Maiorana~\cite{FrMa78}
which improved on a previous
idea of Fredricksen and Kessler~\cite{FrKe77} that generated a de Bruijn sequence based on a partition of $n$ into smaller positive integers.
The algorithm was improved later by Fredricksen and Kessler~\cite{FrKe86} and
it has a few variants. It is natural to call this algorithm, the FKM algorithm for Fredricksen, Kessler, and Maiorana.
The variant of the algorithm that we consider uses representatives of the necklaces which are called Lyndon words.
For a given necklace of order~$n$, its {\bf \emph{Lyndon word}} of order $n$ is the word of the least value in the necklace,
whereas for a full-order necklace of length $n$, a word of length $n$ is taken, while for a degenerated necklace
of length $d < n$ which divides $n$, a word of length $d$ is taken. For example, when $n=6$ and the necklace contains
the words $(010101)$ and $(101010)$ the Lyndon word is $01$, i.e., one period of the sequence.
The Lyndon words are now ordered
lexicographically from the smallest to the largest one and concatenated together in this order.
The outcome is a de Bruijn sequence of length $2^n$ and it is called the lexicographically least
de Bruijn sequence. In the original papers~\cite{FrKe86,FrMa78} the word of maximum value in base 2 was taken from a necklace and
the necklaces were ordered from the maximum value to the minimum value in this representation.
Hence, the generation of the associated de Bruijn sequence
is going down to the ordering of the necklaces lexicographically based on their Lyndon words.
But, to generate an $(n,s)$-sequence of maximum length we have to use only the $(n,s)$-necklaces and therefore
we have to make a small modification to the original sequence and also to the original algorithm.
As mentioned, the original sequences generated in~\cite{FrKe86,FrMa78} considered the necklaces ordered by the words with the largest binary
value from the largest one to the smallest one, but later algorithms with the same
technique used a different order for the Lyndon words, e.g.~\cite{SWW14,SWW16}.
The algorithm was analyzed by Ruskey, Savage, and Wang~\cite{RSW92}.
Using the ranking and the decoding algorithm for these sequences as
suggested by Kociumaka, Radoszewski, and Rytter~\cite{KRR16} to decode the lexicographic least de Bruijn sequence,
we can decode the generated $(n,s)$-sequence which will be constructed in this section.
A general framework for concatenation of necklaces and in particular necklaces which avoid certain patterns
was given in~\cite{GaSa18,SWW16}. Such algorithms for generating necklaces and strings
with forbidden substrings were also given in~\cite{RuSa00}.
The algorithm which are presented in these papers and especially those given in~\cite{GaSa18,SWW16} can be applied directly
for $(n,s)$-sequences and can be implemented in practice to form the required sequences.
A~more recent algorithm which combine merging of necklaces and
concatenation of necklaces and should be mentioned was presented in~\cite{SSTW23}.
These algorithms can be implemented in $O(n)$ time to construct the next bit and in average with $O(1)$ time per bit~\cite{SWW16X} using
$O(n)$ space. Practical implementation of the decoding ideas, to find the position of a given
$n$-tuple can be found in~\cite{SaWi17}. Finally, we would like to mention
that greedy algorithms with different successor rules~\cite{GSWW18} can also be used to generate one $(n,s)$-sequence.

The Lyndon words have some simple properties which were used to merge all the necklaces into a de Bruijn sequence~\cite{FrKe77,FrKe86,FrMa78}.
\begin{lemma}
\label{lem:propLyndon}
The Lyndon word of a nonzero necklace starts with the longest run of \emph{zeros} and ends with a \emph{one}.
\end{lemma}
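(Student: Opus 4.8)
The plan is to prove Lemma~\ref{lem:propLyndon} directly from the definition of a Lyndon word as the lexicographically least rotation of its necklace, together with the running convention that the necklace is nonzero (so it has at least one \emph{one}). First I would recall that the Lyndon word $w$ of length $d$ is the strictly smallest among all $d$ cyclic shifts of any representative of the necklace; strict minimality is exactly what distinguishes Lyndon words and rules out the degenerate case where several shifts coincide (which cannot happen here, since a nonzero necklace whose Lyndon representative has length $d$ has exactly $d$ distinct shifts).

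To show $w$ starts with the longest run of \emph{zeros}: let the longest \emph{cyclic} run of \emph{zeros} in the necklace have length $r$, and let $0^r 1\cdots$ be the rotation that begins at the first position of such a maximal run. I would argue that among all rotations, those beginning with $0^r$ are lexicographically no larger than any rotation beginning with $0^t$ for $t<r$, because after $t<r$ leading zeros the $(t{+}1)$-st symbol of the latter is a \emph{one} while the corresponding symbol of the former is still a \emph{zero}. Hence the minimal rotation must begin with a run of exactly $r$ zeros, i.e.\ with the longest run of \emph{zeros}; call it $0^r 1 y$ for some string $y$. (If there are several maximal runs of length $r$, the minimal rotation picks the one for which the tail $1y$ is lexicographically smallest, but in every case it opens with $0^r$.)

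To show $w$ ends with a \emph{one}: suppose for contradiction that $w = 0^r 1 y 0$, i.e.\ its last symbol is a \emph{zero}. Consider the rotation $w' = 0\, 0^r 1 y = 0^{r+1} 1 y'$ obtained by moving that trailing \emph{zero} to the front (here $y = y' $ after stripping, and the necklace being nonzero guarantees $1y$ is nonempty with a \emph{one} in it, so $w'$ is a genuine shift of the same necklace). Then $w'$ begins with $r+1$ zeros while $w$ begins with only $r$ zeros, so $w' < w$ lexicographically, contradicting the minimality of $w$. Therefore the last symbol of $w$ is a \emph{one}. Combining the two parts gives the statement.

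The only real subtlety — and the step I would be most careful about — is handling runs of \emph{zeros} that are \emph{cyclic}, i.e.\ wrapping around the end of the linear representative. The clean way to avoid this bookkeeping is to phrase everything in terms of rotations: the longest cyclic run of zeros in the necklace equals the longest \emph{prefix} run of zeros over all $d$ rotations, and the Lyndon word, being the minimal rotation, is forced to realize that prefix run; simultaneously, if its suffix were a \emph{zero}, rotating that zero to the front would strictly lengthen the prefix run of zeros and hence strictly decrease the rotation, which is impossible. No separate argument for degenerate necklaces is needed, since their Lyndon word is one period, and the same rotation argument applies verbatim to that period.
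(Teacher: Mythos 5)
Your proof is correct: the paper states this lemma without proof (it is listed among the ``simple properties'' of Lyndon words), and your argument from lexicographic minimality of rotations --- the minimal rotation must open with the longest cyclic run of \emph{zeros}, and a trailing \emph{zero} could be rotated to the front to produce a strictly smaller rotation --- is exactly the standard justification, including the correct handling of ties among maximal runs and of degenerated necklaces via their primitive period.
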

\begin{lemma}
\label{lem:lead0Lyndon}
If the Lyndon word $X_1$ of a necklace has a larger run of consecutive \emph{zeros} than in the Lyndon word $X_2$ of another necklace,
then $X_1$ has a smaller value than $X_2$.
\end{lemma}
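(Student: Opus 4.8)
The plan is to argue directly from the lexicographic comparison of the two Lyndon words, using the structural fact (Lemma~\ref{lem:propLyndon}) that a nonzero Lyndon word begins with its longest run of \emph{zeros} and ends with a \emph{one}. Write $X_1 = 0^{a}1W_1$ and $X_2 = 0^{b}1W_2$, where $0^{a}$ (resp.\ $0^{b}$) is the longest run of \emph{zeros} in $X_1$ (resp.\ $X_2$), and $W_1, W_2$ are the remaining suffixes (each beginning and ending appropriately, possibly empty). By hypothesis $a > b$.

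First I would compare the two words position by position. In positions $1$ through $b$ both words have a \emph{zero}. In position $b+1$, the word $X_2$ has a \emph{one} (since its leading run of \emph{zeros} has length exactly $b$), while $X_1$ still has a \emph{zero} in that position because $a > b$, hence $a \geq b+1$. Therefore at the first position where the two words differ, $X_1$ carries a \emph{zero} and $X_2$ carries a \emph{one}, which means $X_1 < X_2$ in lexicographic order, and hence $X_1$ has the smaller value.

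The only subtlety is that the two Lyndon words may have different lengths (one or both necklaces may be degenerate), so a naive ``position by position'' comparison must be made precise. I would handle this by using the standard convention for comparing aperiodic-rotation representatives: compare the two (finite) Lyndon words as strings, and the comparison is decided at the first index $i$ with $1 \le i \le \min(|X_1|,|X_2|)$ at which they disagree — and the argument above shows such an index exists and is at most $b+1 \le \min(|X_1|,|X_2|)$, since each Lyndon word has length at least $b+1$ (indeed $X_2$ has its leading \emph{zero}-run of length $b$ followed by a \emph{one}, and $X_1$ has length at least $a+1 > b+1$). So the length discrepancy never comes into play: the decision is made within the common prefix range, and there $X_1$ loses the comparison. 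This also tacitly uses that we are comparing the canonical least representatives, so ``smaller value'' for the necklace is exactly ``lexicographically smaller Lyndon word''. I do not expect any real obstacle here; the main point to get right is simply the bookkeeping that $a > b$ forces the discriminating position to sit inside the leading \emph{zero}-block of $X_1$ and at the terminating \emph{one} of the \emph{zero}-block of $X_2$.
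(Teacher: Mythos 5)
Your proposal is correct: the paper states this lemma without proof (it is listed among the ``simple properties'' of Lyndon words inherited from the FKM literature), and your lexicographic first-difference argument via Lemma~\ref{lem:propLyndon} --- positions $1$ through $b$ agree, and at position $b+1$ the word $X_1$ still carries a \emph{zero} while $X_2$ carries its terminating \emph{one} --- is exactly the standard argument that fills this gap, including the correct handling of Lyndon words of different lengths for degenerated necklaces. Your tacit reading of ``smaller value'' as ``lexicographically smaller'' is also the right one, since that is the ordering the paper actually uses when concatenating the Lyndon words.
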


Lemma~\ref{lem:propLyndon} is not a precise characterization of a Lyndon word since for example there might be a few runs with
the longest run of \emph{zeros}.

The least lexicographic de Bruijn sequence is generated based on the following celebrated lemma~\cite{FrMa78}.

\begin{lemma}
\label{lem:listLyndon_DB}
Let $X_1, X_2, X_3, ~ \ldots, ~ X_\ell$ be the ordering of the Lyndon words of order $n$ from the smallest one lexicographically to
the largest one. The concatenation of these words in this order is a de Bruijn sequence of order $n$, i.e., an Eulerian circuit
in $G_{n-1}$ which is also a Hamiltonian cycle in $G_n$.
\end{lemma}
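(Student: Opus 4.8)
The plan is to prove the classical Fredricksen--Maiorana theorem by first pinning down the length of $S := X_1X_2\cdots X_\ell$ and then showing that every binary word of length $n$ occurs among its cyclic windows.

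\textbf{Step 1: the length is $2^n$.} The necklaces of order $n$ are exactly the rotation classes of $\{0,1\}^n$; such a class has size $d$ equal to the least period of its words, with $d\mid n$, and the Lyndon word taken for that class has length exactly $d$. Summing over all necklaces, the length of $S$ equals $\sum_{\text{necklaces}} d = |\{0,1\}^n| = 2^n$. Consequently $S$, read cyclically, has exactly $2^n$ windows of length $n$; since there are also exactly $2^n$ binary words of length $n$, it suffices to show that every such word occurs as a window. By the pigeonhole principle this is equivalent to the windows being pairwise distinct, and it makes $S$ a span $n$ de Bruijn sequence, hence --- by the translations between sequences and walks recorded in the Introduction --- an Eulerian circuit of $G_{n-1}$ and a Hamiltonian cycle of $G_n$.

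\textbf{Step 2: every word occurs.} Fix $\beta\in\{0,1\}^n$; let $\alpha$ be the Lyndon word of its necklace, $d=|\alpha|\mid n$, and write the period-$n$ extension $\widehat\alpha=\alpha^{n/d}$, so that $\beta$ is the rotation of $\widehat\alpha$ by some $j$ with $0\le j<d$; thus $\beta = \sigma\tau$ where $\tau$ is the length-$j$ prefix of $\widehat\alpha$ and $\sigma$ is the complementary length-$(n-j)$ suffix of $\widehat\alpha$. I would locate $\beta$ in $S$ as follows: $\sigma$, being a suffix of a power of $\alpha$, appears as a suffix of a block $X_a$ taken from the necklace of $\alpha$ (choosing $a$ according to $j$, and using that $X_a$ ends with a \emph{one} by Lemma~\ref{lem:propLyndon}), while $\tau$ --- a prefix of $\alpha$, hence ``small'' --- matches the beginning of the concatenation $X_{a+1}X_{a+2}\cdots$ that immediately follows, because the Lyndon blocks are listed in increasing order and, by Lemma~\ref{lem:lead0Lyndon}, the blocks following $X_a$ cannot introduce a longer leading run of \emph{zeros} than $\tau$ allows. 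This produces an occurrence of $\beta=\sigma\tau$ straddling the boundary after $X_a$. The argument splits into the regime where the occurrence lies inside $X_aX_{a+1}$ and the regime where it spans three or more consecutive (necessarily short) blocks; in both one uses Lemmas~\ref{lem:propLyndon} and~\ref{lem:lead0Lyndon} to compare adjacent blocks and to verify that the reconstructed position is forced.

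\textbf{The main obstacle} is exactly this occurrence/recovery step at the block boundaries: one must check that for every $\beta$ the position found above genuinely works, and simultaneously that it is the \emph{only} position, so that the map ``position $\mapsto$ window'' is a bijection. The delicate cases are the degenerate (periodic) necklaces, whose Lyndon blocks are short, and windows $\beta$ that straddle several such short blocks, where the location of the block boundaries is not visible on the surface of $\beta$ and must be extracted from the interplay of the two lemmas above together with the facts that $0$ and $1$ are the first and last Lyndon words. An alternative I would keep in reserve is to identify $S$ with the output of the lexicographically-least successor rule and invoke the known fact that that rule traces a single cycle through $G_n$; but the direct combinatorial argument sketched here is self-contained given Lemmas~\ref{lem:propLyndon} and~\ref{lem:lead0Lyndon} and is the one I would write out in full.
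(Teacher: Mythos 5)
Your Step 1 is fine: the Lyndon words of length dividing $n$ represent the rotation classes of $\{0,1\}^n$ with multiplicity one, so the concatenation has length exactly $2^n$ and the lemma reduces, by pigeonhole, to showing that every word of length $n$ occurs among the cyclic windows. But Step 2, which is the entire content of the Fredricksen--Maiorana theorem, is not actually proved. First, the decomposition you propose is structurally off: each necklace contributes exactly one block to the concatenation, namely its Lyndon word $\alpha$ of length $d=|\alpha|$, so there is no family of blocks ``taken from the necklace of $\alpha$'' among which you can ``choose $a$ according to $j$''; and since $\sigma$ has length $n-j$, which in general exceeds $d$, $\sigma$ cannot be a suffix of the single block $X_a=\alpha$ at all. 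The genuine occurrence of $\beta$ must be split as a suffix of $\alpha$ of length at most $d$ followed by $n$ minus that many symbols read from the \emph{subsequent} Lyndon blocks, and the whole difficulty is to show that this stretch of later Lyndon words reproduces a prefix of the periodic word $\alpha^{n/d}$ --- the later blocks are the next Lyndon words in lexicographic order, not repetitions of $\alpha$, so this is far from automatic. Lemmas~\ref{lem:propLyndon} and~\ref{lem:lead0Lyndon} (each block ends in a \emph{one}; more leading \emph{zeros} means lexicographically smaller) are much too weak to force this matching, and you yourself flag exactly this boundary analysis, including the degenerate necklaces and windows straddling several short blocks, as ``the main obstacle'' without resolving it. The reserve option (identifying $S$ with the output of the lexicographically least successor rule) is likewise only named, not carried out, and would itself require a nontrivial proof.

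For calibration: the paper does not prove this lemma either; it invokes it as the celebrated result of Fredricksen and Maiorana~\cite{FrMa78}, so citing that source (or writing out one of the known complete proofs of the FKM theorem) is the appropriate way to discharge it. As it stands, your write-up is a correct reduction plus a sketch whose central verification --- that every window occurs, and occurs via the block-boundary position you describe --- is missing, so it does not constitute a proof.
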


The next step is to show that lemma~\ref{lem:listLyndon_DB} is also true if we restrict ourselves to the Lyndon words of the $(n,s)$-necklaces.
This is the simple idea which led to efficient construction of $(n,s)$-sequences from all the $(n,s)$-necklaces~\cite{GaSa18,RSW92,SWW16}.
The idea is summarized in the following lemmas.

\begin{lemma}
\label{lem:parts_lyndon}
Let $X_1, X_2, X_3, ~ \ldots, ~ X_\ell$ be the ordering of the Lyndon words of order~$n$ from the smallest one lexicographically to
the largest one. There exists an index $k$ such that $X_1, X_2, ~ \ldots, ~ X_k$ are Lyndon words which are not contained
in $(n,s)$-necklaces, while $X_{k+1}, X_{k+2}, ~ \ldots, ~ X_\ell$ are Lyndon words which are contained in $(n,s)$-necklaces.
\end{lemma}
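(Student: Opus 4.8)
The plan is to show that membership of a Lyndon word in an $(n,s)$-necklace is a property that, among all Lyndon words of order $n$, is "monotone" with respect to the lexicographic order: once we pass a certain threshold in the lex-ordered list, all subsequent Lyndon words correspond to $(n,s)$-necklaces, and all earlier ones do not. The key observation is that whether a necklace is an $(n,s)$-necklace is determined entirely by the length of its longest run of \emph{zeros}, and that this longest run length is recorded at the very front of the Lyndon word by Lemma~\ref{lem:propLyndon}.

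First I would recall Lemma~\ref{lem:propLyndon}: the Lyndon word of a nonzero necklace begins with the longest run of \emph{zeros} in the necklace and ends with a \emph{one}. Hence if $X$ is the Lyndon word of a necklace $\cN$, then the number of leading \emph{zeros} of $X$ equals the length of the longest zero-run of $\cN$; consequently $\cN$ is an $(n,s)$-necklace if and only if $X$ has at most $s$ leading \emph{zeros}. (The all-\emph{zeros} necklace, whose Lyndon word is $0$, has a run of $n > s$ \emph{zeros} and is not an $(n,s)$-necklace; it sits at the very bottom of the lex order, consistent with the claim.) So the set of Lyndon words contained in $(n,s)$-necklaces is exactly $\{X : X \text{ has at most } s \text{ leading \emph{zeros}}\}$.

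Next I would invoke Lemma~\ref{lem:lead0Lyndon}: if one Lyndon word has a strictly longer leading zero-run than another, it is lexicographically smaller. Equivalently, in the lex-sorted list $X_1, X_2, \ldots, X_\ell$, the number of leading \emph{zeros} is (weakly) non-increasing as the index increases — a larger word cannot have more leading \emph{zeros} than a smaller one. Therefore the indices $j$ for which $X_j$ has more than $s$ leading \emph{zeros} form an initial segment $\{1,2,\ldots,k\}$ of $\{1,\ldots,\ell\}$ (possibly empty, if $s$ is so large that even the all-\emph{zeros} necklace satisfies the bound, but under the standing assumption $s < n$ the word $0$ is excluded so $k \ge 1$). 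Setting $k$ to be the largest such index, $X_1,\ldots,X_k$ are precisely the Lyndon words not in $(n,s)$-necklaces and $X_{k+1},\ldots,X_\ell$ are precisely those that are, which is the claim.

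The only subtlety — and the step I would be most careful about — is the boundary behaviour of Lemma~\ref{lem:lead0Lyndon} and of the characterization via leading \emph{zeros}: one must make sure the comparison "more leading \emph{zeros} $\Rightarrow$ smaller" is used with the correct (non-strict) contrapositive, and one must handle the degenerate necklaces, whose Lyndon word has length $d < n$ dividing $n$, so that "leading \emph{zeros} of $X$" still equals "longest zero-run of the necklace" (it does, since the necklace is the infinite periodic repetition of $X$ and $X$ already starts with the longest run). Once these edge cases are dispatched, the argument is a direct monotonicity observation and no real computation is needed.
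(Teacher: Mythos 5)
Your proposal is correct and follows exactly the route the paper intends: the paper leaves this lemma's proof implicit, relying on the two preceding observations, and your argument via Lemma~\ref{lem:propLyndon} (the Lyndon word begins with the necklace's longest zero-run) and the contrapositive of Lemma~\ref{lem:lead0Lyndon} (longer leading zero-run implies lexicographically smaller) is precisely that reasoning, with the degenerate and all-\emph{zeros} necklaces handled correctly as you note. Just phrase the final step as ``every Lyndon word whose necklace has a zero-run exceeding $s$ precedes every Lyndon word of an $(n,s)$-necklace'' rather than ``leading zeros $>s$ form an initial segment,'' since the word $0$ sits at position one with only one leading zero yet belongs to the excluded set, as your own parenthetical already acknowledges.
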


\begin{corollary}
\label{cor:last_Lyndon}
The last Lyndon word, which is not an $(n,s)$-word, in the order from the smallest one to the largest one, is $0^{s+1} 1^{n-s-1}$.
\end{corollary}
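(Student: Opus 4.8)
The plan is to combine Lemma~\ref{lem:parts_lyndon}, which tells us that the Lyndon words not contained in $(n,s)$-necklaces form an initial segment $X_1,\dots,X_k$ of the lexicographic list, with the characterization of such words in terms of their runs of \emph{zeros}. First I would recall that, by definition, a Lyndon word is not contained in an $(n,s)$-necklace precisely when its necklace has a (cyclic) run of more than $s$ consecutive \emph{zeros}, i.e.\ the word has a run of at least $s+1$ \emph{zeros}. By Lemma~\ref{lem:propLyndon}, the Lyndon word of a nonzero necklace begins with its longest run of \emph{zeros} and ends with a \emph{one}; hence the Lyndon words with a run of more than $s$ \emph{zeros} are exactly those of the form $0^{t}1\cdots$ with $t \geq s+1$.

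Next I would argue that, among all Lyndon words of order $n$ with a leading run of at least $s+1$ \emph{zeros}, the \emph{largest} in lexicographic order is obtained by making the leading run as short as possible (namely exactly $s+1$, since a longer leading run makes the word smaller by Lemma~\ref{lem:lead0Lyndon}), and then making the remaining $n-s-1$ bits as large as possible, i.e.\ all \emph{ones}. This candidate is $0^{s+1}1^{n-s-1}$. I would check that this is indeed a Lyndon word: its smallest rotation is itself, because any nontrivial rotation either starts with fewer than $s+1$ \emph{zeros} (hence is larger) or, if it starts with a \emph{one}, is obviously larger; so $0^{s+1}1^{n-s-1}$ is the word of least value in its necklace, and the necklace is of full order. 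Moreover it has a run of exactly $s+1 > s$ \emph{zeros}, so it is not an $(n,s)$-word, and it is not contained in any $(n,s)$-necklace.

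Finally I would verify maximality directly: any Lyndon word $X$ with a run of more than $s$ \emph{zeros} has, by Lemma~\ref{lem:propLyndon}, a leading run $0^{t}$ with $t \geq s+1$. If $t \geq s+2$, then $X < 0^{s+1}1^{n-s-1}$ since the latter has a strictly shorter leading \emph{zero} run (this is the content of Lemma~\ref{lem:lead0Lyndon}, or simply compare bit $s+1$: it is $0$ in $X$ and $1$ in our candidate). If $t = s+1$, then $X = 0^{s+1} w$ for some word $w$ of length $n-s-1$, and $0^{s+1}w \leq 0^{s+1}1^{n-s-1}$ with equality iff $w = 1^{n-s-1}$. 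Hence $0^{s+1}1^{n-s-1}$ is the lexicographically largest Lyndon word of order $n$ not contained in an $(n,s)$-necklace, i.e.\ it is $X_k$, which is the claim.

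I do not anticipate a serious obstacle here; the only point requiring a little care is confirming that $0^{s+1}1^{n-s-1}$ is genuinely a Lyndon word (of full order) rather than a proper power or a non-minimal rotation — but since $s < n-1$ guarantees there is at least one trailing \emph{one} and the leading block of \emph{zeros} is unique and maximal, this is immediate. A secondary subtlety is ensuring the comparison arguments use lexicographic order on words of the \emph{same} length $n$ (degenerate necklaces with a run of $> s$ \emph{zeros} would be the all-zero necklace $0^n$, whose Lyndon word is $0$, which is smaller still), so $0^{s+1}1^{n-s-1}$ remains the maximum.
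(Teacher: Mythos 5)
Your argument is essentially the one the paper intends: the corollary is stated without a separate proof, as an immediate consequence of Lemma~\ref{lem:propLyndon} (a Lyndon word begins with its longest run of \emph{zeros} and ends with a \emph{one}, so for Lyndon words ``not an $(n,s)$-word'' and ``not in an $(n,s)$-necklace'' coincide), Lemma~\ref{lem:lead0Lyndon}, and the ordering in Lemma~\ref{lem:parts_lyndon}, and your main computation -- leading run of length at least $s+1$, maximized by taking the run exactly $s+1$ and filling the remaining $n-s-1$ positions with \emph{ones}, together with the check that $0^{s+1}1^{n-s-1}$ is a full-order Lyndon word since $s<n-1$ -- is exactly that reasoning, correctly carried out for the Lyndon words of length $n$.

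One parenthetical claim is wrong, though it does not sink the proof: you assert that the only degenerate necklace with a run of more than $s$ \emph{zeros} is the all-zero necklace. That is false in general; for example, with $n=8$ and $s=2$ the degenerate necklace $[0001]$ (period $4$) has a cyclic run of three \emph{zeros}, and its Lyndon word is $0001$, of length $4$. Your final comparison (``$X=0^{s+1}w$ with $w$ of length $n-s-1$'') implicitly assumes $X$ has length $n$, so the degenerate case needs its own one-line treatment: the Lyndon word of any such degenerate necklace has length $d<n$ with $d\mid n$, still begins with its longest run of \emph{zeros}, hence with at least $s+1$ of them, and in the lexicographic order used in the FKM concatenation (where a proper prefix precedes any of its extensions) it is either a proper prefix of $0^{s+1}1^{n-s-1}$ or has strictly more leading \emph{zeros}; in both cases it precedes $0^{s+1}1^{n-s-1}$. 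With that patch the maximality claim covers all Lyndon words of order $n$ and the proof is complete.
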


\begin{corollary}
\label{cor:all_after}
In the lexicographic ordering of the Lyndon words of order $n$,
all Lyndon words after the Lyndon word $0^{s+1} 1^{n-s-1}$ are contained in $(n,s)$-necklaces.
\end{corollary}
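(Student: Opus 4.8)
The plan is to derive this as an immediate consequence of Corollary~\ref{cor:last_Lyndon} together with Lemma~\ref{lem:parts_lyndon}. Lemma~\ref{lem:parts_lyndon} guarantees that the lexicographically ordered Lyndon words split cleanly into an initial block $X_1,\dots,X_k$ of words \emph{not} contained in $(n,s)$-necklaces, followed by a block $X_{k+1},\dots,X_\ell$ of words that \emph{are} contained in $(n,s)$-necklaces; so it suffices to identify $X_k$, the last Lyndon word in the ``bad'' block. Corollary~\ref{cor:last_Lyndon} states precisely that $X_k = 0^{s+1}1^{n-s-1}$. Hence every Lyndon word lexicographically after $0^{s+1}1^{n-s-1}$ lies in the second block and is therefore contained in an $(n,s)$-necklace, which is exactly the claim.

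Concretely, I would argue: let $X$ be any Lyndon word of order $n$ with $X > 0^{s+1}1^{n-s-1}$ in lexicographic order. By Corollary~\ref{cor:last_Lyndon}, $0^{s+1}1^{n-s-1}$ is the last (largest) Lyndon word among those that are not $(n,s)$-words; so $X$ cannot be among the non-$(n,s)$ Lyndon words, for otherwise $X$ would exceed the supposed largest one. By Lemma~\ref{lem:parts_lyndon} the Lyndon words that are not $(n,s)$-words form the initial segment $X_1,\dots,X_k$, so $X$ must belong to the segment $X_{k+1},\dots,X_\ell$, each of which is contained in an $(n,s)$-necklace. Therefore $X$ is contained in an $(n,s)$-necklace.

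The only point requiring a little care is the consistency between the two cited results, namely that the ``last non-$(n,s)$ Lyndon word'' of Corollary~\ref{cor:last_Lyndon} is the same word as $X_k$, the boundary element of the split in Lemma~\ref{lem:parts_lyndon}. This is not really an obstacle: Lemma~\ref{lem:parts_lyndon} already tells us the non-$(n,s)$ Lyndon words are exactly $X_1,\dots,X_k$, so the lexicographically largest of them is $X_k$, and Corollary~\ref{cor:last_Lyndon} names that word. One should perhaps remark on why $0^{s+1}1^{n-s-1}$ is genuinely a Lyndon word that is not an $(n,s)$-word: it has a run of exactly $s+1$ zeros (so it is not an $(n,s)$-word), it starts with its longest run of zeros and ends with a one (so it is a Lyndon word by Lemma~\ref{lem:propLyndon}-type reasoning), and it is in fact of full order. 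Since this is handled in the proof of Corollary~\ref{cor:last_Lyndon}, the present corollary needs no further work — it is a one-line deduction combining the two preceding results, and I would present it as such.
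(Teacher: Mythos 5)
Your proposal is correct and matches the paper's intent exactly: the paper states this corollary without proof, as the immediate consequence of Lemma~\ref{lem:parts_lyndon} (the non-$(n,s)$ Lyndon words form an initial segment of the lexicographic order) together with Corollary~\ref{cor:last_Lyndon} (that segment ends at $0^{s+1}1^{n-s-1}$), which is precisely the one-line deduction you give. Your added remark reconciling ``not an $(n,s)$-word'' with ``not contained in an $(n,s)$-necklace'' via Lemma~\ref{lem:propLyndon} is a sensible touch but introduces nothing beyond the paper's implicit argument.
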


The algorithm, which generates all the Lyndon words that follow the last Lyndon word $0^{s+1} 1^{n-s-1}$ which is
not contained in an $(n,s)$-necklace, is presented in Algorithm~\ref{algo:lexicographic}. It follows very similar steps to
the ones in Ruskey, Savage, and Wang~\cite{RSW92}.




%
%
%
%

\begin{algorithm}[ht!]
\DontPrintSemicolon

  \KwInput{$n$, $s$, $X_0 \coloneqq 0^{s+1}1^{n-s-1}$}
  \KwOutput{maximum length lexicographic cyclic $(n,s)$-sequence}
  Set $Y = y_1y_2~\cdots ~y_n \gets X_0$ and $i \gets 0$ \\
   \While{$Y \neq 1^{n}$}
   {
   	$j \gets \max \{ t \in \{1,2,\dots, n\}: y_t = 0\} $\\
        $Z \gets y_1y_2~\cdots ~ y_{j-1}1$ \\
        $V = v_1v_2v_3\dots \gets ZZZ\cdots $\\
        $Y \gets v_1v_2~\cdots~ v_n$\\
        \If{$j$ divides $n$ }
        {
         $i \gets  i+1$ \\
        $X_i \gets Z$

        }
   }
   \Return $X_1X_2~\cdots ~X_i$ (each $X_i$ is returned when computed)

\caption{Lexicographic generation of Lyndon words in the $(n,s)$-necklaces}
\label{algo:lexicographic}
\end{algorithm}

The output of the algorithm is a maximum length cyclic $(n,s)$-sequence. By Theorem~\ref{thm:upper_bound_length},
Lemma~\ref{lem:parts_lyndon}, and Corollaries~\ref{cor:last_Lyndon} and~\ref{cor:all_after},
the concatenation of $X_1 X_2 ~ \cdots ~ X_i$ is the required maximum length cyclic $(n,s)$-sequence that contains exactly all the words
of the $(n,s)$-necklaces. For the acyclic $(n,s)$-sequence we had to start after the word $0^{s+1} 1^{n-s-1}$ and this
is the word associated with the first $n$ bits of $0^s 1^{n-s-1}0$ since the first $n$ bits of $X_1$ are \emph{zeros} followed
by a \emph{one}. This also implies that for the acyclic $(n,s)$-sequence after $X_i$ we have to add $s$ \emph{zeros}
which add $s$ edges to the sequence (see the discussion that follows Theorem~\ref{thm:upper_bound_length}).
Hence, we have the following result which is the main theorem of this section.
\begin{theorem}
Algorithm~\ref{algo:lexicographic} produces maximum length cyclic $(n,s)$-sequences and a maximum length acyclic
$(n,s)$-sequence is generated if in the last line it will return the sequence $0^s 1^{n-s-1}X_1X_2\dots X_i 0^s$.
\end{theorem}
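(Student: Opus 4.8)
The plan is to reduce the statement to two claims: \textbf{(a)} Algorithm~\ref{algo:lexicographic} outputs, in increasing lexicographic order, the concatenation of the Lyndon words of all the $(n,s)$-necklaces, each appearing exactly once; and \textbf{(b)} this concatenation, read cyclically, is a valid cyclic $(n,s)$-sequence whose length is $\ell_{n,s}$. Granting \textbf{(a)} and \textbf{(b)}, the output is a cyclic $(n,s)$-sequence of length $\ell_{n,s}$, so Theorem~\ref{thm:upper_bound_length} forces it to be of maximum length, and Corollary~\ref{cor:exact_neck} shows it contains exactly all the words of the $(n,s)$-necklaces.

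For \textbf{(a)}, I would observe that the body of the while-loop --- set $j$ to the position of the last \emph{zero} in $Y$, put $Z=y_1\cdots y_{j-1}1$, take $Y$ to be the length-$n$ prefix of the infinite word $ZZZ\cdots$, and record $Z$ precisely when $j\mid n$ --- is the classical FKM successor rule that lists, in increasing lexicographic order, the Lyndon words of all necklaces of order dividing $n$, in the form analysed by Ruskey, Savage, and Wang~\cite{RSW92}. Starting it from $X_0=0^{s+1}1^{n-s-1}$ and halting at $Y=1^n$ therefore produces exactly the Lyndon words that lexicographically follow $0^{s+1}1^{n-s-1}$. By Corollary~\ref{cor:last_Lyndon} the word $0^{s+1}1^{n-s-1}$ is the last Lyndon word that is not an $(n,s)$-word, and Lemma~\ref{lem:parts_lyndon} says the Lyndon words of order $n$ split, in lexicographic order, into an initial block not lying in $(n,s)$-necklaces followed by a final block that does; hence the words produced are precisely the Lyndon words of the $(n,s)$-necklaces, with none repeated and (by Corollary~\ref{cor:all_after}) none omitted. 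This establishes \textbf{(a)}.

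For \textbf{(b)}, I would begin from Lemma~\ref{lem:listLyndon_DB}: the concatenation of all Lyndon words of order $n$ in increasing order is a de Bruijn sequence of order $n$, i.e.\ a Hamiltonian cycle in $G_n$. Removing from this cycle the initial segment $X_1\cdots X_k$ of Lyndon words that are not in $(n,s)$-necklaces (it ends with $0^{s+1}1^{n-s-1}$) and closing up the remainder cyclically yields exactly the cyclic word returned by the algorithm. The key point to verify is that every length-$n$ window of this cyclic word is an $(n,s)$-word lying on an $(n,s)$-necklace, with no window repeated: a window contained in a repeated block $X_j^{\infty}$ of a (possibly degenerate) $(n,s)$-necklace of period $d\le n$ is a cyclic rotation of that necklace, while a window straddling a seam between consecutive $X_j$ and $X_{j+1}$, or the new wrap-around seam $X_i\to X_1$, is again a rotation of one of these necklaces. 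Together with the absence of repeated windows, this last assertion is exactly what the FKM-type necklace-concatenation theorems for necklaces avoiding a fixed substring guarantee~\cite{GaSa18,RSW92,SWW16}, applied with the forbidden substring $0^{s+1}$, which is legitimate since the $(n,s)$-necklaces are precisely the necklaces with no cyclic run of more than $s$ \emph{zeros}. Finally, the length of the output is $\sum_{j=1}^{i}|X_j|$, which counts the periods of all $(n,s)$-necklaces and hence equals $\ell_{n,s}$.

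For the acyclic statement, I would combine \textbf{(b)} with the discussion following Theorems~\ref{thm:upper_bound_length} and~\ref{thm:upper_bound_path}: a maximum-length acyclic $(n,s)$-sequence arises from a maximum-length cyclic one by adjoining one length-$s$ deleted path at an end. Since the concatenation $X_1X_2\cdots$ begins with \emph{zeros} followed by a \emph{one}, prepending $0^{s}1^{n-s-1}$ introduces the window $0^{s}1^{n-s-1}0$ and, with the tail $0^{s}$ appended at the other end, adjoins exactly the $s$ windows of the deleted path on the (non-$(n,s)$) necklace $[0^{s+1}1^{n-s-1}]$ that are missing from the cyclic sequence; the resulting word $0^{s}1^{n-s-1}X_1X_2\cdots X_i0^{s}$ is then an acyclic $(n,s)$-sequence of length $\ell_{n,s}+s+n-1$, attaining the bound of Theorem~\ref{thm:upper_bound_path} and so of maximum length. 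The hard part of the whole argument is \textbf{(b)}: verifying that discarding all non-$(n,s)$ Lyndon words and re-closing the cycle still yields a de Bruijn-type RLL sequence --- in particular that the new wrap-around window and the windows across block boundaries neither contain a run of more than $s$ \emph{zeros} nor coincide with an earlier window --- which is where the pattern-avoiding necklace-concatenation machinery of~\cite{GaSa18,SWW16} and the analysis of~\cite{RSW92} must be invoked with their hypotheses matched to the present setting.
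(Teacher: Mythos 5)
Your proposal follows essentially the same route as the paper: it reads Algorithm~\ref{algo:lexicographic} as the FKM/Ruskey--Savage--Wang successor rule started at $0^{s+1}1^{n-s-1}$, uses Lemma~\ref{lem:parts_lyndon} and Corollaries~\ref{cor:last_Lyndon} and~\ref{cor:all_after} to conclude that exactly the Lyndon words of the $(n,s)$-necklaces are produced, delegates the correctness of the necklace-concatenation step to the FKM-type results of~\cite{GaSa18,RSW92,SWW16}, and invokes Theorems~\ref{thm:upper_bound_length} and~\ref{thm:upper_bound_path} for maximality of the cyclic and acyclic outputs --- which is precisely the paper's argument. The only slight imprecision is the bookkeeping of the $s$ extra windows in the acyclic case: the $s$ words of the deleted path on $[0^{s+1}1^{n-s-1}]$ all arise from the prepended $0^s1^{n-s-1}$, while the appended $0^s$ merely restores the wrap-around windows $1^{n-k}0^k$ lost by cutting the cycle open; your net count $\ell_{n,s}+s+n-1$ is nevertheless correct.
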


The correctness of the algorithm is shown in the same way as it is proved in the FKM algorithm.
To find the position of a given word $v$ of length $n$ in the concatenation of the Lyndon words we have to apply
the algorithm proposed by Kociumaka, Radoszewski, and Rytter~\cite{KRR16} which finds the position of $v$ in the concatenation
of all the Lyndon word of order~$n$. After the position of $v$ was found we have to subtract the position
of the last entry in the word $0^{s+1} 1^{n-s-1}$ since our sequence starts after this last entry.
It is worthwhile and more efficient in the long run
to compute this position in advance and save it as it will be used in every application of the algorithm.
The complexity of the decoding algorithm is the same as in~\cite{KRR16} since the ordering of the Lyndon words
is the same with the exception that we do not start from the first one but with the one after $0^{s+1} 1^{n-s-1}$.
Finally, the average complexity of computing the next bit is constant using the techniques and
algorithms as was explained first in~\cite{RuSa00}.

\section{Conclusion and Discussion}
\label{sec:conclude}

Motivated by an application for space-free quantum key distribution a system based on a simple run-length
limited sequences in the de Bruijn graphs is proposed. The maximum length of such sequences is shown to
be associated with the number of constrained necklaces. An efficient algorithm to generate
a large set of such sequences is proposed and some enumerations related to the length of a maximum length
sequences are discussed. Known algorithms to generate one such sequence efficiently are mentioned.
Generalizations for larger alphabet or for sequences in which each window of length $n$ has a constrained weight
can be easily derived from our exposition.

\end{document}